\newcommand{\setuptheorem}[5]
{
    \newaliascnt{#1}{thm}
    \newtheorem{#1}[#1]{#3}
    \aliascntresetthe{#1}
    \crefname{#1}{#2}{#4}
    \Crefname{#1}{#3}{#5}
}
\theoremstyle{plain}
    \newtheorem{thm}{Theorem}[section]
    \crefname{thm}{theorem}{theorems}
    \Crefname{thm}{Theorem}{Theorems}
\theoremstyle{definition}
\numberwithin{equation}{section}
\def\bA{{\mathbb A}}
\def\bH{{\mathbb H}}
\def\bO{{\mathbb O}}
\def\C{{\mathbb C}}
\renewcommand{\H}{{\mathbb H}}
\def\N{{\mathbb N}}
\def\Q{{\mathbb Q}}
\def\Z{{\mathbb Z}}
\def\R{{\mathbb R}}
\def\fW{{\mathfrak W}}
\def\fs{{\mathfrak s}}
\def\ft{{\mathfrak t}}
\def\cA{{\mathcal A}}
\def\cB{{\mathcal B}}
\def\cC{{\mathcal C}}
\def\cD{{\mathcal D}}
\def\cE{{\mathcal E}}
\def\cG{{\mathcal G}}
\def\cH{{\mathcal H}}
\def\cL{{\mathcal L}}
\def\cM{{\mathcal M}}
\def\cS{{\mathcal S}}
\newcommand{\ii}{\mathrm{i}}
\newcommand{\ee}{\mathrm{e}}
\def\Aut{{\rm Aut}}
\def\id{{\rm id}}
\def\SL{{\rm SL}}
\def\Tr{{\rm Tr}}
\newcommand{\Vol}{\mathrm{Vol}}
\newcommand{\Iso}{\mathrm{Iso}}
\newcommand{\Mor}{\mathrm{Mor}}
\newcommand{\MorTwo}{\Mor^{\smash[t]{(2)}}}
\newcommand{\Obj}{\mathrm{Obj}}
\newcommand{\Diff}{\mathrm{Diff}}
\crefname{enumi}{}{}
\Crefname{enumi}{}{}
\title{Spin foams and noncommutative geometry}
\author{Domenic Denicola}
\author{Matilde Marcolli}
\author{Ahmad Zainy al-Yasry}
\email{domenic\@@domenicdenicola.com}
\email{matilde\@@caltech.edu}
\email{azainy79\@@yahoo.com}
\address{MSC \#240 \\
California Institute of Technology \\ 1200 E.\ California Blvd.\\
Pasadena, CA 91126, USA}
\address{Division of Physics, Mathematics, and Astronomy\\
California Institute of Technology \\ 1200 E.\ California Blvd.\\
Pasadena, CA 91125, USA}
\address{Abdus Salam International Center for Theoretical Physics \\
Strada Costiera 11 \\ Trieste \\ I-34151 \\ Italy}
\begin{document}
\maketitle

\begin{abstract}
    We extend the formalism of embedded spin networks and spin foams to include topological data that encode the underlying three-manifold or four-manifold as a branched cover. These data are expressed as monodromies, in a way similar to the encoding of the gravitational field via holonomies. We then describe convolution algebras of spin networks and spin foams, based on the different ways in which the same topology can be realized as a branched covering via covering moves, and on possible composition operations on spin foams. We illustrate the case of the groupoid algebra of the equivalence relation determined by covering moves and a 2-semigroupoid algebra arising from a 2-category of spin foams with composition operations corresponding to a fibered product of the branched coverings and the gluing of cobordisms. The spin foam amplitudes then give rise to dynamical flows on these algebras, and the existence of low temperature equilibrium states of Gibbs form is related to questions on the existence of topological invariants of embedded graphs and embedded two-complexes with given properties.  We end by sketching a possible approach to combining the spin network and spin foam formalism with matter within the framework of spectral triples in noncommutative geometry.
\end{abstract}

\tableofcontents

\section{Introduction}

In this paper we extend the usual formalism of spin networks and spin foams, widely used in the context of loop quantum gravity, to encode the additional information on the topology of the ambient smooth three- or four-manifold, in the form of branched covering data. In this way, the usual data of \textit{holonomies} of connections, which provide a discretization of the gravitational field in LQG models, is combined here with additional data of \textit{monodromies}, which encode in a similar way the smooth topology.

The lack of uniqueness in the description of three-manifolds and four-manifolds as branched coverings determines an equivalence relation on the set of our topologically enriched spin networks and foams, which is induced by the covering moves between branch loci and monodromy representations. One can associate to this equivalence relation a groupoid algebra. We also consider other algebras of functions on the space of all possible spin networks and foams with topological data, and in particular a 2-semigroupoid algebra coming from a 2-category which encodes both the usual compositions of spin foams as cobordisms between spin networks and a fibered product operation that parallels the KK-theory product used in D-brane models.

The algebras obtained in this way are associative noncommutative algebras, which can be thought of as noncommutative spaces parameterizing the collection of all topologically enriched spin foams and networks with the operations of composition, fibered product, or covering moves. The lack of covering-move invariance of the spin foam amplitudes, and of other operators such as the quantized area operator on spin networks, generates a dynamical flow on these algebras, which in turn can be used to construct equilibrium states. The extremal low temperature states can be seen as a way to dynamically select certain spin foam geometries out of the parameterizing space. This approach builds on an analogy with the algebras of $\Q$-lattices up to commensurability arising in arithmetic noncommutative geometry.

\section{Spin networks and foams enriched with topological data}

The formalism of spin foams and spin networks was developed to provide a background-independent framework in loop quantum gravity. In the case of spin networks, the gravitational field on a three-dimensional manifold $M$ is encoded by a graph $\Gamma$ embedded in $M$, with representation-theoretic data attached to the edges and vertices giving the holonomies of the gravitational connection. Similarly, spin foams represent the evolution of the gravitational field along a cobordism $W$ between three-manifolds $M$ and $M'$; they are given by the geometric realization of a simplicial two-complex $\Sigma$ embedded in $W$, with similar representation-theoretic data attached to the faces and edges.

In this way, spin networks give the quantum states of three-dimensional geometries, while the spin foams give cobordisms between spin networks and are used to define partition functions and transition amplitudes as ``sums over histories'' \cite{BaezSpinFoam,RovelliQG}. The background independence then arises from the fact that, in this setting, one does not have to fix a background metric on $M$ or on $W$, and represent the gravitational field as perturbations of this fixed metric. One does, however, fix the background topology of $M$ or on $W$.

In this section we describe a way to extend the formalism of spin networks (respectively spin foams) to include ``topological data" as additional labeling on a graph embedded in the three-sphere $S^3$ (respectively, a two-complex
embedded in $S^3\times [0,1]$). These additional labels encode the topology of a three-manifold $M$ (respectively four-manifold $W$ with boundary). This is achieved by representing three-manifolds and four-manifolds as branched coverings, respectively of the three-sphere or the four-sphere, branched along an embedded graph or an embedded two-complex. This means that we need only consider graphs embedded in the three-sphere $S^3$ and two-complexes embedded in $S^3\times [0,1]$; from these we obtain both the topological information needed to construct $M$ or $W$, as well as the metric information---all from the labeling attached to faces, edges, and vertices of these simplicial data.

In essence, while the metric information is encoded in the spin network and spin foam formalism by holonomies, the topological information will be encoded similarly by monodromies.

\subsection{Spin networks}

A spin network is the mathematical representation of the quantum state of the gravitational field on a compact smooth three-dimensional manifold $M$, thought of as a three-dimensional hypersurface in a four-dimensional spacetime. In other words, spin networks should be thought of as ``quantum three-geometries." In this way, spin networks form a basis for the kinematical state space of loop quantum gravity.

Mathematically, spin networks are directed embedded graphs with edges labeled by representations of a compact Lie group, and vertices labeled by intertwiners of the adjacent edge representations. We recall the definition of spin networks given in \cite{BaezSpinFoam}.

\begin{defn} \label{def:spinNetwork}
    A spin network over a compact Lie group $G$ and embedded in a three-manifold $M$ is a triple $(\Gamma, \rho, \iota)$ consisting of:
    \begin{enumerate}
        \item   an oriented graph (one-complex) $\Gamma \subset M$;
        \item   a labeling $\rho$ of each edge $e$ of $\Gamma$ by a representation $\rho_e$ of $G$;
        \item   a labeling $\iota$ of each vertex $v$ of $\Gamma$ by an intertwiner
                \begin{equation*}
                    \iota_v : \rho_{e_1} \otimes \cdots \otimes \rho_{e_n} \to \rho_{e_1'} \otimes \cdots \otimes \rho_{e_m'},
                \end{equation*}
                where $e_1, \ldots, e_n$ are the edges incoming to $v$ and $e_1', \ldots, e_m'$ are the edges outgoing from $v$.
    \end{enumerate}
\end{defn}

\begin{center}
    \begin{figure}
        \includegraphics{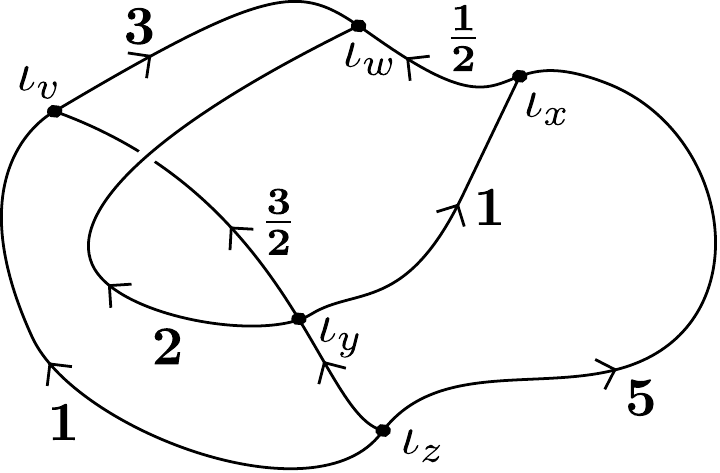}
        \caption{A sample spin network, labeled with representations of $\mathrm{SU}(2)$}
    \end{figure}
\end{center}

Notice that, in the loop quantum gravity literature, often one imposes the additional condition that the representations $\rho_e$ are irreducible. Here we take the less restrictive variant used in the physics literature, and we do not require irreducibility. Another point where there are different variants in the loop quantum gravity literature is whether the graphs $\Gamma$ should be regarded as combinatorial objects or as spatial graphs embedded in a specified three-manifold. We adopt here the convention of regarding graphs as embedded. This will be crucial in order to introduce the additional monodromy data that determine the underlying three-manifold topology, as we discuss at length in the following sections.

We can intuitively connect this to a picture of a quantum three-geometry as follows \cite{RovelliQG}. We think of such a geometry as a set of ``grains of space,'' some of which are adjacent to others. Then, each vertex of the spin network corresponds to a grain of space, while each edge corresponds to the surfaces separating two adjacent grains. The quantum state is then characterized by the quantum numbers given in our collections $\rho$ and $\iota$: in fact, the label $\iota_v$ determines the quantum number of a grain's volume, while the label $\rho_e$ determines the quantum number of the area of a separating surface. The area operator and its role in our results is is discussed further in \cref{sec:dynamicsFromQuantizedAreaOperators}.

The Hilbert space of quantum states associated to spin networks is spanned by the \textit{ambient isotopy classes} of embedded graphs $\Gamma \subset M$, with labels of edges and vertices as above; see \cite{RovelliQG} for more details. In fact, for embedded graphs, as well as for knots and links, being related by ambient isotopy is the same as being related by an orientation-preserving piecewise-linear homeomorphic change of coordinates in the ambient $S^3$, so that is the natural equivalence relation one wants to impose in the quantum gravity setting.

As in the case of knots and links, ambient isotopy is also equivalent to all planar projections being related by a generalization of Reidemeister moves: see for instance Theorem 2.1 of \cite{KauffmanInvariants}, or Theorems 1.3 and 1.7 of \cite{YetterKnottedGraphs}. These Reidemeister moves for graphs are listed in \cref{fig:reidemeisterMoves}, and discussed further in \cref{sec:wirtingerRelations}.

\begin{figure}
    \centering
    \includegraphics[scale=0.65]{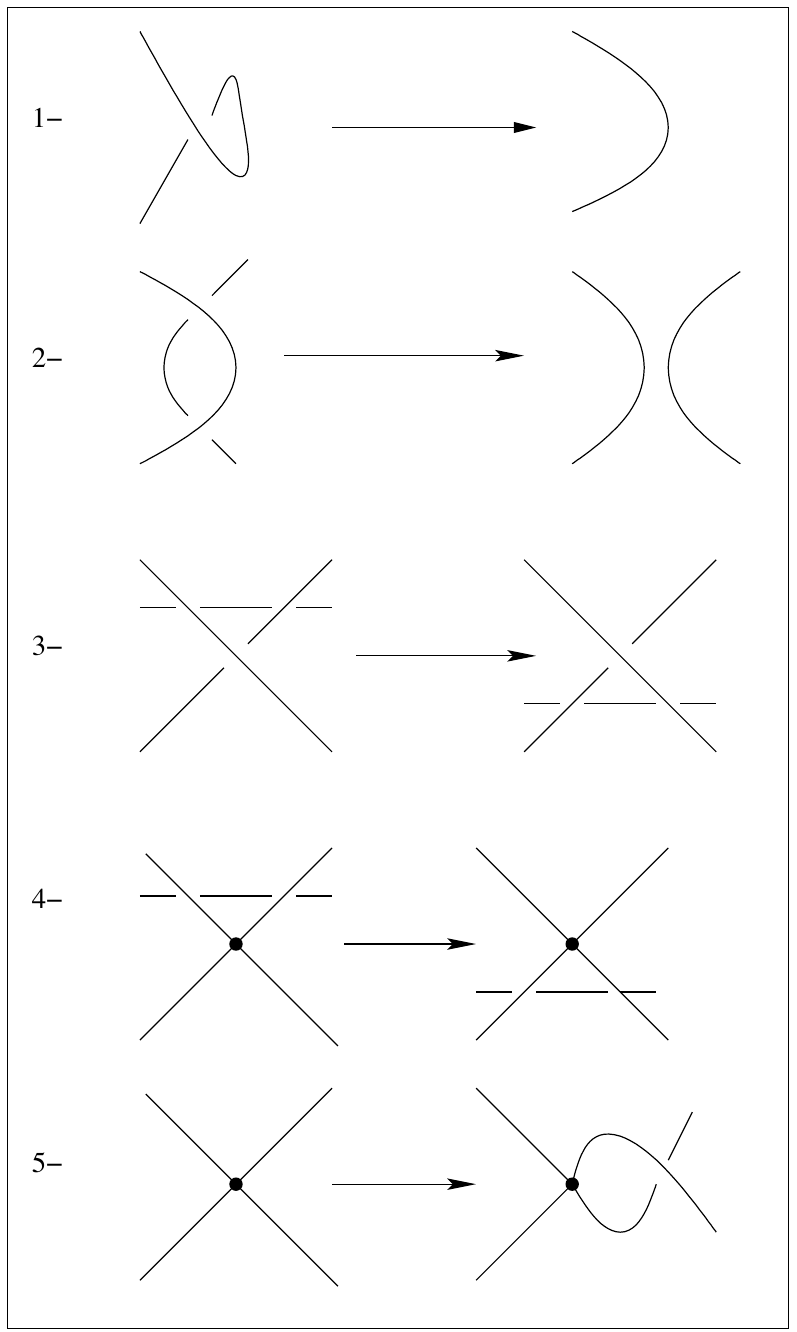}
    \caption{Reidemeister moves for embedded graphs}
    \label{fig:reidemeisterMoves}
\end{figure}

\subsection{Three-manifolds and cobordisms as branched coverings}

It is a well-known topological fact \cite{AlexanderRiemannSpaces} that every compact oriented three-manifold $M$ can be described as a branched covering of $S^3$. By a \textit{branched covering}, we mean a submersion $p : M \to S^3$ such that the restriction $p_\rvert : M \setminus p^{-1}(\Gamma) \to S^3 \setminus \Gamma$ to the complement of an embedded graph $\Gamma \subseteq S^3$ is an ordinary covering of some degree $n$. We call $\Gamma$ the \textit{branch locus}, and we say that $p$ is an order-$n$ covering of $M$, branched along $\Gamma$. The result is formulated in the piecewise-linear (PL) category, but in dimension three this is equivalent to working in the smooth category, so we will simply talk about smooth three-manifolds.

Furthermore, since the branched covering map $p$ is completely determined (up to PL homeomorphism) by $p_\rvert$, it in fact suffices to specify $\Gamma$ along with a representation $\sigma : \pi_1(S^3 \smallsetminus \Gamma) \to S_n$ of the fundamental group of the complement of the branch locus in the group of permutations on $n$ elements. The intuition is that $\Gamma$ describes where $p$ fails to be an ordinary covering, and the representation $\sigma$ completely determines how to stitch together the different branches of the covering over the branch locus \cite{AlexanderRiemannSpaces,FoxKnotTheory}. The use of the fundamental group of the complement implies that the embedding of $\Gamma$ in $S^3$ is important: that is, it is not only the structure of $\Gamma$ as an abstract combinatorial graph that matters.

Notably, the correspondence between three-manifolds and branched coverings of the three-sphere is not bijective: in general for a given manifold there will be multiple pairs $(\Gamma, \sigma)$ that realize it as a branched covering. The conditions for two pairs $(\Gamma, \sigma)$ and $(\Gamma', \sigma')$ of branching loci and fundamental group representations to give rise to the same three-manifold (up to PL homeomorphism) are discussed further in \cref{sec:coveringMoves}.

As an example of this lack of uniqueness phenomenon, the Poincar\'e homology sphere $M$ can be viewed as a fivefold covering of $S^3$ branched along the trefoil $K_{2,3}$, or as a threefold covering branched along the $(2, 5)$ torus knot $K_{2,5}$, among others \cite{PraSoIntroToInvariants}.

There is a refinement of the branched covering description of three-manifolds, the Hilden--Montesinos theorem \cite{ThreeFoldBranchedCoverings,ThreeManifoldsAsBranchedCovers}, which shows that one can in fact always realize three-manifolds as threefold branched coverings of $S^3$, branched along a knot. Although this result is much stronger, for our quantum gravity applications it is preferable to work with the weaker statement given above, with branch loci that are embedded graphs and arbitrary order of covering. We comment further in \cref{sec:bicategories} on the case of coverings branched along knots or links.

\smallskip

Another topological result we will be using substantially in the following is the analogous branched covering description for four-manifolds, according to which all compact PL four-manifolds can be realized as branched coverings of the four-sphere $S^4$, branched along an embedded simplicial two-complex. In this case also one has a stronger result \cite{FourManifoldsAsBranchedCovers}, according to which one can always realize the PL four-manifolds as fourfold coverings of $S^4$ branched along an embedded surface. However, as in the case of three-manifolds, we work with the more general and weaker statement that allows for coverings of arbitrary order, along an embedded two-complex as the branch locus. Using the fact that in dimension four there is again no substantial difference between the PL and the smooth category, in the following we work directly with smooth four-manifolds.

We also recall here the notion of \textit{branched cover cobordism} between three-manifolds realized as branched coverings of $S^3$. We use the same terminology as in \cite{CoveringsCorrespondencesNCG}.

First consider three-manifolds $M_0$ and $M_1$, each realized as a branched covering $p_i : M_i \to S^3$ branched along embedded graphs $\Gamma_i \subset S^3$. Then a branched cover cobordism is a smooth four-manifold $W$ with boundary $\partial W = M_0 \cup \bar{M}_1$ and with a branched covering map $q: W \to S^3 \times [0,1]$, branched along an embedded two-complex $\Sigma \subset S^3 \times [0,1]$, with the property that $\partial \Sigma = \Gamma_0 \cup \bar{\Gamma}_1$ and the restrictions of the covering map $q$ to $S^3 \times \{ 0 \}$ and $S^3 \times \{1 \}$ agree with the covering maps $p_0$ and $p_1$, respectively.

For the purpose of the 2-category construction we present later in the paper, we also consider cobordisms $W$ that are realized in two different ways as branched cover cobordisms between three-manifolds, as described in \cite{CoveringsCorrespondencesNCG}. This version will be used as a form of geometric correspondences providing the 2-morphisms of our 2-category.

To illustrate this latter concept, let $M_0$ and $M_1$ be two closed smooth three-manifolds, each realized in two ways as a branched covering of $S^3$ via respective branchings over embedded graphs $\Gamma_i$ and $\Gamma'_i$ for $i \in \{0,1\}$. We represent this with the notation
\begin{equation*}
    \Gamma_i \subset S^3 \stackrel{p_i}{\leftarrow} M_i \stackrel{p_i'}{\rightarrow} S^3 \supset \Gamma_i'.
\end{equation*}
A branched cover cobordism $W$ between $M_0$ and $M_1$ is a smooth four-dimensional manifold with boundary $\partial W = M_0 \cup \bar{M}_1$, which is realized in two ways as a branched cover cobordism of $S^3\times[0,1]$ branched along two-complexes $\Sigma$ and $\Sigma'$ embedded in $S^3\times[0,1]$, with respective boundaries $\partial \Sigma = \Sigma \cap (S^3\times \{0,1\}) = \Gamma_0 \cup \bar\Gamma_1$ and $\partial \Sigma' = \Gamma_0' \cup \bar\Gamma_1'$. We represent this with a similar notation,
\begin{equation*}
    \Sigma \subset S^3 \times [0,1] \stackrel{q}{\leftarrow} W \stackrel{q'}{\rightarrow} S^3\times [0,1] \supset \Sigma'.
\end{equation*}
The covering maps $q$ and $q'$ have the property that their restrictions to $S^3\times \{0\}$ agree with the maps $p_0$ and $p_0'$, respectively, while their restrictions to $S^3\times \{1\}$ agree with the maps $p_1$ and $p_1'$.

\subsection{Wirtinger relations} \label{sec:wirtingerRelations}

As we mentioned in the previous section, a branched covering $p: M\to S^3$, branched along an embedded graph $\Gamma$, is completely determined by a group representation $\sigma: \pi_1(S^3\smallsetminus \Gamma) \to S_n$. Here we pause to note that the fundamental group $\pi_1(S^3\smallsetminus \Gamma)$ of the complement of an embedded graph in the three-sphere has an explicit presentation, which is very similar to the usual Wirtinger presentation for the fundamental group of knot complements.

The advantage of describing the representation $\sigma$ in terms of an explicit presentation is that it will allow us to encode the data of the branched covering $p: M\to S^3$ completely in terms of labels attached to edges of a planar projection of the graph, with relations at vertices and crossings in the planar diagram. Furthermore, if two embedded graphs $\Gamma$ and $\Gamma'$ in $S^3$ are ambient-isotopic, then any given planar diagrams $D(\Gamma)$ and $D(\Gamma')$ differ by a finite sequence of moves that generalize to graphs the usual Reidemeister moves for knots and links, as shown in \cref{fig:reidemeisterMoves}. (For more details, see Theorem 2.1 of \cite{KauffmanInvariants} and Theorem 1.7 of \cite{YetterKnottedGraphs}.)

For the rest of this discussion, we use the following terminology. An \textit{undercrossing} in a planar diagram is the line that passes underneath at a crossing, while an overcrossing is the line that passes above the other.  Thus an arc of a planar diagram $D(\Gamma)$ is either an edge of the graph $\Gamma$, if the edge does not appear as an undercrossing in the diagram, or a half edge of $\Gamma$, when the corresponding edge is an undercrossing. Thus, an edge of $\Gamma$ always corresponds to a number $N+1$ of arcs in $D(\Gamma)$, where $N$ is the number of undercrossings that edge exhibits in the planar diagram $D(\Gamma)$. The following result is well known, but we recall it here for convenience.

\begin{lem} \label{Wirtinger}
    Let $(\Gamma, \sigma)$ be a pair of an embedded graph $\Gamma \subset S^3$ and a representation $\sigma: \pi_1(S^3\smallsetminus \Gamma) \to S_n$. Let $D(\Gamma)$ be a choice of a planar diagram for $\Gamma$. Then the representation $\sigma$ is determined by a set of permutations $\sigma_i \in S_n$ assigned to the arcs of $D(\Gamma)$, which satisfy the Wirtinger relations at crossings:
    \begin{align}
        \sigma_j & = \sigma_k \sigma_i \sigma_k^{-1}, \label{eq:wirtingerRel1}\\
        \sigma_j & = \sigma_k^{-1} \sigma_i \sigma_k. \label{eq:wirtingerRel2}
    \end{align}
    Here $\sigma_k$ is the permutation assigned to the arc of the overcrossing edge, while $\sigma_i$ and $\sigma_j$ are the permutations assigned to the two arcs of the undercrossing edge, with \cref{eq:wirtingerRel1} holding for negatively-oriented crossings and \cref{eq:wirtingerRel2} holding for positively-oriented crossings. The permutations also satisfy an additional relation at vertices, namely
    \begin{equation}
        \prod_i \sigma_i \prod_j \sigma_j^{-1} = 1, \label{eq:vertexRelation}
    \end{equation}
    where $\sigma_i$ are the permutations associated to the incoming arcs at the given vertex and $\sigma_j$ are the permutations of the outgoing arcs at the same vertex.
\end{lem}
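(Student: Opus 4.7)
The plan is to obtain the presentation by adapting the standard Wirtinger argument for link complements to the setting of an embedded graph, where the only genuinely new ingredient is the relation at each vertex. Since a representation of a finitely presented group into $S_n$ is nothing but an assignment of permutations to a set of generators subject to the defining relations, it suffices to produce a presentation of $\pi_1(S^3\smallsetminus \Gamma)$ with one generator $\sigma_i$ for each arc of the planar diagram $D(\Gamma)$, with one relation \eqref{eq:wirtingerRel1} or \eqref{eq:wirtingerRel2} per crossing and one relation \eqref{eq:vertexRelation} per vertex. I will build this presentation by cutting $S^3$ into pieces along a generic height function adapted to $D(\Gamma)$ and applying van Kampen, in exact parallel with the classical derivation for knots and links.

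First I would fix the generators. Choose a basepoint far above the plane of the diagram, and to each arc $a_i$ of $D(\Gamma)$ associate the meridian loop $\sigma_i$ which drops from the basepoint, circles $a_i$ once in the positive sense consistent with the orientation of the underlying edge, and returns. For a graph with no crossings and no vertices (a disjoint union of unknotted arcs/circles) these meridians freely generate $\pi_1$ of the complement; for any more general embedding, the relations will come from the places where arcs meet, namely crossings and vertices.

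Next I would derive the crossing relations. In a neighborhood of a crossing the graph looks identical to the corresponding piece of a classical link diagram, so the same deformation argument that produces the Wirtinger relations for knots applies verbatim: sliding the meridian of one undercrossing arc across the overcrossing strand conjugates it by the overcrossing meridian, yielding $\sigma_j=\sigma_k\sigma_i\sigma_k^{-1}$ or $\sigma_j=\sigma_k^{-1}\sigma_i\sigma_k$ depending on the sign of the crossing. This part is purely local and requires nothing beyond the standard argument in, e.g., Kauffman's or Fox's treatment of Wirtinger presentations.

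The main obstacle, and the only step that goes beyond the classical case, is the relation at a vertex $v$. To handle it I would take a small closed ball $B_v\subset S^3$ centered at $v$ that meets $\Gamma$ exactly in the germs of the incident edges. Then $B_v\smallsetminus \Gamma$ deformation retracts onto its boundary-minus-punctures, which is a $2$-sphere with one puncture for each edge incident to $v$. The fundamental group of a $k$-punctured $2$-sphere is well known to be generated by small loops $m_1,\ldots,m_k$ around the punctures, subject to the single relation that their product in cyclic order is trivial. These loops are precisely (up to conjugation by a path to the global basepoint) the meridians $\sigma_i$ of the arcs of $D(\Gamma)$ that terminate at $v$; the orientation convention identifies an incoming edge's meridian with $\sigma_i$ and an outgoing edge's meridian with $\sigma_j^{-1}$, giving $\prod_i \sigma_i \prod_j \sigma_j^{-1}=1$ as in \eqref{eq:vertexRelation}. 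Gluing this local presentation to the complement of $B_v$ via van Kampen, and doing so simultaneously at all vertices and all crossings, produces the claimed global presentation and completes the proof; the representation $\sigma$ is then determined by, and can be prescribed by, any assignment of permutations to the arcs satisfying \eqref{eq:wirtingerRel1}--\eqref{eq:vertexRelation}.
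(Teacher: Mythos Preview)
Your argument is correct and follows the standard route to the Wirtinger presentation for graph complements; in particular the treatment of the vertex relation via the punctured-sphere boundary of a small ball around $v$ is the right way to see \eqref{eq:vertexRelation}.

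The paper does not actually prove the presentation: it simply asserts that $\pi_1(S^3\smallsetminus\Gamma)$ has a Wirtinger presentation with one generator per strand and relations \eqref{eq:wirtingerRel1}--\eqref{eq:vertexRelation}, citing a standard reference, and then observes that a homomorphism to $S_n$ is the same as a choice of $\sigma_i$ per generator satisfying those relations. Your proposal supplies exactly the argument the paper defers to the literature, so the two are not in conflict; yours is a genuine derivation (via meridian generators, local crossing analysis, the punctured-sphere picture at vertices, and van Kampen gluing), while the paper's is a one-line appeal to a known result. The advantage of your write-up is that it makes the origin of the vertex relation transparent; the paper's advantage is brevity, since the lemma is standard and not the focus of the work.
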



The statement is an immediate consequence of the Wirtinger presentation of the group  $\pi_1(S^3\smallsetminus \Gamma)$, in terms of monodromies along loops around the edges, compatible with the graph orientation \cite{ClassTopAndCombinGroupTheory}. In fact, the group $\pi_1(S^3 \smallsetminus \Gamma)$ has a presentation with generators $\gamma_i$ for each strand of a planar diagram $D(\Gamma)$, modulo relations of the form \cref{eq:wirtingerRel1,eq:wirtingerRel2,eq:vertexRelation}, so that the above defines a group homomorphism $\sigma: \pi_1(S^3\smallsetminus \Gamma) \to S_n$, with $\sigma(\gamma_i) = \sigma_i$.

Notably, a similar presentation can be given for the fundamental groups $\pi_1(S^{k+2}\smallsetminus \Sigma^k)$ of the complement of a $k$-dimensional cycle embedded in the $(k+2)$-dimensional sphere, in terms of diagrams $D(\Sigma^k)$ associated to projections on an $(k+1)$-dimensional Euclidean space \cite{AlexPolynomialsAsIsoInvariants}.

\subsection{Covering moves for embedded graphs}\label{sec:coveringMoves}

As promised above, we now describe the conditions necessary for two labeled branch loci $(\Gamma, \sigma)$ and $(\Gamma', \sigma')$ to represent the same manifold (up to PL homeomorphism, or equivalently, up to diffeomorphism). The result is proved and discussed in greater detail in \cite{CoveringMovesAndKirbyCalculus}. We recall it here, since we will use it in what follows to construct our class of spin networks with topological data.

A \textit{covering move} is a non-isotopic modification of a labeled branch locus $(\Gamma, \sigma)$ representing a branched covering $p : M \to S^3$ that results in a new labeled branch locus $(\Gamma', \sigma')$ giving a different branched covering description $p' : M \to S^3$ of the same manifold $M$.

Up to stabilization, one can always assume that the two coverings have the same degree. In fact, if two labeled branch loci are of different degrees, then we can easily modify them to be of the same degree by adding trivial links to the appropriate branch locus, labeled by a transposition. Adding a sheet to the covering and an unlinked circle to the branch locus, labeled by a transposition exchanging the extra sheet with one of the others, then gives rise to a new manifold that is homeomorphic to a connected sum of the previous one with the base. The base being $S^3$, the resulting manifold is still homeomorphic to the original branched covering. Thus, we can pass to coverings of equal order by stabilization.

With this in mind, we see that to show that two arbitrary-degree labeled branch loci describe the same manifold, it suffices to have a complete and explicit description of all possible covering moves. This is done in \cite{CoveringMovesAndKirbyCalculus} in terms of the planar diagrams of the previous section, i.e.\ by describing a pair $(\Gamma, \sigma)$ by its planar diagram $D(\Gamma)$ and a set of permutations $\{\sigma_i\}$ assigned to the arcs of the diagram and satisfying the Wirtinger relations \cref{eq:wirtingerRel1,eq:wirtingerRel2,eq:vertexRelation}. All the moves are local, in the sense that they only depend on the intersection of the planar diagram with a ball, and can be performed without affecting the rest of the diagram outside of this cell. Then, as shown in \cite{CoveringMovesAndKirbyCalculus}, the following four covering moves suffice, in the sense that any two diagrams giving rise to the same three-manifold can be related by a finite sequence of these moves (after stabilization).

\begin{figure}[H]
    \centering
    \subfloat[][At vertices]
               {\label{fig:topCoveringMoves1}
                \begin{tabular}{m{0.16\textwidth} m{0.05\textwidth} m{0.16\textwidth}}
                  \includegraphics[width=0.15\textwidth]{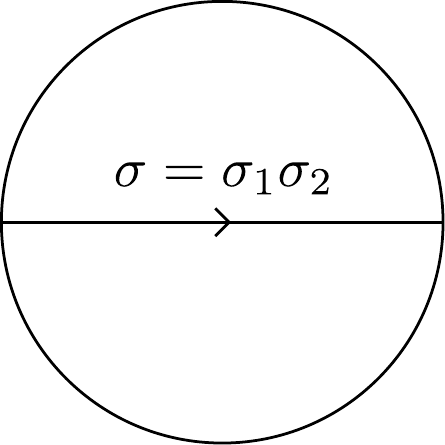} & \center{$\stackrel{V_1}{\leftrightsquigarrow}$} & \includegraphics[width=0.15\textwidth]{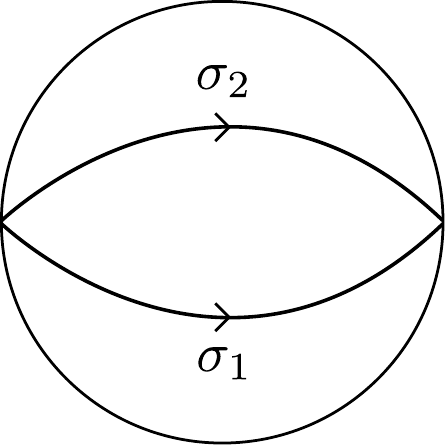} \\
                  \includegraphics[width=0.15\textwidth]{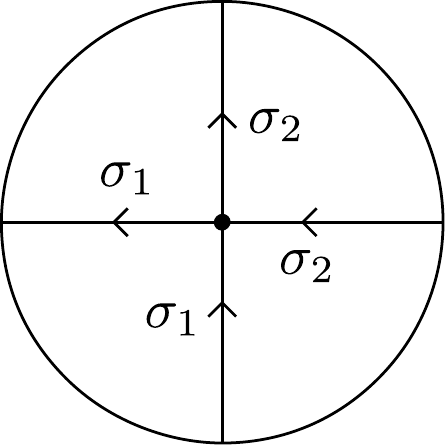} & \center{$\stackrel{V_2}{\leftrightsquigarrow}$} & \includegraphics[width=0.15\textwidth]{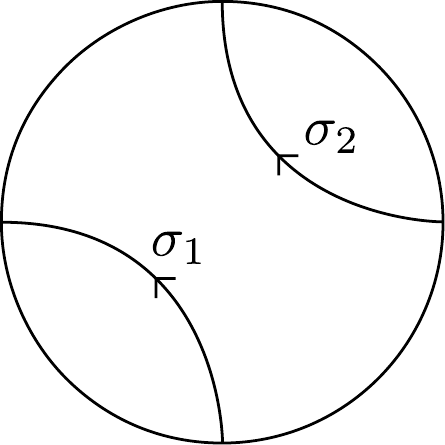}
                \end{tabular}
                }
    \hspace{0.05\textwidth}
    \subfloat[][At crossings]
               {\label{fig:topCoveringMoves2}
                \begin{tabular}{m{0.16\textwidth} m{0.05\textwidth} m{0.16\textwidth}}
                  \includegraphics[width=0.15\textwidth]{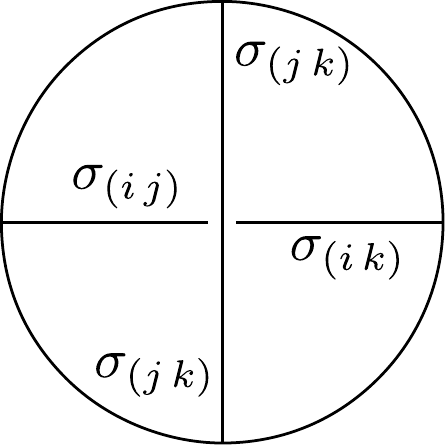} & \center{$\stackrel{C_1}{\leftrightsquigarrow}$} & \includegraphics[width=0.15\textwidth]{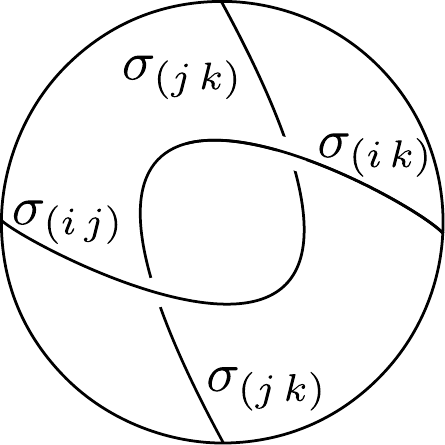} \\
                  \includegraphics[width=0.15\textwidth]{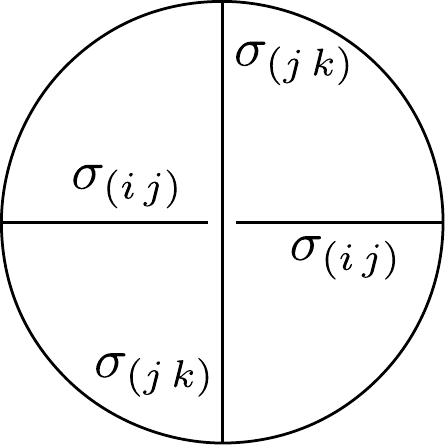} & \center{$\stackrel{C_2}{\leftrightsquigarrow}$} & \includegraphics[width=0.15\textwidth]{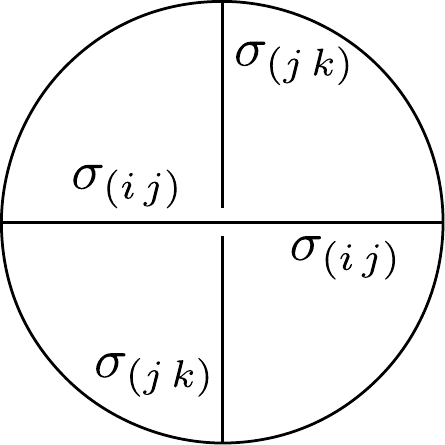}
                \end{tabular}
                }
    \caption{Topological covering moves}
\end{figure}

The covering moves of are of two types: the first two moves, those of \cref{fig:topCoveringMoves1}, involve a change in edges attached to vertices. In the move $V_1$, an edge decorated by a permutation $\sigma \in S_n$ is replaces by two parallel edges decorated by permutations $\sigma_1$ and $\sigma_2$ such that $\sigma = \sigma_1 \sigma_2$. This alters the valence of the vertices adjacent to the given edge, but the Wirtinger relations at these vertices are clearly preserved. The move $V_2$ removes a valence-four vertex with two incoming edges labeled by permutations $\sigma_1$ and $\sigma_2$ and two outgoing edges with the same labels, replacing it with only two disjoint edges with labels $\sigma_1$ and $\sigma_2$. This changes the number of vertices by one and the number of edges by two while preserving the Wirtinger relations.

The two other moves, reproduced in \cref{fig:topCoveringMoves2}, affect crossings in the planar diagram of the embedded graph, without altering the number of edges and vertices. In the move $C_1$, one considers a crossing where the two undercrossing strands are decorated by permutations $\sigma_{(i\,j)}$ and $\sigma_{(i\,k)}$ that exchange two elements, while the overcrossing is decorated by a similar permutation $\sigma_{(j\,k)}$. The move replaces the crossing by a twist, and the Wirtinger relations \cref{eq:wirtingerRel1,eq:wirtingerRel2} continue to hold since before the move we have $\sigma_{(i\,j)} = \smash{\sigma_{(j\,k)}^{-1}} \sigma_{(i\,k)} \sigma_{(j\,k)}$, while after the move we have $\sigma_{(j\,k)} = \sigma_{(i\,j)} \sigma_{(i\,k)} \smash{\sigma_{(i\,j)}^{-1}}$ and $\sigma_{(i\,j)} = \smash{\sigma_{(i\,k)}^{-1}} \sigma_{(j\,k)} \sigma_{(i\,k)}$. The move $C_2$ applies to the case of a crossing where the strands are labeled by permutations that exchange two elements and the two strands of the undercrossing carry the same label $\sigma_{(i\,j)}$, with the overcrossing strand labeled by $\sigma_{(k\,l)}$. In this case one can change the sign of the crossing. The Wirtinger relation before the move is $\sigma_{(i\,j)} = \smash{\sigma_{(k\,l)}^{-1}} \sigma_{(i\,j)} \sigma_{(k\,l)}$, while after the move it is $\sigma_{(k\,l)} = \sigma_{(i\,j)} \sigma_{(k\,l)} \smash{\sigma_{(i\,j)}^{-1}}$, and these are clearly equivalent.

\subsection{Topologically enriched spin networks}

We are now ready to define topologically enriched spin networks, or \textit{topspin networks}:

\begin{defn}\label{def:topspinNetwork}
    A topspin network over a compact Lie group $G$ is a tuple $(\Gamma, \rho, \iota, \sigma)$ of data consisting of
    \begin{enumerate}
        \item  a spin network  $(\Gamma, \rho, \iota)$ in the sense of \cref{def:spinNetwork}, with $\Gamma \subset S^3$;
        \item  a representation $\sigma : \pi_1(S^3 \smallsetminus \Gamma) \to S_n$.
    \end{enumerate}
\end{defn}

The key insight here is that the branch locus $(\Gamma, \sigma)$ corresponds, as explained above, to a unique manifold $M$. Thus, encapsulated in our topspin network is not only the geometric data given by the labelings $\rho$ and $\iota$, but also the topological data given by the representation $\sigma$. This is in stark contrast to the usual picture of (embedded) spin networks, wherein $\Gamma$ is a graph embedded into a specific manifold $M$ with fixed topology, and thus only gives geometrical data on the gravitational field.

The data of a topspin network can equivalently be described in terms of planar projections $D(\Gamma)$ of the embedded graph $\Gamma \subset S^3$, decorated with labels $\rho_i$ and $\sigma_i$ on the strands of the diagram, where the $\rho_i$ are the representations of $G$ associated to the corresponding edges of the graph $\Gamma$ and $\sigma_i \in S_n$ are permutations satisfying the Wirtinger relations \cref{eq:wirtingerRel1,eq:wirtingerRel2,eq:vertexRelation}. The vertices of the diagram $D(\Gamma)$ are decorated with the intertwiners $\iota_v$.

\smallskip

It is customary in the setting of loop quantum gravity to take graphs $\Gamma$ that arise from triangulations of three-manifolds and form directed systems associated to families of nested graphs, such as barycentric subdivisions of a triangulation. One often describes quantum operators in terms of direct limits over such families \cite{RovelliQG}.

In the setting of topspin networks described here, one can start from a graph $\Gamma$ which is, for instance, a triangulation of the three-sphere $S^3$, which contains as a subgraph $\Gamma' \subset \Gamma$ the branch locus of a branched covering describing a three-manifold $M$. By pullback, one obtains from it a triangulation of the three-manifold $M$. Viewed as a topspin network, the diagrams $D(\Gamma)$ will carry nontrivial topological labels $\sigma_i$ on the strands belonging to the subdiagram $D(\Gamma')$ while the strands in the rest of the diagram are decorated with $\sigma_i=1$; all the strands can carry nontrivial $\rho_i$. In the case of a barycentric subdivision, the new edges belonging to the same edge before subdivision maintain the same labels $\sigma_i$, while the new edges in the barycentric subdivision that do not come from edges of the previous graph carry trivial topological labels. Thus, all the arguments usually carried out in terms of direct limits and nested subgraphs can be adapted to the case of topspin networks without change. Working with data of graphs $\Gamma$ containing the branch locus $\Gamma'$ as a subgraph is also very natural in terms of the fibered product composition we describe in \cref{fiberSec}, based on the construction of \cite{CoveringsCorrespondencesNCG}.

\smallskip

Also in loop quantum gravity, one considers a Hilbert space generated by the spin networks \cite{RovelliQG}, where the embedded graphs are taken up to ambient isotopy, or equivalently up to a PL change of coordinates in the ambient $S^3$. (We deal here only with spin networks embedded in $S^3$, since as explained above this is sufficiently general after we add the topological data $\sigma$.) This can be extended to a Hilbert space $\cH$ of topspin networks by requiring that two topspin networks are orthogonal if they are not describing the same three-manifold, that is, if they are not related (after stabilization) by covering moves, and by defining the inner product $\langle \psi, \psi'\rangle$ of  topspin networks that are equivalent under covering moves to be the usual inner product of the underlying spin networks obtained by forgetting the presence of the additional topological data $\sigma$, $\sigma'$.

\subsection{Geometric covering moves}\label{sec:consistencyConditions}

The addition of topological data $\sigma$ to spin networks, on top of the preexisting geometric labelings $\rho$ and $\iota$, necessitates that we ensure these two types of data are compatible. The essential issue is the previously-discussed fact that a given three-manifold $M$ can have multiple descriptions as a labeled branch locus, say as both $(\Gamma, \sigma)$ and $(\Gamma', \sigma')$. If $(\Gamma, \rho, \iota, \sigma)$ is a topspin network representing a certain geometric configuration over $M$, then we would like to be able to say when another topspin network $(\Gamma', \rho', \iota', \sigma')$, also corresponding to the manifold $M$ via the data $(\Gamma', \sigma')$, represents the same geometric configuration. That is, what are the conditions relating $\rho'$ and $\iota'$ to $\rho$ and $\iota$ that ensure geometric equivalence?

Since $(\Gamma, \sigma)$ and $(\Gamma', \sigma')$ represent the same manifold, they can be related by the covering moves of \cref{sec:coveringMoves}. These covering moves can be interpreted as answering the question ``if one makes a local change to the graph, how must the topological labeling change?" In this section, our task is to answer a very similar question, namely, ``if one makes a local change to the graph, how must the geometric labeling change?" Phrased this way, it is easy to see that we simply need to give an account of what happens to the geometric labelings under the same covering moves as before.

The first thing to note is that, since in \cref{def:topspinNetwork} we do not demand that the edge-labels are given by \textit{irreducible} representations, there is a certain type of trivial equivalence between different geometric labelings that emerges. To see this, consider the following two very simple spin networks:
\begin{figure}[H]
    \begin{tabular}{>{\raggedleft}m{0.4\textwidth} m{0.05\textwidth} m{0.4\textwidth}}
        \includegraphics[width=0.25\textwidth]{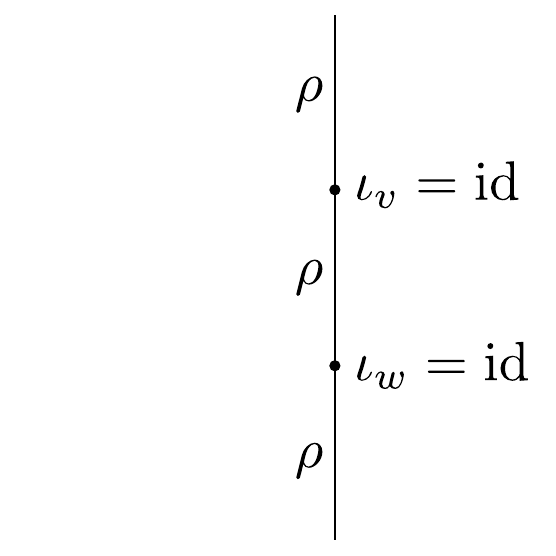} & \center{\text{vs.}} & \includegraphics[width=0.25\textwidth]{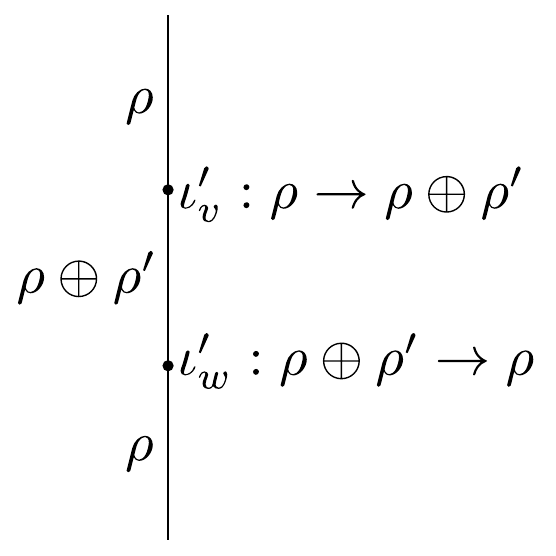}
    \end{tabular}
\end{figure}
Now, if we demand that $\iota'_w \circ \iota'_v = \iota_w \circ \iota_v = \id$, then these networks are essentially the same. The addition of $\rho'$ to the middle edge, and the compensating changes to the intertwiners, added no geometric meaning: passing through the two vertices in sequence still gives the same result. In the same way, if we have two topologically-equivalent topspin networks related by their covering moves, we could always obtain geometric equivalence by falling back on the demand that the composition of the relevant intertwiners is the same. For example, consider the move $V_1$, which, with geometric labels associated to it, is as follows:
\begin{figure}[H]
    \begin{tabular}{>{\raggedleft}m{0.4\textwidth} m{0.05\textwidth} m{0.4\textwidth}}
        \includegraphics[width=0.25\textwidth]{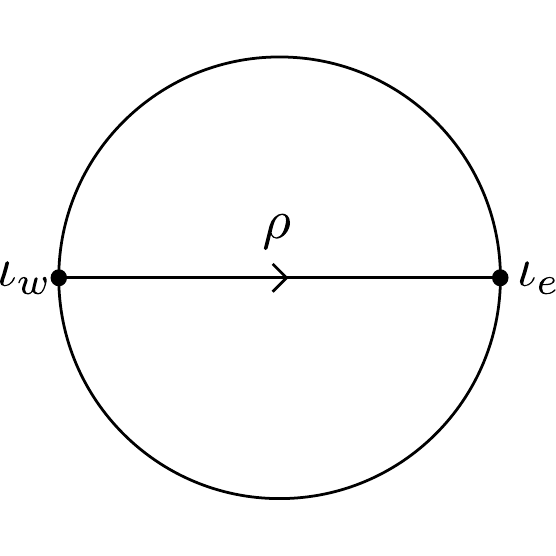} & \center{$\stackrel{V_1}{\leftrightsquigarrow}$} & \includegraphics[width=0.25\textwidth]{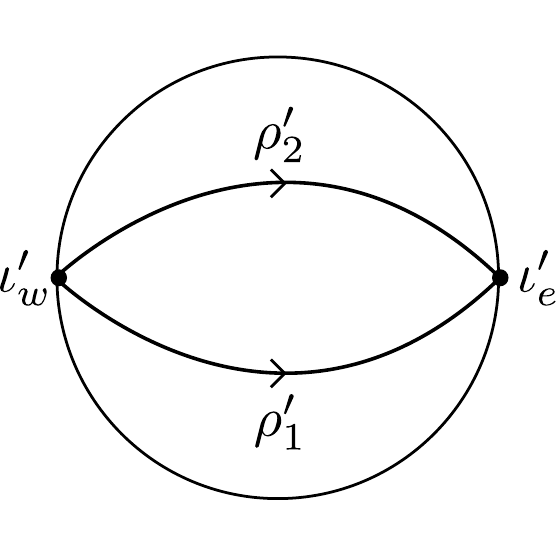}
    \end{tabular}
\end{figure}
The simplest condition we could impose on $\rho'$ and $\iota'$ is only that the following diagram commutes:
\begin{diagram}
\xi_w & \rTo^{\iota_w}        & \rho                    & \rTo^{\iota_e}        & \xi_e \\
      & \rdTo(2,1)<{\iota'_w} & \rho'_1 \otimes \rho'_2 & \ruTo(2,1)>{\iota'_e} &
\end{diagram}
where we denote by $\xi_w$ and $\xi_e$, respectively, the tensor product of the representations associated to all the edges outside of the cell that connect to the vertices $w$ and $e$.

This, however, is not very enlightening. We can gain more insight into the geometric structure under covering moves by discarding this trivial equivalence from our consideration. This can be done either by working only with irreducible representations, or equivalently by requiring that the vertex intertwiners stay the same under covering moves.

\medskip

With this in mind, consider again the move $V_1$ depicted above. If we demand that the intertwiners stay the same on each side, i.e.\ that $\iota'_w = \iota_w$ and $\iota'_e = \iota_e$, then we obtain the requirement that $\rho = \rho'_1 \otimes \rho'_2$ if the inputs and outputs of each diagram's intertwiners are to match. In other words, we can present the geometric covering move as follows:
\begin{figure}[H]
    \begin{tabular}{>{\raggedleft}m{0.4\textwidth} m{0.05\textwidth} m{0.4\textwidth}}
        \includegraphics[width=0.25\textwidth]{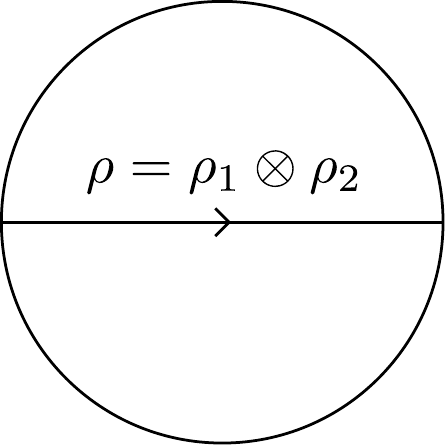} & \center{$\stackrel{V_1}{\leftrightsquigarrow}$} & \includegraphics[width=0.25\textwidth]{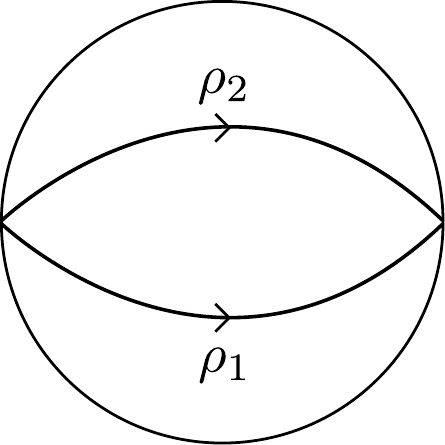}
    \end{tabular}
\end{figure}
The move $V_1$ thus allows us to split a tensor-product edge label into two edges labeled by the factors.

\smallskip

In the same way, we can analyze the move $V_2$, which is given with as-yet-unrelated geometric labels as follows:
\begin{figure}[H]
    \begin{tabular}{>{\raggedleft}m{0.4\textwidth} m{0.05\textwidth} m{0.4\textwidth}}
        \includegraphics[width=0.25\textwidth]{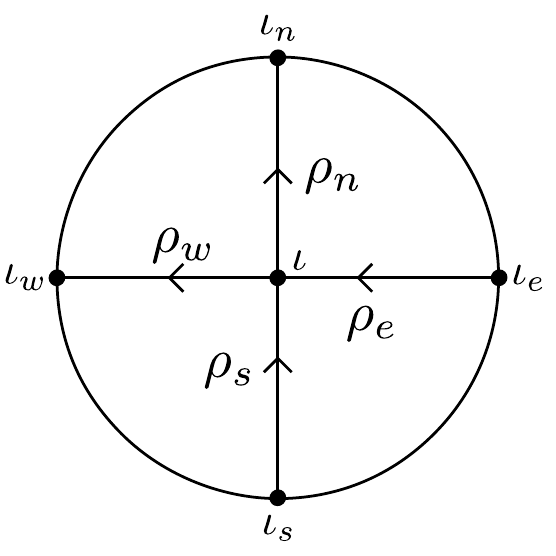} & \center{$\stackrel{V_2}{\leftrightsquigarrow}$} & \includegraphics[width=0.25\textwidth]{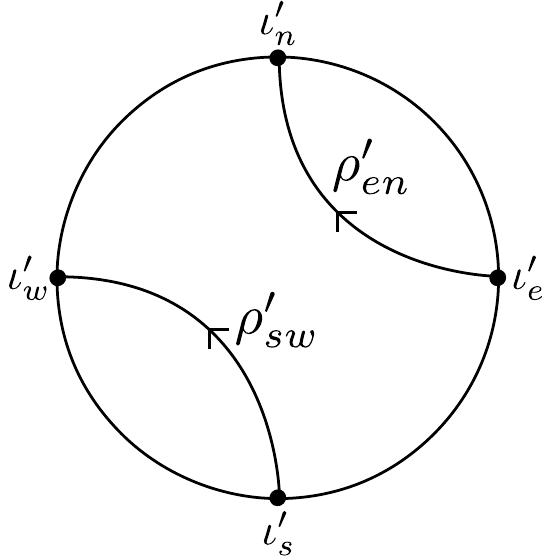}
    \end{tabular}
\end{figure}
By demanding equality of the intertwiners, we see that since $\iota_w$ expects $\rho_w$ as input, we must have that $\rho'_{sw}$, the input to $\iota'_w$, is equal to $\rho_w$. In the same way, considering the output of $\iota_s$ versus $\iota'_s$, we obtain $\rho'_{sw} = \rho_s$. Similar arguments for the other two intertwiners and their counterparts give us $\rho'_{en} = \rho_e = \rho_n$. Finally, under equality of primed and unprimed intertwiners, requiring that the composition of intertwiners along the possible paths is the same, i.e.\ that $\iota'_w \circ \iota'_s = \iota_w \circ \iota \circ \iota_s$ and $\iota'_n \circ \iota'_e = \iota_n \circ \iota \circ \iota_e$, gives us the requirement that $\iota = \id$. The resulting geometric covering move is then given by
\begin{figure}[H]
    \begin{tabular}{>{\raggedleft}m{0.4\textwidth} m{0.05\textwidth} m{0.4\textwidth}}
        \includegraphics[width=0.25\textwidth]{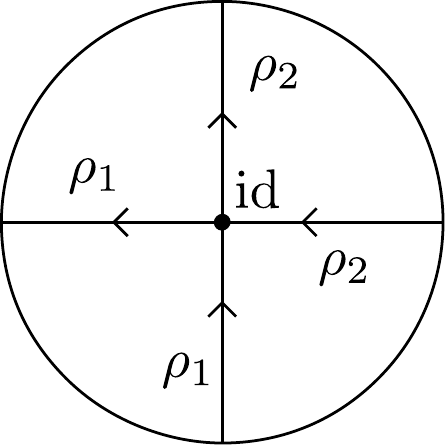} & \center{$\stackrel{V_2}{\leftrightsquigarrow}$} & \includegraphics[width=0.25\textwidth]{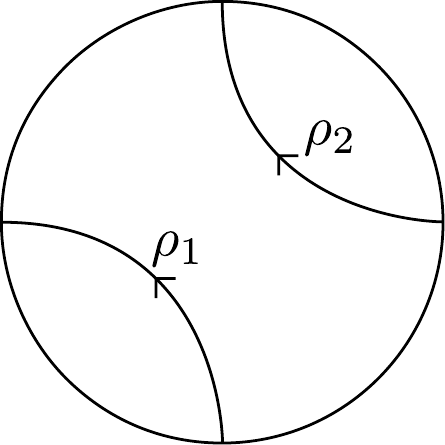}
    \end{tabular}
\end{figure}
Thus the move $V_2$ simply says that we can remove an identity intertwiner.

\medskip

The covering moves that change relations at crossings are simpler to analyze. For topological labels, these moves are interesting since we associate a topological label to each strand. In contrast, geometric labels are associated to each \textit{edge}: thus if a given edge has two strands, both strands must have the same geometric labels in any given diagram. In this way, the covering moves that modify crossings will have a much less interesting effect on the geometric labels than they do on the topological ones.

\smallskip

More concretely, let us consider move $C_1$, as usual letting the geometric labels on each side be entirely general for now:
\begin{figure}[H]
    \begin{tabular}{>{\raggedleft}m{0.4\textwidth} m{0.05\textwidth} m{0.4\textwidth}}
        \includegraphics[width=0.25\textwidth]{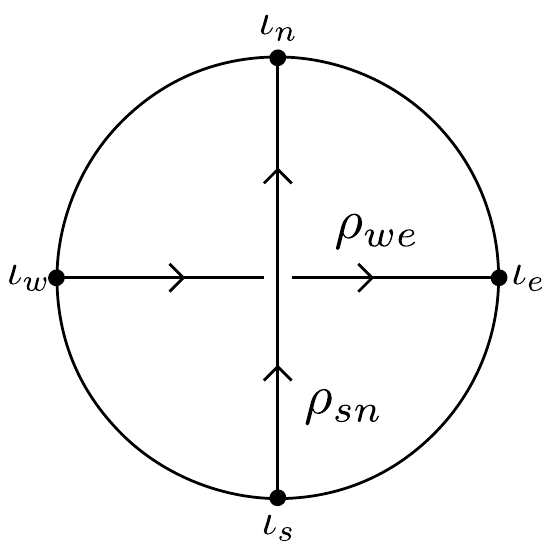} & \center{$\stackrel{C_1}{\leftrightsquigarrow}$} & \includegraphics[width=0.25\textwidth]{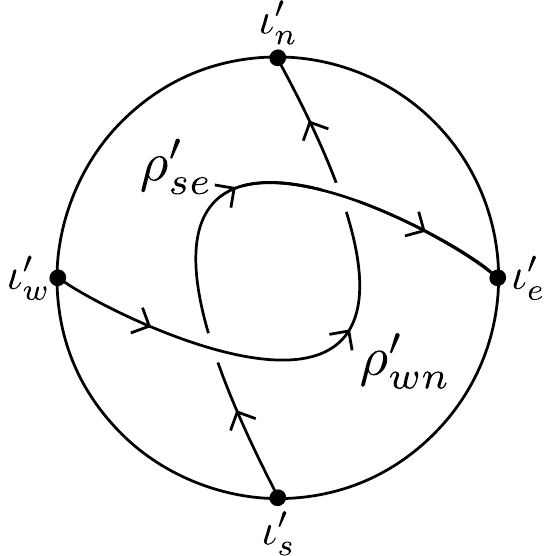}
    \end{tabular}
\end{figure}
Using similar reasoning to the above cases, it is clear that under equality of primed and unprimed intertwiners, such a move will only be permissible if $\rho_{we} = \rho_{se} = \rho'_{se} = \rho'_{wn}$. The geometric covering move is thus given by
\begin{figure}[H]
    \begin{tabular}{>{\raggedleft}m{0.4\textwidth} m{0.05\textwidth} m{0.4\textwidth}}
        \includegraphics[width=0.25\textwidth]{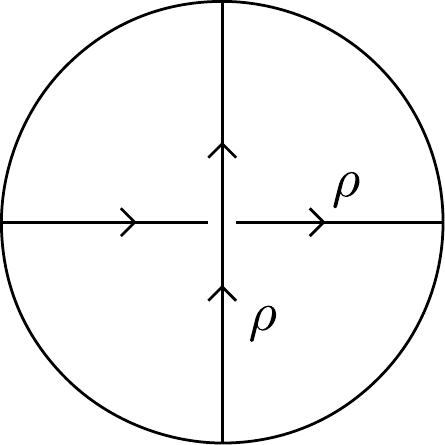} & \center{$\stackrel{C_1}{\leftrightsquigarrow}$} & \includegraphics[width=0.25\textwidth]{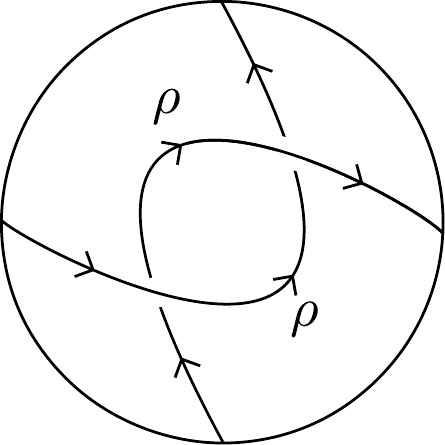}
    \end{tabular}
\end{figure}
In this way, $C_1$ tells us that if two edges labeled by the same representation cross, we can ``redirect'' them by sending each along the other's former path.

\smallskip

The final move, $C_2$, actually has no geometric content: it only changes the sign of the crossing, and as discussed above, only the topological labels are allowed to depend on such features. There are thus no relations between or restrictions on the edge labels. For completeness, this is shown diagrammatically as
\begin{figure}[H]
    \begin{tabular}{>{\raggedleft}m{0.4\textwidth} m{0.05\textwidth} m{0.4\textwidth}}
        \includegraphics[width=0.25\textwidth]{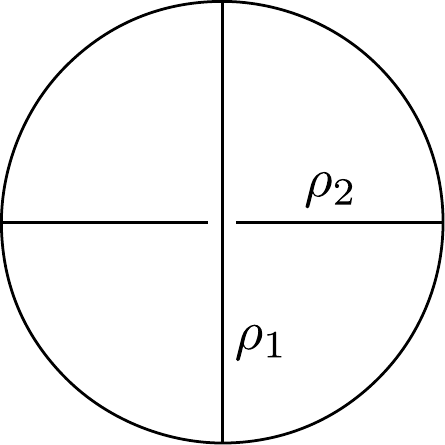} & \center{$\stackrel{C_1}{\leftrightsquigarrow}$} & \includegraphics[width=0.25\textwidth]{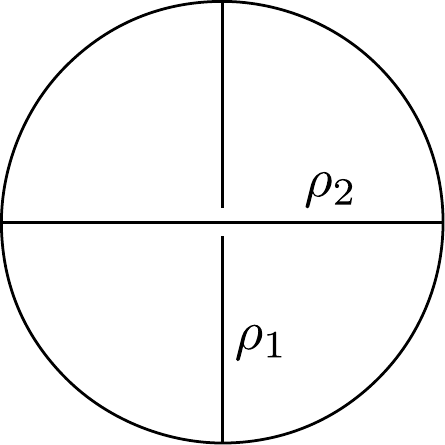}
    \end{tabular}
\end{figure}

\medskip

With these in hand, we now have a complete catalogue of the ways in which two different labeled graphs can represent the same topspin network. That is, equivalence holds if $(\Gamma, \rho, \iota, \sigma)$ can be converted into $(\Gamma', \rho', \iota, \sigma')$ by a finite sequence of covering moves, such that the relations between the topological labels before and after each move are as described in \cref{sec:coveringMoves}, and those between the geometric labels are as described in this section. Further ``trivial'' modifications, in the sense discussed in the opening to this section, can be made to the geometric data as well---but these are not terribly interesting, being on the same level as e.g.\ adding a valence-two vertex with identity intertwiner somewhere in the graph.

\begin{figure}[H]
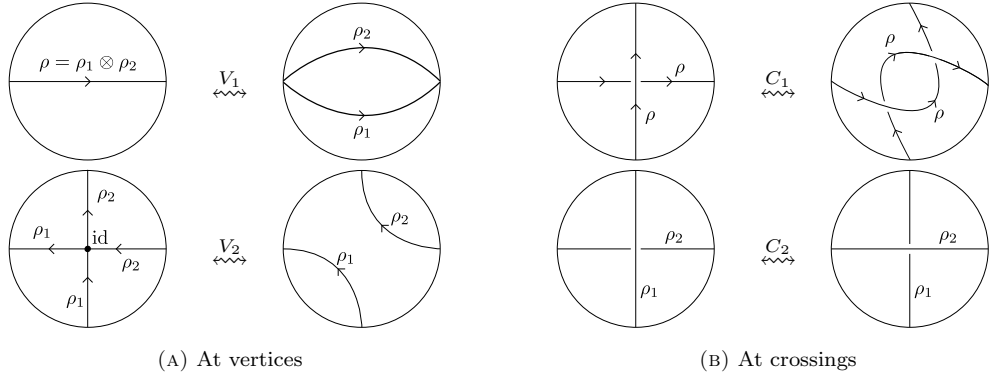

    \centering
    \subfloat[][At vertices]
               {\label{fig:coveringMoves1}
                \begin{tabular}{m{0.16\textwidth} m{0.05\textwidth} m{0.16\textwidth}}
                  \includegraphics[width=0.15\textwidth]{CompatibilityV11Final} & \center{$\stackrel{V_1}{\leftrightsquigarrow}$} & \includegraphics[width=0.15\textwidth]{CompatibilityV12Final} \\
                  \includegraphics[width=0.15\textwidth]{CompatibilityV21Final} & \center{$\stackrel{V_2}{\leftrightsquigarrow}$} & \includegraphics[width=0.15\textwidth]{CompatibilityV22Final} \\
                \end{tabular}
                }
    \hspace{0.05\textwidth}
    \subfloat[][At crossings]
               {\label{fig:coveringMoves2}
                \begin{tabular}{m{0.16\textwidth} m{0.05\textwidth} m{0.16\textwidth}}
                  \includegraphics[width=0.15\textwidth]{CompatibilityC11Final} & \center{$\stackrel{C_1}{\leftrightsquigarrow}$} & \includegraphics[width=0.15\textwidth]{CompatibilityC12Final} \\
                  \includegraphics[width=0.15\textwidth]{CompatibilityC21Final} & \center{$\stackrel{C_2}{\leftrightsquigarrow}$} & \includegraphics[width=0.15\textwidth]{CompatibilityC22Final} \\
                \end{tabular}
                }
    \caption{Geometric covering moves}
\end{figure}

\subsection{Spin foams and topological enrichment}

Spin foams are the natural extensions of spin networks constructed by ``going up a dimension''; that is, we have a finite collection of polygons attached to each other along their edges, with faces labeled by representations and edges labeled by intertwining operators. The construction is such that taking a ``slice'' of the spin foam at a given ``time'' will then produce a spin network. With this in mind, it is intuitively clear how spin foams encode the \textit{dynamics} of loop quantum gravity, while spin networks give the kinematics.

In the following, by a two-complex we mean a simplicial complex with two-dimensional faces, one-dimensional edges, and zero-dimensional vertices, endowed with the usual boundary operator $\partial$, which assigns to a face the formal sum of its boundary edges with positive or negative sign according to whether the induced orientation from the face agrees or not with the orientation of the corresponding edge. An embedded two-complex is a PL embedding of the geometric realization of the two-complex in a PL four-manifold. For a face $f$ in such a two-complex, we let $\partial(f)$ denote the boundary of the face, in the sense described above. We also denote an edge's negative by $\bar{e}$, instead of by $-e$.

Spin foams are then defined by the following data, where we again follow \cite{BaezSpinFoam}.

\begin{defn}
    \label{def:spinFoam}
    Suppose that $\psi = (\Gamma, \rho, \iota)$ and $\psi' = (\Gamma', \rho', \iota')$ are spin networks over $G$, with the graphs $\Gamma$ and $\Gamma'$ embedded in three-manifolds $M$ and $M'$, respectively. A spin foam $\Psi : \psi \to \psi'$ embedded in a cobordism $W$ with $\partial W = M \cup \bar M'$ is then a triple $\Psi = (\Sigma, \tilde\rho, \tilde\iota)$
    consisting of:
    \begin{enumerate}
        \item   an oriented two-complex $\Sigma \subseteq W$, such that $\Gamma \cup \bar \Gamma'$ borders $\Sigma$: that is, $\partial \Sigma = \Gamma \cup \bar \Gamma'$, and there exist cylinder neighborhoods $M_\epsilon = M \times [0,\epsilon)$ and $\bar M'_\epsilon = \bar M' \times (-\epsilon,0]$ in $W$ such that $\Sigma \cap M_\epsilon = \Gamma \times [0,\epsilon)$ and $\Sigma \cap \bar M'_\epsilon = \bar \Gamma' \times (-\epsilon,0]$;
        \item   a labeling $\tilde\rho$ of each face $f$ of $\Sigma$ by a representation $\tilde\rho_f$ of $G$;
        \item   a labeling $\tilde\iota$ of each edge $e$ of $\Sigma$ that does not lie in $\Gamma$ or $\Gamma'$ by an intertwiner
                \begin{equation*}
                    \tilde\iota_e : \bigotimes_{f : e \in \partial(f)} \tilde\rho_f \to \bigotimes_{f' : \bar e \in \partial(f')}
                    \tilde\rho_{f'}
                \end{equation*}
    \end{enumerate}
    with the additional consistency conditions
    \begin{enumerate}
        \item   for any edge $e$ in $\Gamma$, letting $f_e$ be the face in $\Sigma\cap M_\epsilon$ bordered by $e$, we must have that $\tilde\rho_{f_e} = (\rho_e)^*$ if $e\in \partial(f_e)$, or $\tilde\rho_{f_e} = \rho_e$ if $\bar e \in \partial(f_e)$;
        \item   for any vertex $v$ of $\Gamma$, letting $e_v$ be the edge in $\Sigma\cap M_\epsilon$ adjacent to $v$, we must have that $\tilde\iota_{e_v} = \iota_v$ after appropriate dualizations to account for the different orientations of faces and edges as above;
        \item   dual conditions must hold for edges and vertices from $\Gamma'$ and faces and edges in $\Sigma \cap M_\epsilon'$, i.e.\ we would have $\tilde\rho_{f_e} = \rho_e$ if $e \in \partial(f_e)$ and $\tilde\rho_{f_e} = (\rho_e)^*$ if $\bar e \in \partial(f_e)$, and likewise for vertices.
    \end{enumerate}
\end{defn}

Thus, spin foams are cobordisms between spin networks, with compatible labeling of the edges, vertices, and faces. They represent quantized four-dimensional geometries inside a fixed smooth four-manifold $W$.

Notice that, in the context of spin foams, there are some variants regarding what notion of cobordisms between embedded graphs one considers. A more detailed discussion of cobordisms of embedded graphs is given in \cref{cobordSec} below.

When working with topspin networks, one can correspondingly modify the notion of spin foams to provide cobordisms compatible with the topological data, in such a way that one no longer has to specify the four-manifold $W$ with $\partial W = M\cup \bar M'$ in advance: one can work with two-complexes $\Sigma$ embedded in the trivial cobordism $S^3 \times [0,1]$ and the topology of $W$ is encoded as part of the data of a \textit{topspin foam}, just as the  three-manifolds $M$ and $M'$ are encoded in their respective topspin networks.

In the following we say that a three-manifold $M$ corresponds to $(\Gamma,\sigma)$ if it is the branched cover of $S^3$ determined by the representation $\sigma: \pi_1(S^3\smallsetminus \Gamma) \to S_n$, and similarly for a four-manifold $W$ corresponding to data $(\Sigma,\tilde\sigma)$.

\begin{defn}
    \label{def:topspinFoam}
    Suppose that $\psi=(\Gamma,\rho,\iota,\sigma)$ and $\psi'=(\Gamma',\rho',\iota',\sigma')$ are topspin networks over $G$, with topological labels in the same permutation group $S_n$ and with $\Gamma,\Gamma'\subset S^3$. A topspin foam $\Psi: \psi \to \psi'$ over $G$ is a tuple $\Psi=(\Sigma,\tilde\rho,\tilde\iota,\tilde\sigma)$ of data consisting of
    \begin{enumerate}
        \item a spin foam $(\Sigma, \tilde\rho, \tilde\iota)$ between $\psi$ and $\psi'$ in the sense of \cref{def:spinFoam}, with $\Sigma\subset S^3\times [0,1]$;
        \item a representation $\tilde\sigma: \pi_1((S^3\times[0,1])\smallsetminus \Sigma) \to S_n$, such that the manifold $W$ corresponding to the labeled branch locus $(\Sigma, \tilde\sigma)$ is a branched cover cobordism between $M$ and $M'$, where the three-manifolds $M$ and $M'$ are those corresponding to the labeled branch loci $(\Gamma, \sigma)$ and $(\Gamma', \sigma')$ respectively.
    \end{enumerate}
\end{defn}

We recall briefly the analog of the Wirtinger relations for a two-complex $\Sigma$ embedded in $S^4$ (or in $S^3 \times [0,1]$ as in our case below), which are slight variants on the ones given in \cite{AlexPolynomialsAsIsoInvariants}.  One considers a diagram $D(\Sigma)$ obtained from a general projection of the embedded two-complex $\Sigma \subset S^4$ (which one can assume is in fact embedded in $\R^4 = S^4 \setminus \{ \infty \}$) onto a three-dimensional linear subspace $L \subset \R^4$.  The complement of the projection of $\Sigma$ in three-dimensional space $L$ is a union of connected components $L_0\cup \cdots \cup L_N$. One chooses then a point $x_k$ in each component $L_k$ and two points $p$ and $q$ in the two components of $\R^4 \setminus L$.  Denote by $f_i$ the strands of two-dimensional faces in the planar diagram $D(\Sigma)$. Each face of $\Sigma$ corresponds to one or more strands $f_i$ in $D(\Sigma)$ according to the number of undercrossings the projection of the face acquires in the diagram. For each strand $f_\alpha$ one considers a closed curve $\gamma_i = \ell_{p x_i} \cup \ell_{x_i q} \cup \ell_{q x_{i+1}}\cup \ell_{x_{i+1} p}$, where $\ell_{xy}$ denotes a smooth embedded arc in $\R^4$ with endpoints $x$ and $y$, and $x_i$ and $x_{i+1}$ denote the chosen points in the two components of $L \setminus D(\Sigma)$ with the face $f_i$ in their common boundary. We assume that the arcs do not intersect the segments connecting points of $\Sigma$ to their projection on $L$. The curves $\gamma_i$ generate $\pi_1(\R^4 \smallsetminus \Sigma)$. To give then the analog of the Wirtinger relations, one considers overcrossings and undercrossings of faces in the planar diagram $D(\Sigma)$ as well as edges that lie at the intersection of faces. Each crossing gives a relation analogous to the relations for crossings in the case of embedded graphs, namely the generators of the fundamental group associated to the four spatial regions of $L \setminus D(\Sigma)$ surrounding the crossing satisfy
\begin{equation}
    \label{Wirt2D1}
    \gamma_j = \gamma_k \gamma_i \gamma_k^{-1}
    \qquad \text{or}\qquad
    \gamma_j = \gamma_k^{-1} \gamma_i \gamma_k,
\end{equation}
depending on the relative orientations, while at a common edge we have
\begin{equation}
    \label{Wirt2D2}
    \prod \gamma_i ^{\pm 1} =1,
\end{equation}
where the product is over all the regions of $L \setminus D(\Sigma)$ surrounding the edge and bounded by the faces adjacent to the edge, and the $\pm1$ power depends on the relative orientation of the boundaries of the faces and the edge. It then follows that a representation $\tilde \sigma: \pi_1((S^3\times[0,1]) \smallsetminus \Sigma)\to S_n$ is determined by assigning permutations $\tilde\sigma_i$ to the strands in a diagram $D(\Sigma)$ corresponding to the faces of $\Sigma$, with relations as above at the crossings and at edges in the common boundary of different faces. In fact, each permutation $\tilde\sigma_f$ represents the monodromy around the curve $\gamma_i$ linking the strands $f_i$ in the diagram.

\smallskip

The specification of a topspin network, according to \cref{def:topspinFoam}, can then be rephrased in terms of such three-dimensional diagrams in the following way:

\begin{lem}\label{topspinfoamD}
    A topspin foam $\Psi = (\Sigma,\tilde\rho,\tilde\iota,\tilde\sigma)$ over $G$ can be specified by a diagram $D(\Sigma)$ obtained from a three-dimensional projection as above, decorated by
    \begin{enumerate}
        \item   assigning to each one-dimensional strand $e_i$ of $D(\Sigma)$ the same intertwiner $\tilde\iota_e$ assigned to the edge $e$;
        \item   assigning to each two-dimensional strand $f_\alpha$ of $D(\Sigma)$ the same representation $\tilde\rho_f$ of $G$ assigned to the face $f$;
        \item   assigning to each two-dimensional strand $f_\alpha$ of $D(\Sigma)$ a topological label $\tilde\sigma_\alpha \in S_n$ such that taken in total such assignments satisfy the Wirtinger relations
            \begin{align}
                \tilde\sigma_\alpha & = \tilde\sigma_\beta \tilde\sigma_{\alpha'}\tilde\sigma_\beta^{-1}, \label{eq:2DwirtingerRel1}\\
                \tilde\sigma_{\alpha'} & = \tilde\sigma_\beta^{-1} \tilde\sigma_\alpha \tilde\sigma_\beta, \label{eq:2DwirtingerRel2}
            \end{align}
            where $\tilde\sigma_\beta$ is the permutation assigned to the arc of the overcrossing face and $\tilde\sigma_\alpha$ and $\tilde\sigma_{\alpha'}$ are those assigned to the two arcs of the undercrossing face (with the appropriate equation depending on the orientation of the crossing), along with the additional relation at edges in the boundary of different faces
            \begin{equation}
                \label{reledge2D}
                \prod_{\alpha  : e       \in \partial(f_\alpha)   } \hspace{-1em}   \tilde\sigma_\alpha \
                \prod_{\alpha' : \bar{e} \in \partial(f_{\alpha'})} \hspace{-1.3em} \tilde\sigma_{\alpha'}^{-1} = 1.
            \end{equation}
    \end{enumerate}
    The permutations $\tilde\sigma_\alpha$ have the property that, for any strand $e_i$ in the diagram $D(\Gamma)$, letting $f_{\alpha_i}$ be the face in $D(\Sigma)$ bordered by $e_i$, then $\tilde\sigma_{\alpha_i} = (\sigma_i)^*$ if $e_i\in \partial(f_\alpha)$, or $\tilde\sigma_{\alpha_i} = \sigma_i$ if $\bar e_i \in \partial(f_\alpha)$, where $\sigma_i$ are the permutations that label the strands of the diagram $D(\Gamma)$ of the topspin network $\psi$. Similarly, for strands $e_i'$ of $D(\Gamma')$, one has $\tilde\sigma_{\alpha_i} = \sigma'_i$ if $e_i'\in \partial(f_\alpha)$, or $\tilde\sigma_{\alpha_i} = (\sigma'_i)^*$ if $\bar e_i' \in \partial(f_\alpha)$.
\end{lem}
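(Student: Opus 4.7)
The plan is to express the content of \cref{def:topspinFoam} in the purely diagrammatic language provided by the two-dimensional analogue of the Wirtinger presentation recalled in the paragraphs preceding the statement. The decorations by $\tilde\rho$ and $\tilde\iota$ in items (1)--(2) are essentially bookkeeping: a face $f$ of $\Sigma$ projects onto a union of two-dimensional strands $f_\alpha$ in $D(\Sigma)$ (one extra strand for each undercrossing the face acquires under projection), and the face label $\tilde\rho_f$ pulls back to a common label on each $f_\alpha$; similarly, a one-dimensional strand $e_i$ carries the intertwiner $\tilde\iota_e$ of the edge of $\Sigma$ from which it descends. The nontrivial content is therefore concentrated in item (3) and in the compatibility with the boundary topspin networks.

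First I would use the two-dimensional Wirtinger presentation of $\pi_1((S^3\times[0,1])\smallsetminus\Sigma)$: its generators are the meridians $\gamma_\alpha$ of the 2D strands of $D(\Sigma)$, and the defining relations are the crossing relations \cref{Wirt2D1} together with the edge relations \cref{Wirt2D2}. A homomorphism $\tilde\sigma$ to $S_n$ is then equivalent to an assignment of permutations $\tilde\sigma_\alpha := \tilde\sigma(\gamma_\alpha)$ to the strands $f_\alpha$ satisfying exactly the transcribed relations \cref{eq:2DwirtingerRel1,eq:2DwirtingerRel2} at crossings (the sign depending on crossing orientation) and \cref{reledge2D} at edges in the common boundary of several faces. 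Conversely any permutation assignment satisfying these relations descends to a well-defined homomorphism on the quotient, since the relations imposed are precisely those in the presentation. This establishes the equivalence between item (3) of the lemma and the datum of a representation $\tilde\sigma$ as in \cref{def:topspinFoam}.

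Next I would verify the boundary identifications with the topspin networks $\psi$ and $\psi'$. The cylindrical collar condition $\Sigma\cap M_\epsilon = \Gamma\times[0,\epsilon)$ from \cref{def:spinFoam} guarantees that for each edge $e_i$ of $\Gamma$ there is a canonical face $f_{e_i}$ of $\Sigma$ meeting $S^3\times\{0\}$ along $e_i$. A meridian around $f_{e_i}$ inside the slab $S^3\times[0,\epsilon)$ retracts to a meridian around $e_i$ in $S^3\smallsetminus\Gamma$, so the inclusion $S^3\smallsetminus\Gamma\hookrightarrow (S^3\times[0,1])\smallsetminus\Sigma$ identifies the Wirtinger generator of $\pi_1(S^3\smallsetminus\Gamma)$ associated with $e_i$ with the Wirtinger generator of $\pi_1((S^3\times[0,1])\smallsetminus\Sigma)$ associated with $f_{e_i}$, up to orientation. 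Because \cref{def:topspinFoam} requires that the branched cover of $S^3\times[0,1]$ determined by $\tilde\sigma$ restricts on $S^3\times\{0\}$ to the branched cover of $S^3$ determined by $\sigma$, the two representations must agree on these meridians, forcing $\tilde\sigma_{\alpha_i}=\sigma_i$ or $(\sigma_i)^*$ according to whether $\bar e_i$ or $e_i$ lies in $\partial(f_\alpha)$. The analogous analysis on the $S^3\times\{1\}$ slice yields the corresponding statement for $\sigma'_i$.

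The step that requires genuine care is the orientation bookkeeping in the last paragraph: one must be systematic about how the orientation of each face $f$ of $\Sigma$ induces orientations on its boundary edges, and how this sign choice then corresponds to either $\sigma_i$ or its orientation-dualized counterpart $(\sigma_i)^*$, in the same sense as the face-label conventions $\tilde\rho_{f_e}=\rho_e$ vs.\ $\rho_e^*$ built into \cref{def:spinFoam}. Once this local check at the cylindrical collars near $S^3\times\{0,1\}$ is carried out, both directions of the lemma reduce to an unpacking of definitions combined with the Wirtinger presentation, and the diagrammatic specification given in items (1)--(3) is seen to be equivalent to the data of \cref{def:topspinFoam}.
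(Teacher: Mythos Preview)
Your proposal is correct and follows essentially the same approach as the paper: both reduce item~(3) to the two-dimensional Wirtinger presentation of the complement, identify the homomorphism $\tilde\sigma$ with a strand-labeling satisfying \cref{eq:2DwirtingerRel1,eq:2DwirtingerRel2,reledge2D}, and then use the product structure $\Gamma\times[0,\epsilon)$ near the boundary to match meridians of faces with meridians of boundary edges, yielding the stated compatibility with $\sigma_i$ and $\sigma_i'$. The only cosmetic difference is that the paper first caps $S^3\times[0,1]$ to $S^4$ (and removes a point to land in $\R^4$) in order to invoke the Wirtinger presentation exactly as set up in the preceding paragraphs, whereas you work directly in $S^3\times[0,1]$; your more explicit treatment of the meridian retraction and orientation bookkeeping is a welcome expansion of what the paper leaves implicit.
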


\begin{proof}
    We can consider $\Sigma \subset S^3\times [0,1]$ as embedded in $S^4$, with the four-sphere obtained by cupping $\Sigma \subset S^3\times [0,1]$ with two 3-balls $D^3$ glued along the two boundary components $S^3\times \{0 \}$ and $S^3\times \{1\}$. By removing the point at infinity, we can then think of $\Sigma$ as embedded in $\R^4$ and obtain diagrams $D(\Sigma)$ by projections on generic three-dimensional linear subspaces in $\R^4$, where one marks by overcrossings and undercrossings the strands of $\Gamma$ that intersect in the projection, as in the case of embedded knots and graphs.

    We can then use the presentation of the fundamental group of the complement $\pi_1(S^4 \smallsetminus \Sigma)$ given in terms of generators and Wirtinger relations as above. This shows that the data $\tilde\sigma_\alpha$ define a representation $\tilde\sigma:\pi_1(S^4 \smallsetminus \Sigma) \to S_n$, hence an $n$-fold branched covering of $S^4$ branched along $\Sigma$. The compatibility with $\tilde\sigma_\alpha$ and the $\sigma_i$ and $\sigma_i'$ show that the restriction of this branched covering to $S^3 \times [0,1]$ determines a branched cover cobordism between the covering $M$ and $M'$ of the three-sphere, respectively determined by the representations $\sigma: \pi_1(S^3 \smallsetminus \Gamma) \to S_n$ and $\sigma': \pi_1(S^3 \smallsetminus \Gamma') \to S_n$. We use here the fact that, near the boundary, the embedded two-complex $\Sigma$ is a product $\Gamma \times [0,\epsilon)$ or $(-\epsilon, 0] \times \Gamma'$.
\end{proof}

\subsection{The case of cyclic coverings} \label{cyclcovSec}

A cyclic branched covering $M$ of $S^3$, branched along $\Gamma \subset S^3$, is a branched covering such that the corresponding representation $\sigma$ maps surjectively
\begin{equation}\label{cycliccover}
    \sigma : \pi_1(S^3\smallsetminus \Gamma) \to \Z/n\Z.
\end{equation}
Topspin networks whose topological data define cyclic branched coverings are simpler than the general case discussed above, in the sense that the topological data are directly associated to the graph $\Gamma$ itself, not to the planar projections $D(\Gamma)$.

\begin{prop}\label{cycltopnet}
    Let $\psi=(\Gamma, \rho,\iota,\sigma)$ be a topspin network such that the branched cover $M$ of $S^3$ determined by $(\Gamma,\sigma)$ is cyclic. Then the data $\psi=(\Gamma, \rho,\iota,\sigma)$ are equivalent to a spin networks $(\Gamma, \rho,\iota)$ together with group elements $\sigma_e\in \Z/n\Z$ associated to the edges of $\Gamma$ satisfying the relation
    \begin{equation}\label{Wirtcycl}
        \prod_{s(e_i)=v} \hspace{-0.8em} \sigma_{e_i} \prod_{t(e_j)=v} \hspace{-0.8em} \sigma_{e_j}^{-1} =1.
    \end{equation}
\end{prop}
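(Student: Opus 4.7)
The plan is to deduce the statement directly from \cref{Wirtinger} by exploiting the fact that the target group $\Z/n\Z$ is abelian. Starting from a topspin network $\psi = (\Gamma,\rho,\iota,\sigma)$ with $\sigma$ factoring through the abelianization as in \cref{cycliccover}, I choose any planar diagram $D(\Gamma)$ and consider the decoration by permutations $\sigma_i \in \Z/n\Z$ on the arcs of $D(\Gamma)$ provided by \cref{Wirtinger}.

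First I would observe that in an abelian group the two Wirtinger crossing relations \cref{eq:wirtingerRel1,eq:wirtingerRel2} collapse to the single identity $\sigma_j = \sigma_i$, since conjugation is trivial. Consequently, whenever two arcs $\sigma_i$ and $\sigma_j$ belong to the same edge of $\Gamma$ but are separated in the diagram by an undercrossing, they must carry identical labels. By induction on the number of undercrossings along a given edge, all arcs corresponding to a single edge $e$ of $\Gamma$ carry a common label, which I denote $\sigma_e \in \Z/n\Z$. This shows that the decoration descends from the arcs of $D(\Gamma)$ to the edges of $\Gamma$ itself, and in particular is independent of the choice of planar projection.

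Next I would translate the vertex relation \cref{eq:vertexRelation} into this edge-level language: the arcs incident at a vertex $v$ are precisely the initial/terminal arcs of the edges $e$ with $s(e) = v$ or $t(e) = v$, so \cref{eq:vertexRelation} becomes exactly \cref{Wirtcycl}. Conversely, given a spin network $(\Gamma,\rho,\iota)$ together with edge labels $\sigma_e \in \Z/n\Z$ satisfying \cref{Wirtcycl}, I would propagate the labels to all arcs of any chosen diagram $D(\Gamma)$ by setting $\sigma_i = \sigma_{e(i)}$ where $e(i)$ is the edge containing the arc $i$; the crossing relations are automatic by abelianness, and the vertex relation holds by hypothesis, so by \cref{Wirtinger} this defines a group homomorphism $\sigma : \pi_1(S^3 \smallsetminus \Gamma) \to \Z/n\Z$.

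The only subtle point, and the one I would be most careful about, is verifying that the passage from arc labels to edge labels really is well-defined and bijective: this amounts to checking that the abelianization kills exactly the ambiguity introduced by choosing a planar projection (different projections related by generalized Reidemeister moves give the same edge labels). This follows because each Reidemeister move preserves the edge to which an arc belongs and only permutes arcs within the same edge or introduces canceling pairs, and under abelianness such modifications do not change the assigned value $\sigma_e$. Once this is established, the two constructions are mutually inverse, yielding the claimed equivalence.
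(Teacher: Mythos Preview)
Your proposal is correct and follows essentially the same argument as the paper: use the Wirtinger presentation from \cref{Wirtinger}, observe that abelianness of $\Z/n\Z$ collapses the crossing relations \cref{eq:wirtingerRel1,eq:wirtingerRel2} to $\sigma_i=\sigma_j$ so that labels descend to edges, and then read off \cref{Wirtcycl} from the vertex relation \cref{eq:vertexRelation}. Your write-up is in fact more thorough than the paper's, spelling out the converse and the independence from the choice of planar projection, but the underlying idea is identical.
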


\begin{proof}
    In terms of the Wirtinger presentation of $\pi_1(S^3\smallsetminus \Gamma)$ associated to the choice of a planar diagram $D(\Gamma)$ of the graph $\Gamma$, we see that, because the range of the representation $\sigma$ is the abelian group $\Z/n\Z$, the Wirtinger relations at crossings \cref{eq:wirtingerRel1,eq:wirtingerRel2} simply give $\sigma_i=\sigma_j$, which means that the group elements assigned to all the strands in the planar diagram $D(\Gamma)$ belonging to the same edge of $\Gamma$ are equal. Equivalently, the topological labels just consist of group elements $\sigma_e \in \Z/n\Z$ attached to the edges of $\Gamma$. The remaining Wirtinger relation \cref{eq:vertexRelation} then gives \cref{Wirtcycl}.
\end{proof}

\smallskip

For such cyclic coverings, we can also introduce a notion of \textit{degenerate} topspin networks. One may regard them as analogous to the degenerate $\Q$-lattices of \cite{ClassTextbook} in our analogy with arithmetic noncommutative geometry discussed in \cref{arNCGsec} below. However, while in the $\Q$-lattices case it is crucial to include the non-invertible (degenerate) $\Q$-lattices in order to have a noncommutative space describing the quotient by commensurability and a dynamical flow on the resulting convolution algebra, in the setting we are considering here one already obtains an interesting algebra and dynamics on it just by restricting to the ordinary (nondegenerate) topspin foams and networks. In fact, the equivalence relation determined by stabilization and covering moves suffices to give rise to a system with properties similar to those of the $\Q$-lattices case.

\begin{defn}\label{def:degenerateTopspinNetworks}
    A possibly-degenerate topspin network over a compact Lie group $G$ is a tuple $(\Gamma, \rho, \iota, \sigma)$ of data consisting of
    \begin{enumerate}
        \item   a spin network $(\Gamma, \rho, \iota)$ in the sense of \cref{def:spinNetwork}, with $\Gamma \subset S^3$;
        \item   a labeling $\sigma$ of each $e$ of $\Gamma$ by a cyclic permutation $\sigma_e \in \Z/n\Z$.
    \end{enumerate}
    The data given by $\sigma$ do not necessarily satisfy the Wirtinger relation \cref{Wirtcycl}.
\end{defn}

Notice that, for more general branched coverings, which are not cyclic coverings, we cannot relax the Wirtinger conditions for the topological data. To see this, note that if the topological data are defined using planar projections $D(\Gamma)$, they need to be consistent with the generalized Reidemeister moves for embedded graphs in order to be well defined, and relaxing the Wirtinger conditions violates Reidemeister invariance. In the case of cyclic coverings this is not a problem, because the topological data are assigned to the graph, not to a planar projection.

\subsection{Embedded graphs and cobordisms}\label{cobordSec}

It is customary to assume in spin foam models \cite{BaezSpinFoam,RovelliQG} that spin foams are representation theoretic data assigned to the faces, edges, and vertices of a cobordism between embedded graphs. There are, however, different versions of cobordism (or concordance, extending the terminology in use for knots and links) in the case of spatial graphs, that is, graphs embedded in the 3-sphere. Some recent discussions of graph cobordisms and resulting concordance groups are given in \cite{MilnorInvariantsSpatialGraphs,KhovanovHomologyEmbeddedGraphs}. In the context of loop quantum gravity, a discussion of cobordisms of embedded graphs is given in \S 9.1 of \cite{RovelliQG}.

In the case of cobordisms between embedded knots and links in $S^3$, realized by smooth embedded surfaces in $S^3\times [0,1]$, the cobordisms can always be constructed out of a series of ``pairs of pants," which can also be described as a series of saddle critical points with respect to the height Morse function $S^3\times [0,1]\to [0,1]$, as in \cref{fig:Saddle}. These can also be described in terms of fusion and fission moves that consist of a surgery that attaches a band $D^1\times D^1$ replacing an $S^0\times D^1$ component of the boundary by a $D^1 \times S^0$ component, see \cite{CobordismBetweenLinks} and the discussion in \cite{KhovanovHomologyEmbeddedGraphs}.

\begin{figure}
    \centering
    \includegraphics[scale=0.95]{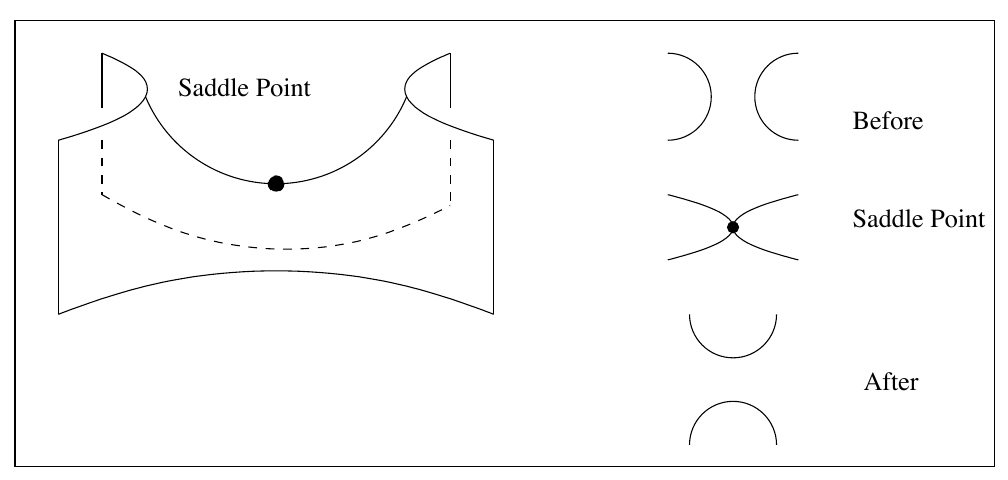}
    \caption{Saddle points in cobordisms by smooth surfaces}
    \label{fig:Saddle}
\end{figure}

The cobordisms $\Sigma$ we consider here are between embedded graphs, not only knots and links, and they are therefore realized by two-complexes $\Sigma$ embedded in $S^3 \times [0,1]$, which are not, in general, smooth surfaces. The basic moves that generate these cobordisms are therefore more general than the saddle points or band attachments that are sufficient for smooth surfaces.

The type of moves that we need to consider for these more general cobordisms include the possibility of contracting an edge of a graph along the cobordism, and of splitting a vertex into a pair of vertices, with the edges incident to the vertex partitioned among the two resulting vertices.  We can apply one or the other move to a graph provided the resulting cobordism will have the property that $\partial \Sigma = \Gamma \cup \bar\Gamma'$ with $\Gamma'$ the graph obtained as a result of the edge contraction or vertex splitting, and where $\partial$ is, as usual, the algebraic boundary operator applied to the two-complex $\Sigma$. We illustrate these two types of moves in \cref{fig:twoComplexCoveringMoves}.

\begin{figure}
    \centering
    \begin{tabular}{m{0.4\textwidth} m{0.1\textwidth} m{0.4\textwidth}}
        \centering
            \subfloat[][Edge contraction]
            {
                \includegraphics[height=2in]{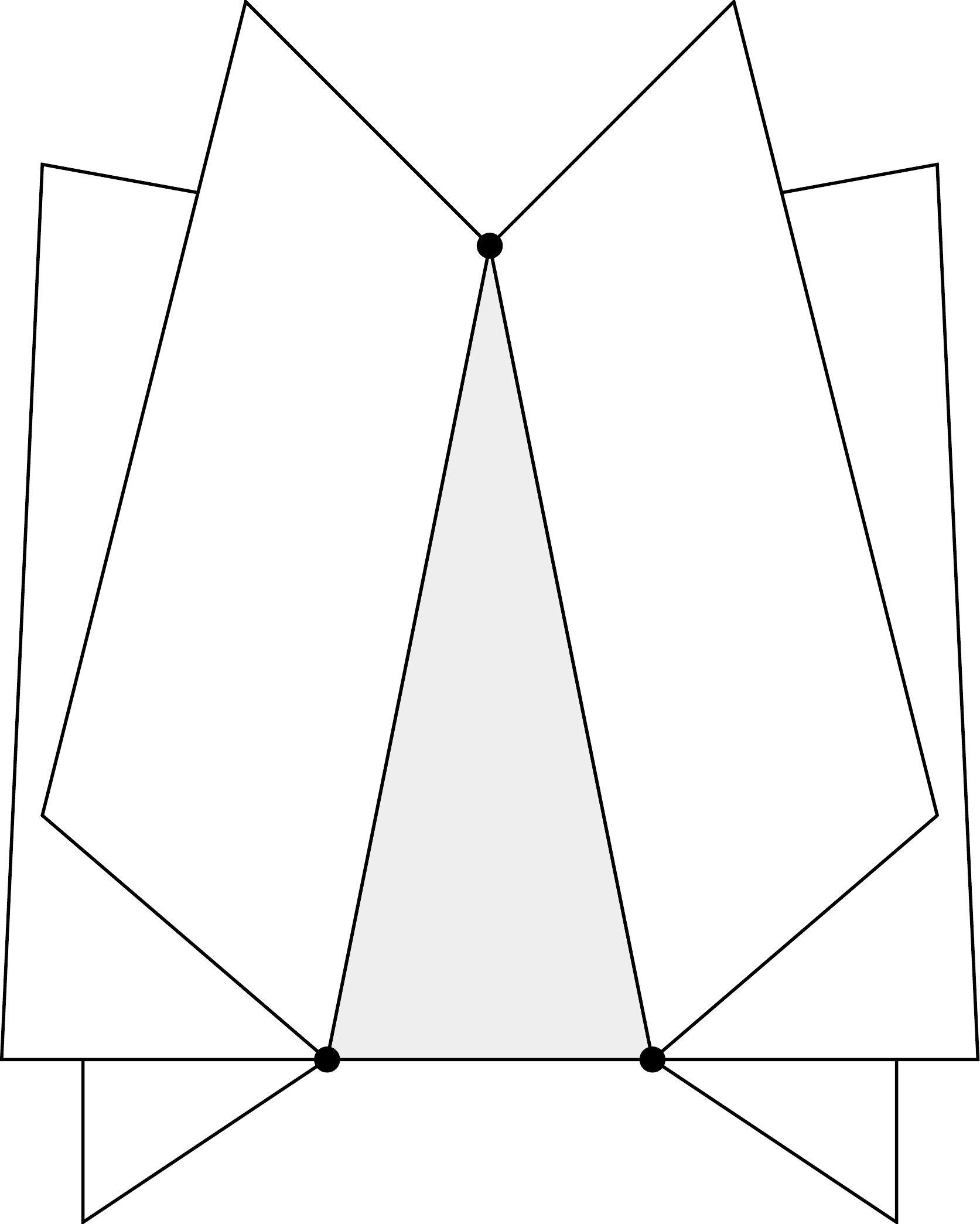}
            } &
        \centering
            $\begin{diagram}[balance,height=1.7in,vcenter]
            +t \\
            \uTo
            \end{diagram}$ &
        \centering
            \subfloat[][Vertex splitting]
            {
                \includegraphics[height=2in]{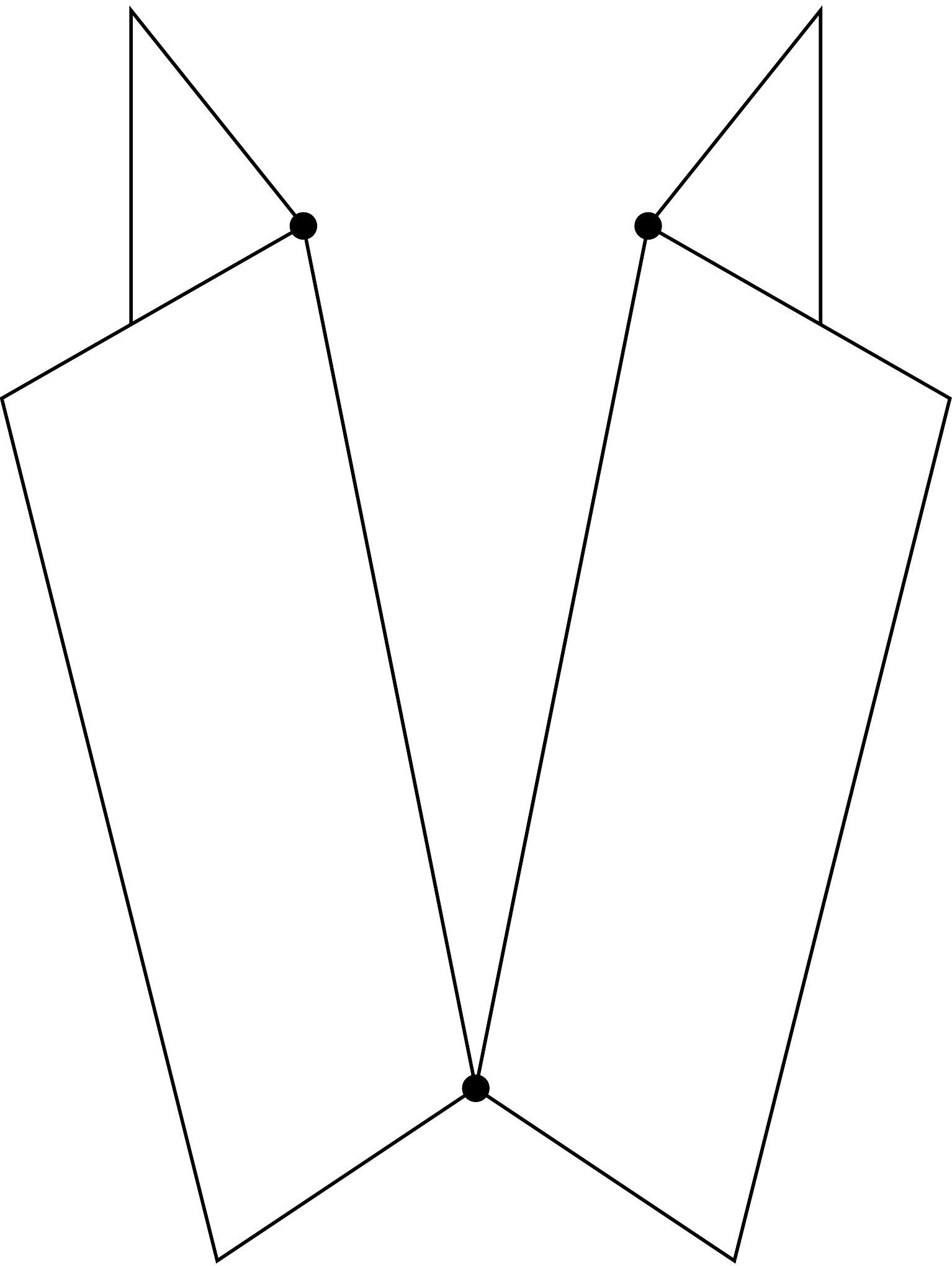}
            }
    \end{tabular}
    \caption{Covering moves for 2-complexes}
    \label{fig:twoComplexCoveringMoves}
\end{figure}

The case of the pair of pants cobordisms in the smooth case can be seen as a particular case of the second procedure (vertex splitting), where two incoming and two outgoing edges at a given vertex are spit in two different ways as two vertices, each with one incoming and one outgoing edge.

The use of sequences of edge contraction and vertex splitting moves as above to construct graph cobordisms includes, as a particular case, the fusion and fission moves for graphs described in \cite{KhovanovHomologyEmbeddedGraphs}, which can be obtained as a sequence of two such moves, one that separates vertices followed by one that contracts an edge. This explains more precisely the difference between the two notions of cobordism of graphs discussed in \cite{KhovanovHomologyEmbeddedGraphs}, both of which recover the usual notion of concordance when applied only to knots and links.

\subsection{Geometries and topologies}

Formally, a path-integral approach to Euclidean quantum gravity \cite{HamberQG,WaveFunctionOfTheUniverse} would formulate the transition amplitude between two given three-dimensional geometries as a path integral involving a sum over four-dimensional geometries, given by cobordisms $W$ with metrics $g$, weighted by the action functional $S_W(g)$ of Euclidean gravity (the Einstein--Hilbert action or a variant thereof):
\begin{equation*}
    \langle (M_1,g_1), (M_2,g_2)\rangle = \hspace{-2.5em}\sum_{\substack{(W, g) : \\ \partial W = M_1 \cup \bar M_2 \\ g|_{M_1} = g_1, \,  g|_{M_2} = g_2}} \hspace{-2em}\int \ee^{\ii S_W(g)} \, \cD[g].
\end{equation*}
The sum is over four-dimensional topologies interpolating via a cobordism between the given three-dimensional manifolds. Even at the purely formal level, it is far from obvious what one should mean by a sum over topologies in this setting. For example, it is well known that in dimension four one has an abundance of topological manifolds which do not admit any smooth structure. One does not expect such topologies to play a physical role in the partition function of quantum gravity, the latter being (at large scales at least) a smooth phenomenon. Moreover, one also has the case of exotic smooth structures, by which a given topological four-manifold that admits smoothings can carry many inequivalent smooth structures. There is growing evidence \cite{ExoticSmoothnessSemiclassicalEQG,ExoticSpacesInQG1} that exotic smooth structures indeed contribute differently to physics (see also \cite{ExoticSmoothnessAndPhysics}) and should be counted in the partition function of Euclidean quantum gravity.

Thus, when one approaches quantum gravity via a discretization of three-dimensional and four-dimensional geometries in terms of spin networks and spin foams, one needs to encode the different topologies and geometries so that the four-dimensional geometries being counted are only the smooth ones, but with all their different exotic structures.

The proposal we make here of using spin networks and spin foams decorated by additional data prescribing the topology of a branched cover addresses both of these issues. In fact, the main result we refer to in the case of four-dimensional geometries, is the description of all compact PL four-manifolds as branched coverings of the four-sphere, obtained in \cite{FourManifoldsAsBranchedCovers}. Since in the case of four-manifolds one can upgrade PL structures to smooth structures, this already selects only those four-manifolds that admit a smooth structures, and moreover it accounts for the different exotic structures.

\section{Noncommutative spaces as algebras and categories}\label{algcatSec}

Noncommutative spaces are often described either as algebras or as categories. In fact, there are several instances in which one can convert the data of a (small) category into an algebra and conversely.

\subsection{Algebras from categories}

We recall here briefly some well known examples in noncommutative geometry which can be described easily in terms of associative algebras determined by categories. We then describe a generalization to the case of 2-categories.

\begin{ex}
    The first example, and prototype for the generalizations that follow, is the well known
    construction of the (reduced) group $C^*$-algebra. Suppose we are given a discrete group
    $G$. Let $\C[G]$ be the group ring. Elements in $\C[G]$ can be described as finitely
    supported functions $f: G \to \C$, with the product given by the convolution product
    $$ (f_1\star f_2)(g) = \sum_{g=g_1g_2} f_1(g_1) f_2(g_2). $$
    This product is associative but not commutative.
    The involution $f^*(g) \equiv \overline{f(g^{-1})}$ satisfies $(f_1 \star f_2)^*= f_2^* \star f_1^*$
    and makes the group ring into an involutive algebra. The norm closure in the
    representation $\pi: \C[G] \to \cB(\ell^2(G))$, given by
    $(\pi(f) \xi)(g)=\sum_{g=g_1 g_2} f(g_1) \xi(g_2)$, defines the reduced group
    $C^*$-algebra $C^*_r(G)$.
\end{ex}

\begin{ex}
    The second example is similar, but one considers a discrete
    semigroup $S$ instead of a group $G$. One can still form the
    semigroup ring $\C[S]$ given by finitely supported functions
    $f: S \to \C$ with the convolution product
    $$  (f_1\star f_2)(s) = \sum_{s=s_1 s_2} f_1(s_1) f_2(s_2). $$
    Since this time elements of $S$ do not, in general, have inverses,
    one no longer has the involution as in the group ring case. One can still
    represent $\C[S]$ as bounded operators acting on the Hilbert space
    $\ell^2(S)$, via $(\pi(f) \xi)(s)=\sum_{s=s_1 s_2} f(s_1) \xi(s_2)$
    This time the elements $s\in S$ act on $\ell^2(S)$ by
    isometries, instead of unitary operators as in the group case. In fact,
    the delta function $\delta_s$ acts on the basis element $\epsilon_{s'}$
    as a multiplicative shift, $\delta_s: \epsilon_{s'} \mapsto \epsilon_{ss'}$.
    If one denotes by $\pi(f)^*$ the adjoint of $\pi(f)$ as an operator on
    the Hilbert space $\ell^2(S)$, then one has $\delta_s^* \delta_s = 1$ but
    $\delta_s \delta_s^* = e_s$ is an idempotent not equal to the identity.
    One can consider then the $C^*$-algebra $C^*_r(S)$, which is
    the $C^*$-subalgebra of $\cB(\ell^2(S))$ generated by the $\delta_s$
    and their adjoints.
\end{ex}

\begin{ex}
    The third example is the groupoid case: a groupoid $\cG=(\cG^{(0)},\cG^{(1)},s,t)$
    is a (small) category with a collection of objects $\cG^{(0)}$ also called the units of
    the groupoids, and with morphisms $\gamma \in \cG^{(1)}$, such that all morphisms
    are invertible. There are source and target maps
    $s(\gamma), t(\gamma) \in \cG^{(0)}$, so that, with the equivalent notation used above
    $\gamma \in \Mor_{\cG}(s(\gamma),t(\gamma))$. One can view $\cG^{(0)} \subset \cG^{(1)}$
    by identifying $x\in \cG^{(0)}$ with the identity morphisms $1_x \in \cG^{(1)}$. The
    composition $\gamma_1\circ \gamma_2$ of two elements
    $\gamma_1$ and $\gamma_2$ in $\cG^{(1)}$ is defined under the condition that
    $t(\gamma_2)=s(\gamma_1)$.  Again
    we assume here for simplicity that $\cG^{(0)}$ and $\cG^{(1)}$ are sets with the discrete
    topology. One considers then a groupoid ring $\C[\cG]$ of finitely supported functions
    $f: \cG^{(1)} \to \C$ with convolution product
    $$ (f_1 \star f_2) (\gamma) = \sum_{\gamma = \gamma_1 \circ \gamma_2} f_1(\gamma_1) f_2(\gamma_2) . $$
    Since a groupoid is a small category where all morphisms are invertible, there is an
    involution on $\C[\cG]$ given again, as in the group case, by $f^*(\gamma) =\overline{f(\gamma^{-1})}$.
    In fact, the group case is a special case where the category has a single object.
    One obtains $C^*$-norms by considering representations
    $\pi_x: \C[\cG] \to \cB(\ell^2(\cG_x^{(1)}))$, where $\cG^{(1)}_x =\{ \gamma \in \cG^{(1)} \, :  \, s(\gamma)=x \}$, given by
    $(\pi_x(f) \xi)(\gamma) = \sum_{\gamma = \gamma_1 \circ \gamma_2} f(\gamma_1) \xi(\gamma_2)$.
    This is well defined since for the composition $s(\gamma)=s(\gamma_2)$. One has
    corresponding norms $\| f \|_x = \| \pi_x(f) \|_{\cB(\ell^2(\cG^{(1)}_x))}$ and $C^*$-algebra completions.
\end{ex}

\begin{ex}
    The generalization of both the semigroup and the groupoid case is then
    the case of a semigroupoid, which is the same as a small category $\cS$. In this case one can
    describe the data $\Obj(\cS)$ and $\Mor_{\cS}(x,y)$ for $x,y\in \Obj(\cS)$ in terms of
    $\cS=(\cS^{(0)}, \cS^{(1)}, s,t)$ as in the groupoid case, but without assuming the invertibility
    of morphisms. The the algebra $\C[\cS]$ of finitely supported functions on $\cS^{(1)}$ with
    convolution product
    $$ (f_1\star f_2)(\phi) = \sum_{\phi = \phi_1\circ \phi_2} f_1(\phi_1) f_2(\phi_2) $$
    is still defined as in the groupoid case, but without the involution. One still has representations
    $\pi_x: \C[\cG] \to \cB(\ell^2(\cS_x^{(1)}))$ as in the groupoid case.
\end{ex}

\begin{ex}
    A simple example of $C^*$-algebras associated to small categories
    is given by the graph $C^*$-algebras. Consider for simplicity a finite oriented
    graph $\Gamma$. It can be thought of as a small category with objects the vertices
    $v\in V(\Gamma)$ and morphisms the oriented edges $e\in E(\Gamma)$. The source
    and target maps are given by the boundary vertices of edges. The semigroupoid algebra
    is then generated by a partial isometry $\delta_e$ for each oriented edge with
    source projections $p_{s(e)}=\delta_e^* \delta_e$. One has the relation
    $p_v = \sum_{s(e) =v} \delta_e \delta_e^*$.
\end{ex}

\begin{ex}
    The convolution algebra associated to a small category $\cS$ is
    constructed in such a way that the product follows the way morphisms can be decomposed in
    the category as a composition of two other morphisms. When one has a category with
    sufficient extra structure, one can do a similar construction of an associative algebra
    based on a decomposition of {\em objects} instead of morphisms. This is possible when
    one has an {\em abelian} category $\cC$, and one associates to it a Ringel--Hall algebra,
    see \cite{EisensteinSeriesAndQAAs}. One considers the set $\Iso(\cC)$ of isomorphism classes of objects and
    functions with finite support $f: \Iso(\cC) \to \C$ with the convolution product
    $$ (f_1 \star f_2) (X) = \sum_{X'\subset X} f_1(X') f_2(X/X'), $$
    with the splitting of the object $X$ corresponding to the exact sequence
    $$ 0 \to X' \to X \to X/X' \to 0. $$
\end{ex}

All these instances of translations between algebras and categories can be
interpreted within the general framework of ``categorification" phenomena.
We will not enter here into details on any of these examples. We give instead
an analogous construction of a convolution algebra associated to a 2-category.
This type of algebras were used both in \cite{CoveringsCorrespondencesNCG} and in \cite{CyclotomyAndEndomotives}; here we give
a more detailed discussion of their properties.

\subsection{2-categories}

In a 2-category $\cC$, one has objects $X\in \Obj(\cC)$, 1-morphisms $\phi\in \Mor_\cC(X,Y)$ for $X,Y \in \Obj(\cC)$, and 2-morphisms $\Phi\in \MorTwo_\cC(\phi,\psi)$ for $\phi,\psi \in \Mor_\cC(X,Y)$.

The composition of 1-morphisms $\circ: \Mor_\cC(X,Y)\times \Mor_\cC(Y,Z)\to \Mor_\cC(X,Z)$, $(\phi,\psi)\mapsto \psi \circ \phi$ is associative. For each object $X\in \Obj(\cC)$ there is an identity morphism
$1_X \in \Mor_\cC(X,X)$, which is the unit for composition.

There are two compositions for 2-morphisms: the vertical and horizontal compositions.
The vertical composition
$$ \bullet: \MorTwo_\cC(\varphi,\psi)\times \MorTwo_\cC(\psi,\eta) \to
\MorTwo_\cC(\varphi,\eta), $$
which is defined for $\varphi,\psi,\eta \in \Mor_\cC(X,Y)$, is associative and has
identity elements $1_\phi \in \MorTwo_\cC(\phi,\phi)$.

The horizontal composition
$$ \circ: \MorTwo_\cC(\varphi,\psi)\times \MorTwo_\cC(\xi,\eta) \to
\MorTwo_\cC(\xi\circ \varphi,\eta \circ \psi), $$
which follows the composition of 1-morphisms and is therefore defined for
$\varphi,\psi \in \Mor_\cC(X,Y)$ and $\xi,\eta \in \Mor_\cC(Y,Z)$, is also
required to be associative. It also has a unit element, given by the identity
2-morphism between the identity morphisms $I_X \in \MorTwo_\cC(1_X,1_X)$.

The compatibility between vertical and horizontal composition is given by
\begin{equation}\label{verthorcompat}
 (\Phi_1 \circ \Psi_1) \bullet (\Phi_2 \circ \Psi_2) = (\Phi_1 \bullet \Phi_2) \circ (\Psi_1\bullet \Psi_2).
\end{equation}

\subsection{Algebras from 2-categories}

The terminology 2-algebras is usually reserved to structures that generalize
Hopf algebras and bialgebras and that are given in terms of a multiplication
and a co-multiplication with some compatibility condition. Here we introduce the
terminology {\em 2-semigroupoid algebra} to denote the algebraic structure
that will be naturally associated to a 2-category in the same way as the convolution algebras
of small categories described above.

\begin{defn}\label{2algA}
A 2-semigroupoid algebra $\cA$ over $\C$ is a $\C$-vector space endowed with two
associative multiplications $\circ$ and $\bullet$, each giving  $\cA$ the
structure of an associative $\C$-algebra with units, $1_\circ$ and $1_\bullet$,
respectively. The two multiplications satisfy the condition
\begin{equation}\label{2prods}
(a_1 \circ b_1)\bullet (a_2 \circ b_2) = (a_1 \bullet a_2) \circ (b_1 \bullet b_2),
\end{equation}
for all $a_1,a_2,b_1,b_2\in \cA$.
\end{defn}

We see then that this algebraic structure arises naturally from 2-categories.
For a 2-category $\cC$ we use the following notation:
$$ \cC^{(0)}= \Obj(\cC), \ \ \ \cC^{(1)}= \bigcup_{x,y\in \cC^{(0)}} \Mor_\cC(x,y), \ \ \
\cC^{(2)} = \bigcup_{\phi,\psi\in \cC^{(1)}} \MorTwo_\cC(\phi,\psi). $$

\begin{lem}\label{2algC}
Let $\cC$ be a small 2-category. Let $\C[\cC]$ be the vector space of finitely supported
functions $f: \cC^{(2)} \to \C$. The product corresponding to the vertical composition
\begin{equation}\label{prodvert}
(f_1 \bullet f_2)(\Phi) = \sum_{\Phi = \Phi_1 \bullet \Phi_2} f_1(\Phi_1) f_2(\Phi_2)
\end{equation}
and the one corresponding to the horizontal composition
\begin{equation}\label{prodhor}
(f_1 \circ f_2)(\Phi)= \sum_{\Phi = \Psi \circ \Upsilon} f_1(\Psi) f_2(\Upsilon)
\end{equation}
give $\C[\cC]$ the structure of a 2-semigroupoid algebra.
\end{lem}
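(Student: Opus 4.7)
The plan is to check the three requirements in \cref{2algA}: associativity of each of the products $\bullet$ and $\circ$, existence of units $1_\bullet$ and $1_\circ$, and the compatibility identity \cref{2prods}. The first two are routine: they repeat, for vertical and horizontal composition separately, the standard argument that produced the groupoid and semigroupoid convolution algebras recalled earlier in this section. Indeed, both triple products $((f_1\bullet f_2)\bullet f_3)(\Phi)$ and $(f_1\bullet(f_2\bullet f_3))(\Phi)$ unfold to $\sum_{\Phi=\Phi_1\bullet\Phi_2\bullet\Phi_3} f_1(\Phi_1)f_2(\Phi_2)f_3(\Phi_3)$ by associativity of vertical composition of 2-morphisms, and the analog holds for $\circ$. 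Units are provided by the formal sums $1_\bullet = \sum_{\phi\in \cC^{(1)}} \delta_{1_\phi}$ and $1_\circ = \sum_{X\in \cC^{(0)}} \delta_{I_X}$, interpreted as elements of the multiplier algebra of $\C[\cC]$ when these sums fail to be finitely supported; the matching conditions for source and target in vertical and horizontal composition yield $f\bullet 1_\bullet = 1_\bullet\bullet f = f$ and the analogous identity for $1_\circ$. This is precisely the same subtlety already implicitly accepted in the groupoid example above.

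The substantive step is the interchange identity \cref{2prods}. Unfolding both iterated convolutions, one has
\[
((f_1\circ g_1)\bullet(f_2\circ g_2))(\Phi) = \sum f_1(\Phi_1)\,g_1(\Psi_1)\,f_2(\Phi_2)\,g_2(\Psi_2),
\]
where the sum runs over quadruples $(\Phi_1,\Psi_1,\Phi_2,\Psi_2)$ for which $\Phi_i,\Psi_i$ are horizontally composable ($i=1,2$) and $(\Phi_1\circ\Psi_1)\bullet(\Phi_2\circ\Psi_2)=\Phi$, together with the parallel formula
\[
((f_1\bullet f_2)\circ(g_1\bullet g_2))(\Phi) = \sum f_1(\Phi_1)\,f_2(\Phi_2)\,g_1(\Psi_1)\,g_2(\Psi_2),
\]
where the sum is over quadruples for which $\Phi_1,\Phi_2$ and $\Psi_1,\Psi_2$ are vertically composable and $(\Phi_1\bullet\Phi_2)\circ(\Psi_1\bullet\Psi_2)=\Phi$. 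The interchange identity \cref{verthorcompat} equates the two iterated composites of a given quadruple and identifies the two index sets; since the summand is the product of the same four values in each case, the equality \cref{2prods} follows.

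The main obstacle is therefore the identification of the two index sets, i.e.\ the assertion that a quadruple satisfies the composability conditions on one side exactly when it satisfies them on the other. This is best seen by reading such a quadruple as a $2\times 2$ pasting diagram of 2-morphisms inside $\cC$: four 2-morphisms sit inside a grid of four 1-morphisms and three objects $X,Y,Z$, and both iterated composites of the interchange law are simultaneously defined on this pasting and agree. Unwinding the decomposition $\Phi = (\Phi_1\circ\Psi_1)\bullet(\Phi_2\circ\Psi_2)$ as naming the four cells of such a grid forces the intermediate 1-morphisms to be those required for the alternative decomposition $\Phi = (\Phi_1\bullet\Phi_2)\circ(\Psi_1\bullet\Psi_2)$ to be defined, and vice versa. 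Once the proof is written in these terms the identification becomes tautological and the lemma reduces to a direct application of \cref{verthorcompat}.
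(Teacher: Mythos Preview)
Your proof is correct and follows essentially the same approach as the paper: associativity is dismissed as inherited from the 2-category, and the compatibility \cref{2prods} is obtained by unfolding both iterated convolutions and invoking the interchange law \cref{verthorcompat}. You are in fact more careful than the paper on two points it glosses over---the existence of units (which, as you note, generically live only in the multiplier algebra) and the identification of the two index sets of composable quadruples---so there is nothing to correct.
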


\begin{proof}
    The associativity of both products follows from the associativity of both the
    vertical and the horizontal composition of 2-morphisms in a 2-category. One only
    needs to check that the compatibility condition \cref{2prods} between the two
    products holds. We have
    \begin{align*}
            ((f_1 \circ h_1) \bullet (f_2 \circ h_2)) (\Phi)
        & = \sum_{\Phi = \Phi_1 \bullet \Phi_2} (f_1 \circ h_1)(\Phi_1) (f_2 \circ h_2) (\Phi_2) \\
        & = \sum_{\Phi = \Phi_1 \bullet \Phi_2} \left( \left( \sum_{\Phi_1 =\Psi_1 \circ \Xi_1} f_1(\Psi_1) h_1(\Xi_1) \right) \left(\sum_{\Phi_2=\Psi_2 \circ \Xi_2} f_2(\Psi_2) h_2(\Xi_2) \right) \right) \\
        & = \sum_{\Phi=(\Psi_1 \circ \Xi_1)\bullet (\Psi_2 \circ \Xi_2)} f_1(\Psi_1) h_1(\Xi_1) f_2(\Psi_2) h_2(\Xi_2) \\
        & = \sum_{\Phi=(\Psi_1 \bullet \Psi_2)\circ (\Xi_1 \bullet \Xi_2)} f_1(\Psi_1) f_2(\Psi_2) h_1(\Xi_1)  h_2(\Xi_2) \\
        & = ((f_1 \bullet f_2)\circ (h_1\bullet h_2))(\Phi).
    \end{align*}
\end{proof}

A 2-semigroupoid algebra corresponding to a 2-category of low dimensional geometries
was considered in \cite{CoveringsCorrespondencesNCG}, as the ``algebra of coordinates" of a noncommutative
space of geometries. A similar construction of a 2-semigroupoid algebra coming from surgery
presentations of three-manifolds was considered in \cite{CyclotomyAndEndomotives}.

\section{A model case from arithmetic noncommutative geometry}\label{arNCGsec}

We discuss here briefly a motivating construction that arises in
another context in noncommutative geometry, in applications of
the quantum statistical mechanical formalism to arithmetic of
abelian extensions of number fields and function fields.
We refer the reader to Chapter 3 of the book \cite{ClassTextbook} for
a detailed treatment of this topic.

The main feature of the construction we review below, which is
directly relevant to our setting of spin networks and spin foams, is the
following. One considers a space parameterizing a certain family of
possibly singular geometries. In the arithmetic setting these geometries
are pairs of an $n$-dimensional lattice $\Lambda$ and a group
homomorphism $\phi: \Q^n/\Z^n \to \Q \Lambda /\Lambda$, which
can be thought of as a (possibly degenerate) level structure, a
labeling of the torsion points of the lattice $\Lambda$ in terms of
the torsion points of the ``standard lattice." Among these geometries
one has the ``nonsingular ones," which are those for which the
labeling $\phi$ is an actual level structure, that is, a group isomorphism.
On this set of geometries there is a natural equivalence relation, which
is given by commensurability of the lattices, $\Q\Lambda_1 =\Q \Lambda_2$
and identification of the labeling functions, $\phi_1=\phi_2$ modulo
$\Lambda_1 + \Lambda_2$.  One forms a convolution algebra associated
to this equivalence relation, which gives a noncommutative space
parameterizing the moduli space of these geometries up to commensurability.

The resulting convolution algebra has a natural time evolution, which
can be described in terms of the covolume of lattices. The resulting quantum statistical
mechanical system exhibits a spontaneous symmetry breaking phenomenon.
Below the critical temperature, the extremal low temperature KMS equilibrium
states of the system automatically select only those geometries that are
nondegenerate.

This provides a mechanism by which the correct type of geometries
spontaneously emerge as low temperature equilibrium states.
A discussion of this point of view on emergent geometry can be
found in \S 8 of Chapter 4 of \cite{ClassTextbook}.

The reason why this is relevant to the setting of spin foam models
is that one can similarly consider a convolution algebra that parameterizes
all (possibly degenerate) topspin foams carrying the metric
and topological information on the quantized four-dimensional geometry.
One then looks for a time evolution on this algebra whose low temperature
equilibrium states would automatically select, in a spontaneous symmetry
breaking phenomenon, the correct nondegenerate geometries.

We recall in this section the arithmetic case, stressing the
explicit analogies with the case of spin foams we are considering here.

\subsection{Quantum statistical mechanics}

The formalism of Quantum Statistical Mechanics in the operator algebra setting
can be summarized briefly as follows. (See \cite{OperatorAlgebrasAndQSM1,OperatorAlgebrasAndQSM2} and \S 3 of \cite{ClassTextbook}
for a more detailed treatment.)

One has a (unital) $C^*$-algebra $\cA$ of observables, together with a time evolution---that is, a one-parameter family of automorphisms $\sigma : \R \to \Aut(\cA)$.

A state on the algebra of observables is a linear functional
$\varphi: \cA \to \C$, which is normalized by $\varphi(1)=1$
and satisfies the positivity condition $\varphi(a^*a)\geq 0$
for all $a\in \cA$.

Among states on the algebra, one looks in particular for those that are
equilibrium states for the time evolution. This property is expressed by
the KMS condition, which depends on a thermodynamic parameter
$\beta$ (the inverse temperature). Namely, a state $\varphi$ is a KMS$_\beta$
state for the dynamical system $(\cA,\sigma)$ if for every choice of
two elements $a,b \in \cA$ there exists a function $F_{a,b}(z)$ which is
holomorphic on the strip in the complex plane
$I_\beta =\{ z\in \C \,:\,  0 < \Im(z) <\beta \}$ and extends to a continuous
function to the boundary $\partial I_\beta$ of the strip, with the property that,
for all $t\in \R$,
\begin{align*}
    F_{a,b}(t) & = \varphi(a\sigma_t(b)), \\
    F_{a,b}(t + \ii\beta) & =\varphi(\sigma_t(b)a).
\end{align*}
This condition can be regarded as identifying a class of functionals which
fail to be traces by an amount that is controlled by interpolation via a holomorphic
function that analytically continues the time evolution. An equivalent formulation
of the KMS condition in fact states that the functional $\varphi$ satisfies
$$ \varphi(ab) = \varphi(b \sigma_{\ii \beta}(a)), $$
for all $a,b$ in a dense subalgebra of ``analytic elements".

At zero temperature $T \equiv 1/\beta = 0$, the KMS$_\infty$ states
are defined in \cite{FromPhysicsToNTViaNCG1} as the weak limits of KMS$_\beta$
states $\varphi_\infty (a) = \lim_{\beta\to \infty} \varphi_\beta(a)$.

KMS states are equilibrium states, namely one has $\varphi(\sigma_t(a))=\varphi(a)$
for all $t\in \R$. This can be seen from an equivalent formulation of the KMS condition
as the identity $\varphi(a b)=\varphi(b \sigma_{\ii\beta}(a))$ for all $a,b$ in a dense
subalgebra of analytic elements on which the time evolution $\sigma_t$ admits an analytic
continuation $\sigma_z$, see \cite{OperatorAlgebrasAndQSM2} \S 5.

Given a representation $\pi: \cA \to \cB(\cH)$ of the algebra
of observables as bounded operators on a Hilbert space,
one has a Hamiltonian $H$ generating the time evolution
$\sigma_t$ if there is an operator $H$ (generally unbounded)
on $\cH$ satisfying
$$\pi(\sigma_t(a))=\ee^{\ii tH} \pi(a) \ee^{-\ii tH}$$
for all $a\in \cA$ and for all $t\in \R$.

A particular case of KMS states is given by the Gibbs states
\begin{equation}\label{Gibbs}
 \frac{1}{Z(\beta)}\, \Tr\left( \pi(a)\, \ee^{-\beta H} \right),
\end{equation}
with partition function $Z(\beta)=\Tr\left(\ee^{-\beta H} \right)$.
However, while the Gibbs states are only defined under
the condition that the operator $\exp(-\beta H)$ is trace class,
the KMS condition holds more generally and includes
equilibrium states that are not of the Gibbs form.

\subsection{Lattices and commensurability}

The notion of $\Q$-lattices and commensurability was introduced in
\cite{FromPhysicsToNTViaNCG1} to give a geometric interpretation of
a quantum statistical mechanical system
previously constructed by Bost and Connes in \cite{BostConnesAlgebras}
as the convolution algebra of functions on the
(noncommutative) moduli space of commensurability classes
of one-dimensional $\Q$-lattices up to scaling. This geometric
interpretation gave rise to several generalizations of
the original Bost--Connes system (see \S 3 of \cite{ClassTextbook}).

We recall here briefly the geometry of $\Q$-lattices, because it
will serve as a model for our treatment of spin foams.

An $n$-dimensional $\Q$-lattice $( \Lambda , \phi) $ is a
pair of a lattice $\Lambda\subset \R^n$ together with
a possibly degenerate labeling of the torsion points given by a
group homomorphism
$$ \phi :  \Q^n/\Z^n \longrightarrow \Q\Lambda / \Lambda . $$

The nondegenerate objects consist of those $\Q$-lattices that
are termed invertible, that is, those for which the homomorphism
$\phi$ is in fact an isomorphism.

Two $\Q$-lattices are commensurable if $\Q\Lambda_1=\Q\Lambda_2$ and
$\phi_1 = \phi_2 \mod \Lambda_1 + \Lambda_2$.

The convolution algebra on the space of $\Q$-lattices up to commensurability
consists of functions $f((\Lambda,\phi),(\Lambda',\phi'))$ of pairs of commensurable
lattices $(\Lambda,\phi)\sim (\Lambda',\phi')$, with the convolution product
$$ (f_1\star f_2)((\Lambda,\phi),(\Lambda',\phi')) =
\sum_{(\Lambda'',\phi'')\sim (\Lambda,\phi)} f_1((\Lambda,\phi),(\Lambda'',\phi''))
f_2((\Lambda'',\phi''),(\Lambda',\phi')). $$

In the case of one-dimensional or two-dimensional $\Q$-lattices considered in
\cite{FromPhysicsToNTViaNCG1}, one can similarly consider the convolution algebra for
the commensurability relation on $\Q$-lattices considered up to a scaling
action of $\R^*_+$ or $\C^*$, respectively. One then has on the resulting
algebra a natural time evolution by the ratio of the covolumes of the
pair of commensurable lattices,
$$ \sigma_t(f) ((\Lambda,\phi),(\Lambda',\phi'))  = \left(\frac{\Vol(\R^n/\Lambda')}{\Vol(\R^n/\Lambda)}\right)^{\ii t} \, f((\Lambda,\phi),(\Lambda',\phi')). $$

\subsection{Low temperature KMS states}

The quantum statistical mechanical systems of one-dimensional or two-dimensional
$\Q$-lattices introduced in \cite{FromPhysicsToNTViaNCG1} exhibit a pattern of symmetry breaking
and the low temperature extremal KMS states are parameterized by exactly those
$\Q$-lattices that give the nondegenerate geometries, the invertible $\Q$-lattices.

One considers representations of the convolution algebra on the Hilbert space
$\ell^2(\cC_{(\Lambda,\phi)})$, with $\cC_{(\Lambda,\phi)}$ the commensurability
class of a given $\Q$-lattice,
$$ \pi(f) \xi (\Lambda',\phi') =\sum_{(\Lambda'',\phi'')\sim (\Lambda,\phi)}
f((\Lambda',\phi'),(\Lambda'',\phi'')) \xi(\Lambda'',\phi''). $$
In the case when $(\Lambda,\phi)$ is invertible, the Hamiltonian generating the
time evolution in this representation has positive energy and one has a corresponding
Gibbs KMS state
\begin{equation}\label{GibbsKMS}
\varphi_{L,\beta}(f)=Z(\beta)^{-1} \sum_{m\in \SL_2(\Z)\backslash M_2^+(\Z)}
f(m\rho,m(z)) \,\, \det(m)^{-\beta} ,
\end{equation}
for $L=(\Lambda,\phi)$ an invertible $\Q$-lattice, where we use the parametrization
as in \cite{FromPhysicsToNTViaNCG1} of the invertible $\Q$-lattice $L=(\Lambda,\phi)$ by a pair $(\rho,z)$
of an element $\rho\in GL_2(\hat\Z)$ and a point $z\in \H$, and the $\Q$-lattices in
the commensurability class with elements $(m\rho, m(z))$ with $m\in \SL_2(\Z)\backslash M_2^+(\Z)$.
The partition function is $Z(\beta)=\sum \det(m)^{-\beta} =\zeta(\beta)\zeta(\beta-1)$.
A complete characterization of extremal KMS states in terms of
measures on the space of invertible $\Q$-lattices was given in \cite{FromPhysicsToNTViaNCG1}.

\subsection{The paradigm of emergent geometry}

Without entering into further details, the important observation here is the fact
that the nondegenerate geometries (the invertible $\Q$-lattices) are selected
out of a larger space containing all possibly degenerate geometries (all $\Q$-lattices)
via a dynamical phenomenon of spontaneous symmetry breaking.

This was discussed in \S 8 of Chapter 4 of \cite{ClassTextbook}, in order to propose
a scenario of {\em emergent geometry} according to which spacetime geometry
should arise spontaneously as a symmetry breaking phenomenon. The analogy
described there is between the setting of $\Q$-lattices recalled here above and
the setting of almost-commutative geometries used in the particle physics models
based on noncommutative geometry (see \cite{GravityAndSMWithNM} and Chapter 1 of \cite{ClassTextbook}, as well as \cref{sec:spinFoamsWithMatter} below).
The proposed analog of the possibly degenerate $\Q$-lattices is a degenerate
version of the notion of spectral triple in noncommutative geometry, regarded as
a kind of correspondence, while the nondegenerate geometries that provide
physical models of matter and gravity are expected to arise dynamically through
a symmetry breaking phenomenon as the invertible $\Q$-lattices do.
We present here a similar picture, where the analogy is now between the
case of $\Q$-lattices and that of spin foam models.

In the setting we describe here, we obtain a noncommutative space by considering
topspin foams and networks (our analog of $\Q$-lattices) with the equivalence relation
generated by covering moves, which plays the role of commensurability, and the
dynamics is generated by the lack of invariance under these moves of the spin foam
amplitudes. Other more elaborate constructions based on composition operations
of topspin foams, which we describe in the following sections, also fit into a similar
conceptual picture.

A frequent problem with the construction of spin foam models through
combinatorial data of simplicial complexes with spin labels is that one
often obtains, along with ordinary smooth geometries, also spurious
solutions that do not correspond to manifolds and which are often difficult
to recognize and separate from the ``good solutions." The idea of spontaneous
emergence of geometry as a symmetry breaking phenomenon proposed
in \cite{ClassTextbook} suggests that the correct solutions, or nondegenerate geometries,
should arise as low temperature extremal KMS equilibrium states from a
natural dynamics on a noncommutative space describing the overall moduli space
of all possibly-degenerate geometries. Thus, the type of convolution algebras with
dynamics that we consider here may also have possible generalizations that
address this problem.

\section{Categories of topspin networks and foams}\label{catcovSec}

We give here two natural constructions of categories of topspin networks and foams
and the corresponding algebras.
These will serve as a simpler toy model, before we discuss the
more elaborate construction of the 2-category in \cref{fiberSec} below.

\subsection{The groupoid of topspin networks and covering moves}\label{GgrpdSec}

Let $\cG$ denote the groupoid of the equivalence relation on topspin networks
generated by stabilization and covering moves. The objects of $\cG$ are
topspin networks $\psi=(\Gamma, \rho,\iota,\sigma)$ and the morphisms
are pairs $(\psi,\psi')$ of topspin foams such that the data $(\Gamma,\sigma)$
and $(\Gamma',\sigma')$ are related, after stabilization, by a finite sequence of
covering moves. The data $(\rho,\iota)$ and $(\rho',\iota')$ are related by the
compatibility conditions of \cref{sec:consistencyConditions}.

We consider then the groupoid algebra $\C[\cG]$ of finitely supported
functions $f(\psi,\psi')$ of pairs of covering moves equivalent topspin
networks, with the convolution product
$$ f_1\star f_2 (\psi,\psi')= \sum_{\psi\sim\psi''\sim \psi'} f_1(\psi,\psi'') f_2 (\psi'',\psi'). $$

One can construct a similar groupoid using topspin foams
$\Psi=(\Sigma,\tilde\rho,\tilde\iota,\tilde\sigma)$ and the same
equivalence relation $\Psi\sim \Psi'$ when $(\Sigma,\tilde\sigma)$
and $(\Sigma',\tilde\sigma')$ determine the same four-manifold,
expressed in terms of stabilizations and covering moves in the
four-dimensional setting. The corresponding groupoid algebra of functions $f(\Psi)$
has convolution product as above
$$ f_1\star f_2 (\Psi,\Psi')= \sum_{\Psi\sim\Psi''\sim \Psi'} f_1(\Psi,\Psi'') f_2 (\Psi'',\Psi'). $$

\subsection{The category of topspin networks and foams}\label{ScatSec}

We can also consider the usual category of spin networks and foams used
in loop quantum gravity, which has objects that are spin networks and
morphisms that are spin foams cobordisms. One can enrich it with topological
data and obtain a category, or semigroupoid, $\cS$ whose objects are topspin networks
$\psi=(\Gamma, \rho,\iota,\sigma)$ and whose morphisms ${\rm Mor}_{\cC}(\psi,\psi')$
are topspin foams $\Psi=(\Sigma,\tilde\rho,\tilde\iota,\tilde\sigma)$ cobordisms with
$\partial \Sigma = \Gamma \cup \bar \Gamma'$ and compatible data
$(\tilde\rho,\tilde\iota,\tilde\sigma)$.

Again, one can associate to this category the semigroupoid algebra $\C[\cS]$ of
functions of finite support $f(\Psi)$ with convolution product
$$ f_1\star f_2 (\Psi)= \sum_{\Psi=\Psi_1 \Psi_2} f_1(\Psi_1) f_2(\Psi_2), $$
where the composition of topspin foams is obtained by gluing them along
a common boundary topspin network.

\subsection{Including degenerate geometries}\label{deggeomSec}

A variant of the categories $\cG$ and $\cS$ described above can be obtained
by restricting to cyclic coverings. In that case, we can include among the objects
also the degenerate topspin networks and topspin foams, as in \cref{def:degenerateTopspinNetworks}. The covering moves would still be the
same, and the topspin cobordisms would also be as before, except that among
the cobordisms one also allows for those where the Wirtinger relations at
edges for the data $\tilde\sigma_f$ is not imposed.  This variants allows us to
illustrate, in this toy model, a simple dynamical mechanisms that selects the
nondegenerate geometries among the degenerate ones.

\subsection{Representations}\label{repsGSsec}

We consider the usual representations of $\C[\cG]$ on the Hilbert space spanned by
the arrows in the groupoid with fixed source, $\ell^2(\cG_\psi'^{(1)})$. This is the
space $\ell^2(\cC_{\psi'})$ of the equivalence class $\cC_{\psi'}=\{ \psi \,:\, \psi \sim \psi'\}$
under the equivalence relation determined by stabilizations and covering moves. The
representation $\pi_{\psi'}$ of $\C[\cG]$ on this Hilbert space is then given by
$$ \pi_{\psi'}(f)\xi (\psi) = \sum_{\psi''\sim\psi\sim\psi'} f(\psi,\psi'') \xi(\psi''), $$
where we have identified, in the standard way, an element $\xi\in \ell^2(\cC_{\psi'})$
with a square integrable function $\xi$ on the set of $\psi\in \cC_{\psi'}$.

The case of the groupoid of the covering moves equivalence on topspin foams
is analogous, with
$$ \pi_{\Psi'}(f)\xi (\Psi) = \sum_{\Psi''\sim\Psi\sim\Psi'} f(\Psi,\Psi'') \xi(\Psi''). $$

\smallskip

In a similar way, in the case of $\C[\cS]$ we can consider representations on
a Hilbert space $\cH_\psi$, which in this case is spanned by all the topspin foams
$\Psi$ with $\psi \subset \partial \Psi$. The representation is again given by the
same convolution product
$$ \pi_\psi (f) \xi(\Psi) = \sum_{\Psi=\Psi_1\Psi_2\,:\, \psi\subset \partial\Psi_2}
f(\Psi_1) \xi(\Psi_2). $$

\subsection{Dynamics on algebras of topspin networks and foams}\label{dynGSsec}

In general spin foam models \cite{BaezSpinFoam,RovelliQG}, one assigns to a spin foam $\Psi = (\Sigma,\tilde\rho,\tilde\iota)$ an amplitude of the form
\begin{equation}\label{spinfoamA}
    \bA(\Psi) = \omega(\Sigma) \prod_f \bA_f(\tilde\rho_f) \prod_e \bA_e(\tilde\rho_{F(e)}, \tilde\iota_e) \prod_v \bA_v(\tilde\rho_{F(v)},\tilde\iota_{E(v)}),
\end{equation}
where the weight factor $\omega(\Sigma)$ is a term that depends only on the two-complex $\Sigma$, while the amplitude $\bA_f$ at a given face depends on the representation $\tilde\rho_f$, the amplitude $\bA_e$ at a given edge depends on the representations $\tilde\rho_{F(e)}$ assigned to the faces adjacent to that edge and on the intertwiner $\tilde\iota_e$ at the edge, and the amplitude $\bA_v$ at a vertex depends on the representations $\tilde\rho_{F(v)}$ and intertwiners $\tilde\iota_{E(v)}$ of the faces and edges adjacent to the given vertex. One obtains the transition amplitudes between spin networks $\psi$ and $\psi'$ by summing all the $\bA(\Psi)$ for all the spin foams connecting $\psi$ and $\psi'$.

\smallskip

We introduced topspin networks and topspin foams as a way to encode the topology of the underlying manifold as part of the discrete combinatorial data of quantized geometry. Thus, topspin networks and topspin foams that are related by covering moves, with the correct compatibility conditions on the representation theoretic data, should be regarded as describing the same quantum geometry. However, it is not always true that the spin foam amplitudes are necessarily invariant under covering moves. In a spin foam model where the amplitudes are not covering-moves invariant, this lack of invariance determines a nontrivial dynamics on the algebra of the groupoid of the equivalence relation.

\smallskip

For a choice of the amplitude \eqref{spinfoamA} which is not invariant under covering moves, and for which the amplitudes are positive real numbers, one can define a dynamics on the groupoid algebra of \cref{GgrpdSec} by setting
\begin{equation}\label{Gtsfdyn}
    \sigma_t(f) (\Psi, \Psi') = \left(\frac{\bA(\Psi)}{\bA(\Psi')}\right)^{\ii t} f(\Psi,\Psi').
\end{equation}
It is easy to check that this indeed defines a time evolution on the resulting groupoid algebra.

Moreover, if one includes in the groupoid, in the case of topological data $\sigma$ with values in a cyclic group $\Z/n\Z$, also the degenerate topspin foams of \cref{def:degenerateTopspinNetworks}, then one can include in the time evolution of \cref{Gtsfdyn} a factor that measures how degenerate the geometry is.

Let $\chi: S_\infty \to \mathrm{U}(1)$ be a multiplicative character of the infinite symmetric group $S_\infty = \bigcup_n S_n$.
We define an element $\fW(\psi) \in S_n$ as the elements in the permutation group given by the product of the Wirtinger relations at vertices:
\begin{equation*}
    \fW(\psi) \equiv  \left(\prod_{v \in V(\Gamma)} \prod_{e : v \in \partial(e)} \sigma_e \prod_{e : v \in \bar\partial(e)} \sigma_e^{-1}\right).
\end{equation*}
In fact, for cyclic coverings the elements $\fW(\Psi)$ are contained in the subgroup $\Z/n\Z$ of
cyclic permutations in $S_n$, so it suffices to take a character $\chi$ of $U(1)$ and identify
the elements $\fW(\Psi)$ with roots of unity in $U(1)$.

This element is always $\fW(\psi)=1$ for the nondegenerate geometries, while it can be
$\fW(\psi)\neq 1$ in the degenerate cases.
We can similarly define $\fW(\Psi)$ for a degenerate topspin foams, using the Wirtinger
relation for faces incident to an edge,
\begin{equation}\label{WSigma}
    \fW(\Psi) \equiv  \prod_{f : e \in \partial(f)} \tilde\sigma_f \prod_{f : \bar e \in \partial(f)} \tilde\sigma_f^{-1}.
\end{equation}

With these in hand, we modify the time evolution of \cref{Gtsfdyn} by setting
\begin{equation}\label{timeWirt}
        \sigma_t(f) (\Psi,\Psi') = \left(\frac{\bA(\Psi)}{\bA(\Psi')}\right)^{\ii t}
        \chi(\fW(\Psi) \fW(\Psi')^{-1})^t \, f(\Psi,\Psi').
\end{equation}

In this way, the time evolution of \cref{timeWirt} accounts not only for the normalized amplitude discrepancy between topspin foams related by covering moves, but also for the degeneracy of the geometry, measured by the failure if the topological data to satisfy the Wirtinger relations and therefore define a genuine three-dimensional topology or four-dimensional cobordism. It is possible that more sophisticated examples of this type of dynamics may be constructed using, instead of multiplicative characters of the infinite symmetric group, categorical representations of embedded graphs in the
sense of \cite{YetterKnottedGraphs} and quantum groups.

\medskip

One can treat in a similar way the case of the algebra $\C[\cS]$ of topspin foams
described in \cref{ScatSec}.
We define a normalized amplitude of the form
\begin{equation}\label{normAmp}
    \bA^0(\Psi) = \frac{\bA(\Psi)}{\bA(\psi_0)},
\end{equation}
where $\Psi$ is a spin foams connecting spin networks $\psi_0$ and $\psi_1$, with $\partial\Sigma =\Gamma_0\cup\bar\Gamma_1$, and the amplitude for the spin network $\psi_0$ is defined as
\begin{equation}\label{bA0}
    \bA(\psi_0) = \prod_e \bA_{f_e}(\tilde\rho) \prod_v \bA_{e_v}(\tilde\rho,\tilde\iota),
\end{equation}
where the amplitudes $ \bA_{f_e}(\tilde\rho) $ and $\bA_{e_v}(\tilde\rho,\tilde\iota)$ are those assigned to the spin foam with two-complex $\Sigma\times [0,1]$, with $f_e =e\times [0,1]$ and $e_v = v \times [0,1]$. One can equivalently normalize by the amplitude of the spin network $\psi_1$.

We assume also that the weight factor $\omega(\Sigma)$ satisfies the multiplicative property
\begin{equation}\label{weightmultipl}
    \omega(\Sigma_1 \cup_\Gamma \Sigma_2)= \omega(\Sigma_1) \omega(\Sigma_2),
\end{equation}
for two two-complexes glued together along a boundary graph $\Gamma$.
This assumption is necessary for the result that follows.

Notice how any additive
invariant of embedded surfaces satisfying inclusion--exclusion when gluing
two surfaces along a common boundary will give rise, by exponentiation, to
an invariant satisfying \cref{weightmultipl}. In fact, let $\chi$ be an additive
invariant. The inclusion--exclusion property gives
$$ \chi(\Sigma_1 \cup_\Gamma \Sigma_2)=\chi(\Sigma_1)+\chi(\Sigma_2) -\chi(\Gamma). $$
Then setting
$$ \omega(\Sigma)=\exp(\alpha (\chi(\Sigma)-\chi(\Gamma_1))), $$
where $\partial\Sigma =\Gamma_1 \cup \bar\Gamma$, and for some constant $\alpha$, gives
an invariant with the property \cref{weightmultipl} above.

\begin{lem}\label{multiplAmp}
    Consider a spin foam model with amplitudes \cref{spinfoamA}, where the weight factor satisfies \cref{weightmultipl}. Then the normalized amplitude \cref{normAmp} has the property that, if $\Sigma =\Sigma_1\cup_{\Gamma}\Sigma_2$ is obtained by gluing together two two-complexes along their common boundary, then the normalized amplitudes multiply:
    \begin{equation*}
        \bA^0(\Sigma_1\cup_\Gamma \Sigma_2) =\bA^0(\Sigma_1)\bA^0(\Sigma_2).
    \end{equation*}
    Setting
\begin{equation}\label{sigmaSalg}
\sigma_t (f) (\Psi)= \bA^0(\Sigma)^{\ii t} f (\Psi),
\end{equation}
for $\Psi=(\Sigma,\tilde\rho,\tilde\iota,\tilde\sigma)$, defines a time evolution on
the algebra $\C[\cS]$ introduced in \cref{ScatSec}.
\end{lem}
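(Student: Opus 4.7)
The plan is to establish the multiplicativity of the normalized amplitude under gluing and then use it to verify that $\sigma_t$ is a one-parameter group of algebra automorphisms of $\C[\cS]$. Most of the work lies in the first step, which is essentially a careful accounting of cells near the boundary $\Gamma$.

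First I would reduce the identity $\bA^0(\Sigma_1\cup_\Gamma \Sigma_2) = \bA^0(\Sigma_1)\bA^0(\Sigma_2)$ to the claim $\bA(\Sigma)\bA(\psi) = \bA(\Sigma_1)\bA(\Sigma_2)$, where $\psi$ is the topspin network on the common boundary $\Gamma$ and $\Sigma = \Sigma_1\cup_\Gamma \Sigma_2$. This reduction uses that the incoming boundary of $\Sigma_2$ is $\psi$, so $\bA^0(\Sigma_2) = \bA(\Sigma_2)/\bA(\psi)$, and that the incoming boundaries of $\Sigma_1$ and of $\Sigma$ coincide. By the cylinder-collar condition in \cref{def:spinFoam}, a neighborhood of $\Gamma$ in each $\Sigma_i$ is a product $\Gamma \times I_i$, so the cells of $\Sigma_i$ incident to $\Gamma$ are in bijection with the cells of $\Gamma$: each edge $e$ of $\Gamma$ contributes a half-face $f_e^{(i)} = e\times I_i$, and each vertex $v$ a half-edge $e_v^{(i)} = v\times I_i$. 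Under gluing, each pair of half-faces fuses into a single face of $\Sigma$ and each pair of half-edges into a single edge, while interior cells of each $\Sigma_i$ are untouched. The consistency conditions of \cref{def:spinFoam} force $\tilde\rho$ and $\tilde\iota$ on the two half-cells to agree (up to dualizations that cancel, since the collars $M_\epsilon$ and $\bar M'_\epsilon$ carry opposite orientations), so that each fused face contributes a single factor $\bA_f$ in $\bA(\Sigma)$, whereas $\bA(\Sigma_1)\bA(\Sigma_2)$ double-counts this same contribution as $\bA_{f_e^{(1)}}\bA_{f_e^{(2)}} = \bA_{f_e}^2$, and similarly for boundary edges.

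Combining this cell accounting with the hypothesis $\omega(\Sigma)=\omega(\Sigma_1)\omega(\Sigma_2)$ of \cref{weightmultipl} and the amplitude formula \cref{spinfoamA}, the excess factor in $\bA(\Sigma_1)\bA(\Sigma_2)/\bA(\Sigma)$ is precisely $\prod_{e\in E(\Gamma)} \bA_{f_e}(\tilde\rho) \prod_{v\in V(\Gamma)} \bA_{e_v}(\tilde\rho,\tilde\iota)$, which is exactly $\bA(\psi)$ by \cref{bA0}. Dividing the resulting identity $\bA(\Sigma_1)\bA(\Sigma_2) = \bA(\Sigma)\bA(\psi)$ by $\bA(\psi_0)\bA(\psi)$ yields the desired multiplicativity. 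For the second assertion, the axioms $\sigma_0 = \id$ and $\sigma_{t+s} = \sigma_t\circ\sigma_s$ are immediate from $\bA^0(\Sigma)^{\ii\cdot 0}=1$ and the exponent law, and invertibility is immediate with inverse $\sigma_{-t}$. The only nontrivial property is compatibility with convolution: unraveling $\sigma_t(f_1\star f_2)(\Psi) = \bA^0(\Sigma)^{\ii t}\sum_{\Psi = \Psi_1\Psi_2} f_1(\Psi_1)f_2(\Psi_2)$ with the sum over gluings $\Sigma = \Sigma_1\cup_\Gamma \Sigma_2$, the multiplicativity just established gives $\bA^0(\Sigma)^{\ii t} = \bA^0(\Sigma_1)^{\ii t}\bA^0(\Sigma_2)^{\ii t}$, which distributes inside the sum and identifies the expression with $(\sigma_t(f_1)\star\sigma_t(f_2))(\Psi)$.

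The main obstacle is the boundary bookkeeping in the first step, in particular verifying that the doubly-counted collar contributions reassemble precisely into $\bA(\psi)$ rather than into a modified expression with shifted or dualized labels. This requires a careful appeal to the consistency conditions of \cref{def:spinFoam} relating $\tilde\rho_{f_e}$ to $\rho_e$ and $\tilde\iota_{e_v}$ to $\iota_v$ on both the incoming and outgoing collars, together with the observation that the orientation reversal between $M_\epsilon$ and $\bar M'_\epsilon$ cancels between the two copies appearing in $\bA(\Sigma_1)\bA(\Sigma_2)$; once this cancellation is checked, the remaining steps are formal.
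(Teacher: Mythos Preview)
Your proposal is correct and follows essentially the same approach as the paper: a cell-by-cell accounting showing that the excess in $\bA(\Sigma_1)\bA(\Sigma_2)$ over $\bA(\Sigma)$ coming from the shared boundary is exactly $\bA(\psi)$, after which the time-evolution axioms follow formally from multiplicativity. The only cosmetic difference is the cell-decomposition convention: the paper keeps $\Gamma$ as a subcomplex of the glued $\Sigma$ (so faces do not merge and the double-counting is of the edge and vertex amplitudes along $\Gamma$), whereas you fuse the collar half-cells (so the double-counting is of the face amplitudes $\bA_{f_e}$ and edge amplitudes $\bA_{e_v}$); either convention yields the same excess factor $\prod_e \bA_{f_e}\prod_v \bA_{e_v}=\bA(\psi)$.
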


\begin{proof}
    The faces of $\Sigma$ are faces of either $\Sigma_1$ or $\Sigma_2$, so the factor $\prod_f \bA_f(\tilde\rho)$ in the (normalized) amplitude is a product $\prod_{f_1} \bA_{f_1}(\tilde\rho_1) \prod_{f_2} \bA_{f_2}(\tilde\rho_2)$. The edges and vertices of $\Sigma$ are those of $\Sigma_1$ and $\Sigma_2$, except for those that lie on the common boundary graph $\Gamma$, which are counted a single time instead of two. One then has
    \begin{equation*}
        \prod_e \bA_e(\tilde\rho,\tilde\iota) \prod_v \bA_v(\tilde\rho,\tilde\iota) =\frac{\prod_{e_1,v_1} \bA_{e_1}(\tilde\rho_1,\tilde\iota_1) \bA_{v_1}(\tilde\rho_1,\tilde\iota_1) \prod_{e_2,v_2} \bA_{e_2}(\tilde\rho_2,\tilde\iota_2)\bA_{v_2}(\tilde\rho_2,\tilde\iota_2) }{\prod_{e,v\in \Gamma} \bA_e(\tilde\rho,\tilde\iota) \bA_v(\tilde\rho,\tilde\iota)},
    \end{equation*}
    so that one has, for $\partial \Sigma_1=\Gamma_1 \cup \bar\Gamma$ and $\partial \Sigma_2=\Gamma \cup \Gamma_2$,
    \begin{align*}
            \bA^0(\Psi)
        & = \frac{\omega(\Sigma) \prod_f \bA_f(\tilde\rho) \prod_e\bA_e(\tilde\rho,\tilde\iota) \prod_v \bA_v(\tilde\rho,\tilde\iota)}{\prod_{e,v\in\Gamma_1}  \bA_{f_e}(\tilde\rho) \bA_{e_v}(\tilde\rho,\tilde\iota)} \\
        & = \frac{ \omega(\Sigma_1) \omega(\Sigma_2) \prod_{e_1} \bA_{e_1}(\tilde\rho_1,\tilde\iota_1) \prod_{e_2} \bA_{e_2}(\tilde\rho_2,\tilde\iota_2) }{\prod_{e,v\in\Gamma_1} \bA_{f_e}(\tilde\rho) \bA_{e_v}(\tilde\rho,\tilde\iota) \prod_{e,v\in \Gamma} \bA_{f_e}(\tilde\rho) \bA_{e_v}(\tilde\rho,\tilde\iota)} \\
        & = \bA^0(\Psi_1)\bA^0(\Psi_2).
    \end{align*}
    One can then check directly that \cref{sigmaSalg}
defines a time evolution on the algebra $\C[\cS]$.
\end{proof}

One can again modify the time evolution to include the case of degenerate geometries, as
in the case of the groupoid algebra. In fact, using the same modification of the time evolution by
a term of the form $\chi(\fW(\Psi))^t$ will still give rise to a time evolution.

\begin{cor}\label{timeSalgW}
Setting
\begin{equation}\label{sigmaSalg2}
\sigma_t (f) (\Psi)= \bA^0(\Sigma)^{\ii t} \chi(\fW(\Psi))^t f (\Psi),
\end{equation}
defines a time evolution on the algebra $\C[\cS]$ where degenerate topspin foams
have been included in the category $\cS$.
\end{cor}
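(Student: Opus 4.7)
The plan is to reduce the statement to the previous lemma by verifying that the extra factor $\chi(\fW(\Psi))^t$ is itself multiplicative under the gluing of topspin foams, and then combine the two multiplicative factors. Concretely, I would need to check two things: (a) that $\sigma_{t+s} = \sigma_t \circ \sigma_s$ and $\sigma_0 = \id$, which is automatic because the modifying factor $\bA^0(\Sigma)^{\ii t}\chi(\fW(\Psi))^t$ is an exponential in $t$ multiplying the function value pointwise; and (b) that $\sigma_t$ respects the convolution product, i.e.\ $\sigma_t(f\star g) = \sigma_t(f)\star\sigma_t(g)$. By \cref{multiplAmp} the normalized amplitudes satisfy $\bA^0(\Sigma_1\cup_\Gamma \Sigma_2)=\bA^0(\Sigma_1)\bA^0(\Sigma_2)$, so (b) reduces to showing the analogous factorization
\begin{equation*}
    \chi(\fW(\Psi_1\cup_\Gamma\Psi_2)) = \chi(\fW(\Psi_1))\,\chi(\fW(\Psi_2)).
\end{equation*}

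To establish this multiplicativity, I would decompose the edges of the glued two-complex $\Sigma=\Sigma_1\cup_\Gamma\Sigma_2$ into edges lying in the interior of $\Sigma_1$, edges lying in the interior of $\Sigma_2$, and edges lying in the common boundary graph $\Gamma$. For an interior edge of $\Sigma_i$, the set of faces of $\Sigma$ incident to it coincides with the set of faces of $\Sigma_i$ incident to it, so the corresponding Wirtinger factor in $\fW(\Psi)$ equals the one appearing in $\fW(\Psi_i)$. For an edge $e\subset\Gamma$, the faces of $\Sigma$ incident to $e$ split as a disjoint union of faces from $\Sigma_1$ and faces from $\Sigma_2$, so the Wirtinger factor at $e$ in $\fW(\Psi)$ is a product of the factor at $e$ in $\fW(\Psi_1)$ and the one in $\fW(\Psi_2)$, though possibly with the two contributions interleaved in an order depending on how one lists the incident faces.

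The potential obstacle is precisely this ordering: $\fW(\Psi)$ takes values in the non-abelian group $S_n$, so one cannot in general rearrange the product $\fW(\Psi_1\cup_\Gamma\Psi_2)$ as $\fW(\Psi_1)\fW(\Psi_2)$ inside $S_n$. The key observation that resolves this is that we only need the factorization \emph{after} applying $\chi$, and $\chi:S_\infty\to U(1)$ is a multiplicative character into an abelian group, so the value $\chi(\fW(\Psi))$ depends only on the multiset of generators appearing in the product, not on their order. Thus $\chi(\fW(\Psi)) = \chi(\fW(\Psi_1))\chi(\fW(\Psi_2))$ holds as an identity in $U(1)$.

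Given the two multiplicativity statements, the verification of (b) is a direct calculation:
\begin{align*}
    \sigma_t(f\star g)(\Psi) &= \bA^0(\Sigma)^{\ii t}\chi(\fW(\Psi))^t \sum_{\Psi=\Psi_1\Psi_2} f(\Psi_1)g(\Psi_2) \\
    &= \sum_{\Psi=\Psi_1\Psi_2} \bA^0(\Sigma_1)^{\ii t}\chi(\fW(\Psi_1))^t f(\Psi_1)\cdot \bA^0(\Sigma_2)^{\ii t}\chi(\fW(\Psi_2))^t g(\Psi_2) \\
    &= (\sigma_t(f)\star\sigma_t(g))(\Psi),
\end{align*}
and the one-parameter group law in (a) follows by inspection from the additivity of the exponent $t$. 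Together these give the required time evolution on the extended algebra $\C[\cS]$ including degenerate topspin foams.
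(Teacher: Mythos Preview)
Your proposal is correct and reaches the same multiplicativity of the modifying factor that the paper needs, but the justification at the glued boundary differs in one point. You worry about ordering in $S_n$ at the common-boundary edges and resolve it by invoking that $\chi$ lands in the abelian group $U(1)$. The paper instead uses a geometric observation: because each topspin foam is a product $\Gamma\times[0,\epsilon)$ near its boundary, an edge $e\subset\Gamma$ in the glued complex has exactly two adjacent faces $f_e\supset(-\epsilon,0]\times e$ and $f_e'\supset e\times[0,\epsilon)$, which carry the \emph{same} label $\tilde\sigma_{f_e}=\tilde\sigma_{f_e'}$ by compatibility at the boundary; hence the Wirtinger contribution at such an edge is $\tilde\sigma_{f_e}\tilde\sigma_{f_e'}^{-1}=1$, and $\fW(\Psi)=\fW(\Psi_1)\fW(\Psi_2)$ already holds before applying $\chi$. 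Your route via the character is a valid alternative; note, however, that in the paper's setup degenerate topspin foams are only introduced for \emph{cyclic} coverings (\cref{def:degenerateTopspinNetworks} and \cref{deggeomSec}), so the labels live in the abelian group $\Z/n\Z$ and the ordering concern you raise is in fact moot.
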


\begin{proof}
When gluing two (degenerate) topspin foams along a common boundary, $\Psi=\Psi_1 \cup_\psi \Psi_2$,
the term $\fW(\Psi)$ of \cref{WSigma} spits multiplicatively as $\fW(\Psi)=\fW(\Psi_1)\fW(\Psi_2)$. This
follows from the fact that the spin foams are products $\Gamma \times [0,\epsilon)$ near the boundary,
so that the Wirtinger relation coming from the edges along which the two spin foams are glued
together are trivially satisfied, being just the equality between the $\tilde\sigma_{f_e}$ and
$\tilde\sigma_{f_e'}$ of the two adjacent faces $f_e \supset (-\epsilon,0]\times e$ and $f_e' \supset
e \times [0,\epsilon)$.
\end{proof}

Notice the conceptual difference between the time evolution on the groupoid algebra
$\C[\cG]$ and the one on the semigroupoid algebra $\C[\cS]$ considered here. The
time evolution on the groupoid algebra measures the failure of the spin foam amplitude
to be invariant under covering moves. On the other hand the time evolution on the semigroupoid
algebra $\C[\cS]$ measures how large the amplitude is on different spin foams with the same
spin network boundary.
These very different roles of these two time evolutions explain why, as observed
in \cite{CoveringsCorrespondencesNCG}, when one combines
1-morphisms and 2-morphisms in a 2-category, it is usually a nontrivial problem to
find a time evolution that is simultaneously compatible with both the vertical and the
horizontal composition of 2-morphisms.

\subsection{Equilibrium states}\label{KMSGSsec}

We now consider explicitly the problem of the existence of low temperature KMS
states of Gibbs type for the time evolutions discussed above. We work with the
representations of these algebras described in \cref{repsGSsec}.

\begin{lem}\label{lemHGev}
The infinitesimal generator of the time evolution on the groupoid algebra of
spin foams is the operator $\bH$ acting on the Hilbert space $\cH_\Psi$ by
$\bH \xi(\Psi') = \log \bA(\Psi')\, \xi(\Psi')$, for $\Psi'\sim \Psi$ under the
covering moves equivalence.
\end{lem}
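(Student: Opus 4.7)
The plan is to exhibit the candidate operator explicitly, namely the multiplication operator $\bH\xi(\Psi')=\log\bA(\Psi')\,\xi(\Psi')$ on the Hilbert space $\cH_\Psi=\ell^2(\cC_\Psi)$, and to verify directly the implementation identity
\begin{equation*}
\pi_\Psi(\sigma_t(f))=\ee^{\ii t\bH}\,\pi_\Psi(f)\,\ee^{-\ii t\bH},\qquad t\in\R,
\end{equation*}
for all finitely supported $f\in\C[\cG]$. Since the amplitudes $\bA(\Psi')$ are assumed to be positive real numbers, $\log\bA(\Psi')$ is a well-defined real-valued function on $\cC_\Psi$, so $\bH$ is a (generally unbounded) self-adjoint multiplication operator, essentially self-adjoint on the dense subspace of finitely supported functions, and $\ee^{\ii t\bH}$ acts as multiplication by $\bA(\Psi')^{\ii t}$.

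First I would unwind the two sides of the conjugation identity on a vector $\xi$ supported in $\cC_\Psi$. Using the representation formula of \cref{repsGSsec}, evaluated at a point $\Psi'\in\cC_\Psi$, one has
\begin{equation*}
\bigl(\ee^{\ii t\bH}\pi_\Psi(f)\ee^{-\ii t\bH}\,\xi\bigr)(\Psi')
=\bA(\Psi')^{\ii t}\sum_{\Psi''\sim\Psi'} f(\Psi',\Psi'')\,\bA(\Psi'')^{-\ii t}\,\xi(\Psi''),
\end{equation*}
and pulling the two scalar factors together produces the coefficient $(\bA(\Psi')/\bA(\Psi''))^{\ii t}$, which by the definition \cref{Gtsfdyn} is exactly $\sigma_t(f)(\Psi',\Psi'')$. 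Thus the right-hand side agrees term-by-term with $\pi_\Psi(\sigma_t(f))\xi(\Psi')$, establishing the implementation identity.

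From here, the standard one-parameter group argument identifies $\bH$ as the infinitesimal generator: differentiating the implementation identity at $t=0$ in the strong operator topology on matrix coefficients with $f$ a delta function $\delta_{(\Psi',\Psi'')}$ recovers $\log(\bA(\Psi')/\bA(\Psi''))$ as the commutator of $\bH$ with $\pi_\Psi(f)$, which is consistent only with $\bH$ being multiplication by $\log\bA$ up to an additive scalar (and the scalar is fixed by the convention $\bH\xi(\Psi')=\log\bA(\Psi')\xi(\Psi')$). The only nontrivial point is technical rather than conceptual: one must check that $\bH$ is essentially self-adjoint and that $\ee^{\pm\ii t\bH}$ preserves a common core on which $\pi_\Psi(f)$ acts, but this is immediate since $\bH$ is a real multiplication operator and finitely supported functions form an invariant core for both $\bH$ and $\pi_\Psi(f)$.
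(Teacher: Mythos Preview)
Your proof is correct and follows exactly the same approach as the paper: the paper's proof consists of the single line ``One checks that $\pi_\Psi(\sigma_t(f)) = \ee^{\ii t \bH} \pi_{\Psi}(f) \ee^{-\ii t \bH}$,'' and you have simply written out this verification explicitly (together with some standard remarks on self-adjointness that the paper omits).
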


\begin{proof}
One checks that
$\pi_\Psi(\sigma_t(f)) = \ee^{\ii t \bH} \pi_{\Psi}(f) \ee^{-\ii t \bH}$.
\end{proof}

This means that, formally, low temperature KMS states of Gibbs form should be
given by expressions of the form
\begin{equation}\label{KMSGev}
\varphi_\beta (f)=\frac{\Tr(\pi_\Psi(f) \ee^{-\beta \bH})}{\Tr(\ee^{-\beta \bH})}.
\end{equation}
However, this expression makes sense only under the assumption that
$\Tr(\ee^{-\beta \bH})< \infty$ for sufficiently large $\beta$. This brings about
the problem of multiplicities in the spectrum, which we will encounter in
a more dramatic form in the 2-category case we discuss later.

\smallskip

We are working with spin networks and spin foams that are defined, respectively, by embedded graphs in the three-sphere and embedded two-complexes in $S^3\times [0,1]$, but the part of the amplitudes \cref{spinfoamA} consisting of the terms
\begin{equation}\label{amplcomb}
  \bA_{\rm comb}(\Psi):=  \prod_f \bA_f(\tilde\rho_f) \prod_e \bA_e(\tilde\rho_{F(e)}, \tilde\iota_e) \prod_v \bA_v(\tilde\rho_{F(v)},\tilde\iota_{E(v)})
\end{equation}
depends only on the combinatorial structure of the two-complex $\Sigma$, not on the topologically different ways in which it can be embedded in $S^3\times [0,1]$. Thus, it is clear that, if we only
include the terms \eqref{amplcomb} in the amplitude, we would have infinite multiplicities
in the spectrum, as all the possible different topological embeddings of the same combinatorial
$\Sigma$ would have the same amplitude.

\smallskip

The only term that can distinguish topologically inequivalent embeddings, and resolve these infinite multiplicities, is therefore the weight factor $\omega(\Sigma)$. This should be thought of as a generalization of invariants of two-knots embedded in four-dimensional space. This leads
naturally to the following questions on the existence of an invariant with the following properties.

\smallskip

\begin{ques}\label{invariantofgraphs}
Is it possible to construct a topological invariant $\omega(\Gamma)$ of embedded graphs $\Gamma\subset S^3$ with the following properties?
\begin{enumerate}
\item $\omega(\Gamma)$ only depends on the ambient isotopy class of $\Gamma \subset S^3$.
\item The values of $\omega(\Gamma)$ form discrete set of positive real numbers $\{ \alpha(n) \}_{n\in \N}\subset \R^*_+$, which grows exponentially, $\alpha(n) \geq O(\ee^{cn})$, for sufficiently large $n$
and for some $c>0$.
\item The number of embedded graph $\Gamma\subset S^3$ combinatorially equivalent to a given combinatorial graph $\Gamma_0$, with a fixed value of $\omega(\Gamma)$ is finite and grows like
\begin{equation}
\# \{ \Gamma \subset S^3\,\,:\,\, \Gamma \simeq \Gamma_0, \,\, \omega(\Gamma)=\alpha(n) \} \sim O(\ee^{\kappa n }),
\end{equation}
for some $\kappa >0$.
\end{enumerate}
Similarly, is it possible to construct a topological invariant $\omega(\Sigma)$ of embedded two-complexes $\Sigma \subset S^3\times [0,1]$, with the same properties?
\end{ques}

We do not attempt to address this problem in the present paper. However, we see that an invariant $\omega(\Gamma)$, respectively $\omega(\Sigma)$, with the properties listed above will suffice to obtain KMS states of the desired form \eqref{KMSGev}, if the part of the spin foam amplitude \eqref{amplcomb} suffices to distinguish the combinatorics of graphs. More precisely, we have the following situation.

\begin{prop}\label{finmultH}
Consider a time evolution of the form \eqref{Gtsfdyn}, with an amplitude of the form \eqref{spinfoamA}, where the weight factor $\omega(\Sigma)$ has the properties listed in \cref{invariantofgraphs}, and the combinatorial part of the amplitude \eqref{amplcomb} also has the property that the values grow exponentially and that each value
is assumed by only a finite number of combinatorially different $\Sigma$, which grows at most exponentially. Then the operator $\bH$ has the property that, for sufficiently large $\beta >0$,
$\Tr(\ee^{-\beta \bH})<\infty$.
\end{prop}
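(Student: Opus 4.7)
The plan is to exploit the diagonal structure of $\bH$ and reduce the trace computation to a counting argument controlled by the stated exponential growth hypotheses.

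By \cref{lemHGev}, in the representation $\pi_\Psi$ on $\cH_\Psi = \ell^2(\cC_\Psi)$, the operator $\bH$ acts diagonally with eigenvalues $\log \bA(\Psi')$ indexed by $\Psi' \in \cC_\Psi$. Since the amplitudes are positive by hypothesis, and using the factorization $\bA(\Psi') = \omega(\Sigma') \bA_{\mathrm{comb}}(\Psi')$ coming from \cref{spinfoamA} and \cref{amplcomb},
\begin{equation*}
\Tr(\ee^{-\beta \bH}) \;=\; \sum_{\Psi' \in \cC_\Psi} \bA(\Psi')^{-\beta} \;=\; \sum_{\Psi' \in \cC_\Psi} \omega(\Sigma')^{-\beta}\, \bA_{\mathrm{comb}}(\Psi')^{-\beta}.
\end{equation*}

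Next I would reorganize this sum by the joint spectrum of $\omega$ and $\bA_{\mathrm{comb}}$. Write $\{\alpha(n)\}_{n \in \N}$ for the values of $\omega$, $\{\gamma(m)\}_{m \in \N}$ for the values of $\bA_{\mathrm{comb}}$, and $N(n,m)$ for the number of $\Psi' \in \cC_\Psi$ with $\omega(\Sigma') = \alpha(n)$ and $\bA_{\mathrm{comb}}(\Psi') = \gamma(m)$, so that the trace becomes $\sum_{n,m} N(n,m)\,\alpha(n)^{-\beta}\gamma(m)^{-\beta}$. The crux is an upper bound on $N(n,m)$, obtained by decomposing each $\Psi'$ into (i) its combinatorial type together with the decorations $\tilde\rho',\tilde\iota'$, (ii) an ambient isotopy class of embeddings of that combinatorial type in $S^3\times[0,1]$, and (iii) a topological labeling $\tilde\sigma'$. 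The hypothesis on $\bA_{\mathrm{comb}}$ bounds the count of combinatorial decorated data producing a fixed $\gamma(m)$ by $O(\ee^{\kappa_1 m})$, while property~(3) of \cref{invariantofgraphs} bounds the count of isotopy classes of embeddings of a fixed combinatorial type with $\omega=\alpha(n)$ by $O(\ee^{\kappa_2 n})$. For each fixed embedded and decorated $\Sigma'$, the admissible $\tilde\sigma'$ lie inside $\Hom(\pi_1((S^3\times[0,1])\smallsetminus\Sigma'), S_k)$, a finite set for each fixed covering degree $k$, whose cardinality grows only polynomially in the number of generators of the Wirtinger presentation of \cref{Wirtinger}.

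Combining these bounds with $\alpha(n) \ge \ee^{cn}$ and $\gamma(m) \ge \ee^{c'm}$ yields
\begin{equation*}
\Tr(\ee^{-\beta \bH}) \;\le\; C \sum_{n,m \ge 0} \ee^{(\kappa_2 - \beta c)\,n + (\kappa_1 - \beta c')\,m},
\end{equation*}
a product of two geometric series, convergent as soon as $\beta > \max(\kappa_2/c,\ \kappa_1/c')$, which gives the desired conclusion.

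The principal obstacle is controlling the contribution of the topological labels $\tilde\sigma'$ within $\cC_\Psi$, since $\tilde\sigma'$ does not appear explicitly in $\bA(\Psi')$ and is constrained only by membership in the covering-move equivalence class, equivalently by producing the same branched cover $W$. The sketch above assumes that, after fixing the covering degree via stabilization, the admissible $\tilde\sigma'$ contribute at most a subexponential factor, which is absorbed by the geometric decay supplied by the amplitudes. A rigorous treatment would need a uniform count of Wirtinger-compatible representations $\pi_1((S^3\times[0,1])\smallsetminus\Sigma')\to S_k$ whose associated branched cover is $W$, together with a careful bookkeeping of stabilizations; this is the step where the argument requires the most care, and it is plausibly the place where one might need to strengthen \cref{invariantofgraphs} by an analogous control of the monodromy data.
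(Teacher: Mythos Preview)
Your overall strategy matches the paper's: write the trace as $\sum_{\Psi'\in\cC_\Psi}\bA(\Psi')^{-\beta}$, factor the amplitude as $\omega(\Sigma')\bA_{\rm comb}(\Psi')$, reorganize as a double sum over the discrete value sets $\{\alpha(n)\}$ and $\{A_m\}$ with multiplicities, and dominate by a product of geometric series converging for $\beta>\max\{\kappa,\lambda\}/c$. The paper carries out exactly this computation.

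The one substantive divergence is your treatment of the topological labels $\tilde\sigma'$, which you flag as ``the principal obstacle'' and leave as an incomplete subexponential estimate. The paper instead disposes of this in a single remark: because the Hilbert space is $\cH_\Psi=\ell^2(\cC_\Psi)$, every $\Psi'$ in the sum lies in the covering-move equivalence class of the fixed $\Psi$, and for such $\Psi'$ the topological data $\tilde\sigma'$ are uniquely determined (they are the data forced by the covering moves starting from the given $\tilde\sigma$ of $\Psi$), so they contribute no additional multiplicity at all. In other words, the forgetful map $\cC_\Psi\to\{(\Sigma',\tilde\rho',\tilde\iota')\}$ is taken to be injective, and the count $N(n,m)$ is already bounded by $N^\omega(n)\,N^{\bA_{\rm comb}}(m)$ without any separate control on monodromy data. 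This is the observation that closes your argument; once you invoke it, your sketch becomes the paper's proof verbatim, and the caveats in your final paragraph become unnecessary.
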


\begin{proof}
We denote by
$$ N^\omega(n) :=\#\{ \Sigma \subset S^3\times [0,1]\,\,:\,\, \Sigma \simeq \Sigma_0, \,\, \omega(\Gamma)=\alpha(n) \}. $$
Also, if $\{ A_n \}_{n\in \N} \subset \R^*_+$ is an enumeration of the discrete set of values of the
combinatorial amplitude $\bA_{\rm comb}$ of \eqref{amplcomb}, we denote by
$$ N^{\bA_{\rm comb}}(n) :=  \#\{ \Psi=(\Sigma,\tilde\rho,\tilde\iota)\,:\, \bA_{\rm comb}(\Psi)=A_n \}. $$
We are assuming that $A_n \geq O(\ee^{cn})$ for sufficiently large $n$ and that
$N^{\bA_{\rm comb}}(n) \leq O(\ee^{\lambda n})$ for some $\lambda >0$
and $N^\omega(n) \leq O(\ee^{\kappa n})$ for some $\kappa >0$.
We then have in this case that the convergence of  the series computing $\Tr(\ee^{-\beta \bH})$ is dominated by the convergence of
$$ \sum_{n,m}  N^\omega(n) N^{\bA_{\rm comb}}(m) (\alpha(n) A_m)^{-\beta} \leq
\sum_{n,m} \exp( \kappa n + \lambda m -\beta c (n+m)),   $$
which converges for $\beta > \max\{\kappa,\lambda\}/c$. Notice that here we do not
have to worry about the additional presence of the topological labels $\sigma_i$, since
we are looking at the representation on the Hilbert space $\cH_\Psi$ spanned by
topspin foams equivalent to a given $\Psi$, so the $\sigma_i$ on each element in
the equivalence class are uniquely determined and they do not contribute further
multiplicities to the spectrum.
\end{proof}

\section{A 2-category of spin foams and fibered products}\label{fiberSec}

In this section we give a construction of a 2-category of topspin foams where
the 1-morphisms will still be defined by pairs of topspin networks
that are equivalent under covering moves as in \cref{GgrpdSec},
but the horizontal composition will
be different. In particular, the 1-morphisms will no longer be invertible and the
resulting category will no longer be the groupoid of the equivalence relation
generated by covering moves. However, the new composition product
will be constructed in such a way as to correspond to the fibered product of
the branched coverings. This will provide a composition product analogous
to the composition of correspondences used in $KK$-theory. Moreover, the
topspin foam cobordisms as in \cref{ScatSec} will give the 2-morphisms
of this 2-category.

\subsection{$KK$-classes, correspondences, and $D$-brane models}

The construction of \cite{LongIndexTheoremFoliations} of geometric correspondences
realizing KK-theory classes shows that, given manifolds $X_1$ and
$X_2$, classes in $KK(X_1,X_2)$ are realized by geometric data $(Z,E)$
of a manifold $Z$ with submersions $X_1 \leftarrow Z \rightarrow X_2$
and a vector bundle $E$ on $Z$. We write $kk(Z,E)$ for the element
in $KK(X_1,X_2)$ determined by the data $(Z,E)$.
It is shown in \cite{LongIndexTheoremFoliations} that the Kasparov product
$x \circ y \in KK(X_1,X_3)$, for $x=kk(Z,E)\in KK(X_1,X_2)$ and
$y=kk(Z',E')\in KK(X_2,X_3)$, is given by the fibered product
$x\circ y = kk(Z\circ Z',E\circ E')$, where
$$ Z\circ Z'=Z\times_{X_2} Z' \ \ \text{ and } \ \
E\circ E'= \pi_1^* E \times \pi_2^* E'. $$

This geometric construction of KK-theory classes in terms
of geometric correspondences, with the Kasparov product
realized by a fibered product, was recently used extensively
in the context of $D$-brane models in high-energy physics \cite{NCCorrespondencesDBranes}.

The use of $KK$-theory in string theory arises from the fact that
$D$-brane charges are classified topologically by $K$-theory classes,
while $D$-branes themselves are topological $K$-cycles, which define
Fredholm modules and $K$-homology classes.  Thus, $KK$-theory
provides the correct bivariant setting that combines $K$-theory and
$K$-homology. As explained in \cite{NCCorrespondencesDBranes}, the Kasparov product,
realized geometrically as the fibered product of correspondences,
then gives a kind of Fourier--Mukai transform, which is used to provide
a treatment for $T$-duality based on noncommutative geometry. More
generally, for $KK$-classes for noncommutative algebras of open
string fields, the $KK$-theory product should provide the
operator product expansion on the underlying open string vertex
operator algebras.

We are dealing here with a very different type of high-energy physics
model, based on the spin networks and spin foams of loop quantum
gravity. However, we show that here also we can introduce a similar
kind of fibered product on spin foams, in addition
to the usual composition product given by gluing spin foams
along a common boundary spin network.

When we consider spin foams without additional data of matter and
charges, the fibered product will be only of the three-manifolds and
cobordisms defined as branched coverings by the topological
data of the topspin foams and topspin networks. However, if
one further enriches the spin networks and spin foams with
matter data as we outline in the last section, then the fibered
product will also involve bundles over the three-manifolds and
four-manifolds, and will be much more similar to the one considered
in the $D$-branes setting.

\subsection{A 2-category of low dimensional geometries}\label{sec:twoCatOfLowDimGeometries}

We recall here a construction
introduced in \cite{CoveringsCorrespondencesNCG}, for which we give a reinterpretation in terms of
spin networks and spin foams.

In the 2-category of low dimensional geometries constructed in \cite{CoveringsCorrespondencesNCG},
one considers as objects the embedded graphs (up to ambient isotopy) $\Gamma \subset S^3$.

The 1-morphisms between two embedded graphs $\Gamma, \Gamma'\in S^3$
are given by three-manifolds that can be realized as branched coverings of $S^3$
$$\Mor_\cC(\Gamma, \Gamma')=\{ \Gamma  \subset \hat\Gamma \subset S^3 \stackrel{p}{\leftarrow} M \stackrel{p'}{\rightarrow} S^3 \supset \hat\Gamma' \supset \Gamma' \}, $$
with branch locus given by embedded graphs $\hat\Gamma$ and $\hat \Gamma'$, respectively containing $\Gamma$ and $\Gamma'$ as subgraphs.

The composition of 1-morphisms
\begin{equation*}
    \Gamma  \subset \hat\Gamma \subset S^3 \stackrel{p}{\leftarrow} M \stackrel{p_1'}{\rightarrow} S^3 \supset \hat\Gamma_1' \supset \Gamma'
    \qquad \text{and} \qquad
    \Gamma' \subset \hat\Gamma_2' \subset S^3 \stackrel{p_2'}{\leftarrow} M' \stackrel{p''}{\rightarrow} S^3 \supset \hat\Gamma'' \supset \Gamma''
\end{equation*}
is given by the fibered product
\begin{equation*}
    \tilde M = M \times_{S^3} M' = \{ (x,y) \in M \times M' : p_1'(x) = p_2'(y) \},
\end{equation*}
which is also a branched cover
\begin{equation*}
    \Gamma \subset \hat\Gamma \cup p((p_1')^{-1}(\hat\Gamma_2')) \subset S^3 \leftarrow \tilde M
    \to
    S^3 \supset \hat\Gamma'' \cup p''((p_2')^{-1}(\hat\Gamma_1')) \supset \Gamma''.
\end{equation*}

The 2-morphisms are given by branched cover cobordisms
\begin{equation*}
    \Sigma \subset \hat\Sigma \subset S^3\times [0,1] \leftarrow W \to S^3 \times [0,1] \supset \hat \Sigma' \supset \Sigma'
\end{equation*}
with
\begin{equation*}
    \partial W = M_0 \cup \bar M_1, \quad
    \partial \Sigma = \Gamma_0 \cup \bar \Gamma_1, \quad
    \partial \hat\Sigma = \hat\Gamma_1 \cup \overline{\hat\Gamma'_1}, \quad
    \partial \Sigma' = \Gamma_0' \cup \bar \Gamma_1', \quad
    \partial \hat\Sigma' = \hat\Gamma'_0 \cup \overline{\hat\Gamma'_1},
\end{equation*}
where $\Sigma, \Sigma', S, S'$ are two-complexes embedded in $S^3\times [0,1]$. The sphere $S^3$, seen as a trivial covering of itself over the empty graph, gives the unit for composition.

The vertical composition is given by gluing cobordisms along a common boundary,
$$ W_1 \bullet W_2 = W_1 \cup_M W_2, $$
while the horizontal composition is again given by the fibered product
$$ W_1 \circ W_2 = W_1 \times_{S^3 \times [0,1]} W_2, $$
which is a branched cover of $S^3\times [0,1]$, with branch loci
\begin{equation*}
    S \cup p((p_1')^{-1}(S_2')) \qquad \text{and} \qquad S'' \cup p''((p_2')^{-1}(S_1')).
\end{equation*}
The cylinder $S^3\times [0,1]$ with $S=\emptyset$ gives the unit for horizontal composition, while
the identity 2-morphisms for vertical composition are the cylinders $M\times [0,1]$.

All the maps here are considered in the PL category. In dimension three and four, one can
always upgrade from PL to smooth, so we can regard this as a description of smooth
low dimensional geometries.
More precisely, one should also allow for orbifold geometries as well as smooth geometries in the
three-manifolds $M$ and four-manifolds $W$, which can be obtained as fibered products as above.

\subsection{A 2-category of topspin networks and topspin foams}\label{2catLQGsec}

The 2-category  we introduce here is based
on the construction given in \cite{CoveringsCorrespondencesNCG} of a 2-category of low dimensional
geometries, recalled here in \cref{sec:twoCatOfLowDimGeometries} above.
We call the resulting 2-category the \textit{loop quantum gravity 2-category}, because
of the relevance of spin networks and spin foams to loop quantum gravity,
and we denote it with the notation $\cL(G)$, where $G$ stands for the
choice of the Lie group over which to construct spin networks and spin foams.

\begin{thm}\label{2catLQG}
    The 2-category of low dimensional geometries constructed in \cite{CoveringsCorrespondencesNCG}
    extends to a 2-category $\cL(G)$ of  topspin networks and topspin foams
    over a Lie group $G$.
\end{thm}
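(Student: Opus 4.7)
The plan is to build $\cL(G)$ by decorating each cell of the 2-category of low dimensional branched-cover geometries from \cite{CoveringsCorrespondencesNCG} (as recalled in \cref{sec:twoCatOfLowDimGeometries}) with the representation-theoretic labels $\rho, \iota$ and the monodromy data $\sigma$, and then showing that the underlying composition operations lift to this enriched setting. Essentially, one upgrades each embedded graph or two-complex in the construction to a topspin network or topspin foam in the sense of \cref{def:topspinNetwork} and \cref{def:topspinFoam}, using the convention (already used in the discussion of nested graphs following \cref{def:topspinNetwork}) that whenever the branch locus is enlarged to a graph $\hat\Gamma \supset \Gamma$, the extra edges carry trivial topological labels $\sigma_i = 1$ and identity spin/intertwiner data.

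Concretely, I would proceed level by level. The objects of $\cL(G)$ are the topspin networks $\psi = (\Gamma, \rho, \iota, \sigma)$ with $\Gamma \subset S^3$. A 1-morphism $\psi \to \psi'$ is a correspondence
\[
\Gamma \subset \hat\Gamma \subset S^3 \xleftarrow{p} M \xrightarrow{p'} S^3 \supset \hat\Gamma' \supset \Gamma'
\]
together with extensions of $(\rho, \iota, \sigma)$ and $(\rho', \iota', \sigma')$ to topspin-network data on $\hat\Gamma$ and $\hat\Gamma'$, where the branched coverings $p, p'$ are specified by the (extended) permutation representations. A 2-morphism is a topspin foam cobordism in the sense of \cref{def:topspinFoam}, similarly enriched with the extra two-complex strata required to realize its two sides as branched cover cobordisms of $S^3 \times [0,1]$. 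The identities are the sphere $S^3$ with empty branching for composition of 1-morphisms and the cylinder $M \times [0,1]$ for vertical composition of 2-morphisms, all with trivial decorations, exactly as at the underlying level.

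Next I would verify that the three composition operations preserve the enrichment. For the horizontal composition of 1-morphisms, the fibered product $\tilde M = M \times_{S^3} M'$ is branched along $\hat\Gamma \cup p((p_1')^{-1}(\hat\Gamma_2'))$ on one side, and the corresponding monodromy representation is the natural one into $S_{nm}$ induced by the pullbacks of $\hat\sigma$ and $\hat\sigma'$; Wirtinger relations are preserved because they are pullback relations. The labels $\rho, \iota$ propagate by pullback along $p$ and $p'$ on the pieces inherited from $\hat\Gamma, \hat\Gamma'$, and by the trivial-label convention on the new strands of the branch locus. Vertical composition of 2-morphisms is the gluing along a common topspin-network boundary already discussed in \cref{ScatSec}; compatibility of the decorations at the seam is guaranteed by the collar conditions in \cref{def:spinFoam}. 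Horizontal composition of 2-morphisms is the fibered product $W \times_{S^3 \times [0,1]} W'$ of branched cover cobordisms, treated exactly as in the 1-morphism case but one dimension higher using the Wirtinger presentation for two-complexes from \cref{topspinfoamD}.

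The main obstacle is the exchange identity \cref{verthorcompat} at the decorated level. At the undecorated level it is part of the statement of \cite{CoveringsCorrespondencesNCG} that
\[
(W_1 \bullet V_1) \circ (W_2 \bullet V_2) = (W_1 \circ W_2) \bullet (V_1 \bullet V_2)
\]
holds as branched cover cobordisms; the task here is to check that the $(\tilde\rho, \tilde\iota, \tilde\sigma)$ produced on either side agree. Both sides are obtained by composing pullback along covering maps with restriction to collar neighborhoods and trivial extensions, and these operations commute with each other, so the equality follows once one is careful about how the branch loci are extended at each stage. A secondary but necessary point is that everything is well-defined modulo the equivalences generated by stabilization, topological covering moves (\cref{sec:coveringMoves}), and the geometric covering moves (\cref{sec:consistencyConditions}); this ensures $\cL(G)$ is a 2-category whose cells are topspin networks/foams rather than presentations thereof, and it is what ultimately forces the trivial-label convention on the added edges to be the right choice.
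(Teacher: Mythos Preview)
Your overall architecture matches the paper's: objects are topspin networks, 1-morphisms are branched-cover correspondences carrying topspin data on the branch loci, 2-morphisms are topspin-foam branched-cover cobordisms, horizontal composition is fibered product, vertical composition is gluing, and the exchange law is inherited from \cite{CoveringsCorrespondencesNCG}. That is indeed the proof.

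However, your labeling conventions diverge from the paper's in a way that matters. The paper does \emph{not} use the trivial-label convention you invoke. A 1-morphism in $\cL(G)$ is a pair $\hat\psi=(\hat\Gamma,\hat\rho,\hat\iota,\hat\sigma)$, $\hat\psi'=(\hat\Gamma',\hat\rho',\hat\iota',\hat\sigma')$ with arbitrary labels on $\hat\Gamma\smallsetminus\Gamma$ and $\hat\Gamma'\smallsetminus\Gamma'$, subject only to the restriction conditions $\hat\rho|_\Gamma=\rho$, $\hat\iota|_\Gamma=\iota$, $\hat\sigma_i=\sigma_i$ on strands of $D(\Gamma)$. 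The branching topspin networks are genuine extra data, not canonical extensions of $\psi,\psi'$. Correspondingly, under horizontal composition the paper labels the enlarged branch locus $\tilde\Gamma=\hat\Gamma\cup p\,p_1^{-1}(\hat\Gamma'_2)$ by $\hat\rho$ on the $\hat\Gamma$-edges and $\hat\rho'_2$ on the $p\,p_1^{-1}(\hat\Gamma'_2)$-edges, with the \emph{tensor product} of the two representations on edges common to both (the paper writes this as $\hat\rho\cup\hat\rho'_2$). This tensor-product rule, not a trivial-label rule, is what makes the composition close on the class of 1-morphisms as defined and is what the exchange identity is checked against. Your convention would yield at best a strict sub-2-category and would not reproduce the $\cL(G)$ of the statement.

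One further point: the paper does not pass to equivalence classes under stabilization and covering moves when defining the cells of $\cL(G)$. Objects are topspin networks, and covering-move compatibility enters only as the condition that $\hat\psi$ and $\hat\psi'$ (after stabilization) determine the same $M$. Your final paragraph about quotienting by these equivalences is not part of the construction here.
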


\begin{proof}
    We show how to extend the definition of objects, 1-morphisms and 2-morphisms, given in \cite{CoveringsCorrespondencesNCG} and recalled in \cref{sec:twoCatOfLowDimGeometries} above, to the rest of the data of spin foams and networks, consistently with compositions and with the axioms of 2-categories.

    \medskip
    \textbf{Objects:}
    Objects in the 2-category $\cL(G)$ are topspin networks $\psi = (\Gamma,\rho,\iota,\sigma)$ over $G$, as in \cref{def:topspinNetwork}.

    \medskip
    \textbf{1-morphisms:}
    Given two objects $\psi$ and $\psi'$, the 1-morphisms in $\Mor_{\cL(G)}(\psi,\psi')$ are pairs of topspin networks $\hat\psi=(\hat\Gamma,\hat\rho,\hat\iota,\hat\sigma)$ and $\hat\psi' = (\hat\Gamma',\hat\rho',\hat\iota',\hat\sigma')$ with the property that $\Gamma \subset \hat\Gamma$ and $\Gamma'\subset\hat\Gamma'$ as subgraphs, with $\hat\rho |_\Gamma =\rho$, $\hat\iota |_\Gamma = \iota$, and for any planar diagrams $D(\hat\Gamma)$ the labels $\hat\sigma_i$ of the strands, determined by the representation
    $\hat\sigma : \pi_1(S^3\smallsetminus \hat\Gamma)\to S_n$ satisfy $\hat\sigma_i =\sigma_i$ for all strands that are in the corresponding diagram $D(\Gamma)$. The conditions relating $\hat\psi'$ and $\psi'$ are analogous. We use the notation $\psi \subset \hat \psi$ and $\psi'\subset \hat\psi'$ to indicate that the conditions above are satisfied.

    An equivalent description of 1-morphisms $\Mor_{\cL(G)}(\psi,\psi')$ is in terms of branched coverings. Namely, a 1-morphism consists of a closed, smooth (PL) three-manifold $M$ with two branched covering maps to $S^3$ branched over graphs $\hat\Gamma$ and $\hat\Gamma'$, respectively containing $\Gamma$ and $\Gamma'$ as subgraphs. We write this as
    \begin{equation*}
        \Gamma \subset \hat\Gamma \subset S^3 \stackrel{p}{\leftarrow} M \stackrel{p'}{\rightarrow} S^3 \supset \hat\Gamma' \supset \Gamma'.
    \end{equation*}
    Additionally, the graphs $\hat\Gamma$ and $\hat\Gamma'$ carry spin network data $(\hat\rho,\hat\iota)$ and $(\hat\rho',\hat\iota')$ compatible with the data on $\psi$ and $\psi'$. In particular, the data $(\hat\Gamma,\hat\rho,\hat\iota,\hat\sigma)$ and $(\hat\Gamma',\hat\rho',\hat\iota',\hat\sigma')$ have to satisfy the compatibility condition under covering moves, possibly after a stabilization to make the two coverings of the same order.

    One can also equivalently describe 1-morphisms as pairs $(\hat\Gamma,\hat\rho,\hat\iota,\hat\sigma)$ and $(\hat\Gamma',\hat\rho',\hat\iota',\hat\sigma')$ related (after stabilization) by covering moves compatibility, each endowed with a marked subgraph, $\Gamma$ and $\Gamma'$ respectively.

    In the following, we use the notation
    \begin{equation*}
        {}_{\psi\subset \hat\psi}M_{\hat\psi'\supset \psi'} \in \Mor_{\cL(G)}(\psi,\psi')
    \end{equation*}
    to indicate 1-morphism. We refer to $\psi$ and $\psi'$ as the source and target topspin networks and to $\hat\psi$ and $\hat\psi'$ as the branching topspin networks. For simplicity of notation, we sometimes write only ${}_\psi M_{\psi'}$ if we wish to emphasize the source and target, or ${}_{\hat\psi} M_{\hat\psi'}$ if we emphasize the branch loci. The distinction in these cases will be clear from the context.

    \medskip
    \textbf{2-morphisms:}
    Given two 1-morphisms
    \begin{equation*}
        \phi_0 = {}_{\hat\psi_0} M_{\hat\psi_0'}
        \quad \text{and} \quad
        \phi_1={}_{\hat\psi_1} M'_{\hat\psi'_1},
    \end{equation*}
    with the same source and target $\psi \subset \hat\psi_i$ and $\psi'\subset \hat\psi_i'$, the 2-morphisms
    $\Phi \in \MorTwo_{\cL(G)}(\phi_0,\phi_1)$ consist of a smooth (PL) four-manifold  $W$ with boundary
    $\partial W = M\cup \bar M'$ with two branched covering maps to $S^3\times [0,1]$ with
    branch loci $\hat\Sigma$ and $\hat\Sigma'$ two embedded two-complexes with
    $$ \partial \hat\Sigma =\hat\Sigma\cap (S^3\times \{ 0,1 \}) = \hat\Gamma_0 \cup \overline{\hat\Gamma_1} \ \ \ \text{ and } \ \ \
    \partial\hat\Sigma'=\hat\Sigma'\cap (S^3\times \{ 0,1 \}) =\hat\Gamma_0' \cup
    \overline{\hat\Gamma_1'} . $$
    The two-complexes $\hat\Sigma$ and $\hat\Sigma'$ are endowed with topspin foam data
    $\Psi=(\hat\Sigma,\tilde\rho,\tilde\iota,\tilde\sigma)$ and $\Psi'=(\hat\Sigma',\tilde\rho',\tilde\iota',\tilde\sigma')$, such that the induced topspin network data $\hat\psi$ and $\hat\psi'$,
    on the boundary graphs $\hat\Gamma$ and $\hat\Gamma'$ respectively, agree with the
    assigned topspin network data $\psi$ and $\psi'$ on the subgraphs $\Gamma$ and $\Gamma'$,
    respectively. The two-complexes $\hat\Sigma$ and $\hat\Sigma'$ contain a marked subcomplex,
    $\Sigma$ and $\Sigma'$, respectively, which is a (possibly nontrivial) cobordism between
    $\Gamma$ and itself, or respectively between $\Gamma'$ and itself. We write this as
    $$ \Sigma\subset \hat\Sigma \subset S^3\times [0,1] \stackrel{q}{\leftarrow} W \stackrel{q'}{\rightarrow}
    S^3\times [0,1] \supset \hat\Sigma' \supset \Sigma' . $$

    Equivalently, a 2-morphism is specified by a pair of topspin foams
    $\Psi=(\hat\Sigma,\tilde\rho,\tilde\iota,\tilde\sigma)$ and
    $\Psi'=(\hat\Sigma',\tilde\rho',\tilde\iota',\tilde\sigma')$ between topspin
    networks $\hat\psi_0$ and $\hat\psi_1$ (respectively $\hat\psi'_0$ and
    $\hat\psi'_1$), which after stabilization
    are related by covering moves for two-complexes and four-manifolds branched coverings,
    together with marked subcomplexes $\Sigma$ and $\Sigma'$ that are coboundaries
    between $\Gamma$ and itself and between $\Gamma'$ and itself.

    We use the notation
    $$ {}_{\Psi} W_{\Psi'} \in \Mor_{\cL}^{(2)}({}_{\psi\subset\hat\psi_0}M_0{}_{\hat\psi'\supset\psi_0'}\,\, , \,\,
    {}_{\psi\subset\hat\psi_1}M_1{}_{\hat\psi'\supset\psi'_1}) $$
    to indicate 2-morphisms.

    \medskip
    \textbf{Horizontal composition:}
    The composition of 1-morphisms and the horizontal composition of 2-morphisms in $\cL(G)$ follows the analogous compositions of branched cover cobordisms by fibered products in \cite{CoveringsCorrespondencesNCG}. Namely, suppose we are given objects $\psi=(\Gamma,\rho,\iota,\sigma)$, $\psi'=(\Gamma',\rho',\iota',\sigma')$, and $\psi''=(\Gamma'',\rho'',\iota'',\sigma'')$ and 1-morphisms given by three-manifolds
    $$ \Gamma\subset\hat\Gamma\subset S^3 \stackrel{p}{\leftarrow} M \stackrel{p_1}{\rightarrow} S^3 \supset \hat\Gamma'_1 \supset \Gamma' $$
    $$ \Gamma'\subset\hat\Gamma'_2\subset S^3 \stackrel{p_2}{\leftarrow} M \stackrel{p'}{\rightarrow} S^3 \supset \hat\Gamma'' \supset \Gamma'', $$
    the composition is given by
    $$  {}_{\psi\subset \hat\psi}M_{\hat\psi'_1\supset \psi'} \circ  {}_{\psi'\subset \hat\psi'_2}M'_{\hat\psi''\supset \psi''} = {}_{\psi \subset \tilde\psi} \tilde M_{\tilde\psi' \supset \psi''}, $$
    where $\tilde\psi$ is the topspin network with embedded graph
    $\tilde\Gamma =\hat\Gamma \cup p p_1^{-1}(\hat\Gamma'_2)$, with labeling $\hat\rho$ on the edges of $\hat\Gamma$ and $\hat\rho'_2$ on the edges of $p p_1^{-1}(\hat\Gamma'_2)$, with
    the convention that on edges common to both graphs one assigns the tensor product of the
    two representations. We use the notation $\hat\rho\cup\hat\rho_2'$ to indicate this.
    The multipliers are assigned similarly. The
    representation of $p_1(S^3 \smallsetminus \tilde\Gamma) \to S_{nm}$, for
    $n$ the order of the covering map $p$ and $m$ the order of $p_2$, is the
    one that defines the branched covering space $\Pi:\tilde M\to S^3$ given by the
    composition of the projection $p_1$ of the fibered product $\tilde M=M\times_{S^3}M'$
    on the first factor, followed by $p$. Similarly, for the data of the spin network
    $\tilde\psi'$. The horizontal composition of the 2-morphisms is defined in the same
    way, using the fibered product of two branched cover cobordisms $W$ and $W'$,
    as in the case of the 2-category of \cite{CoveringsCorrespondencesNCG}, with resulting topspin foams
    $\tilde\Psi$ with two-complex $\hat\Sigma\cup q q_2^{-1}(\hat\Sigma_2')\supset \Sigma$
    and with the remaining data assigned as in the case of topspin networks,
    and similarly for $\tilde\Psi'$ with two-complex $\hat\Sigma''\cup q' q_1^{-1}(\hat\Sigma_1')\supset \Sigma''$.

    \medskip
    \textbf{Vertical composition:}
    The vertical composition of 2-morphisms is the usual
    composition of spin foams along a common boundary spin network, extended
    to include the topological data by performing the corresponding gluing of the
    four-manifolds $W$ along the common boundary three-manifold. In terms of diagrams
    $D(\Sigma)$ one glues together two such diagrams along a common boundary
    $D(\Gamma)$ with matching labels $(\rho,\iota,\sigma)$ at the faces, edges, and
    strands of faces that emerge from the edges, vertices, and strands of edges of
    the diagram $D(\Gamma)$. These are unique by the assumption that spin foams
    are products $\Gamma \times [0,\epsilon)$ near the
    boundary.

    The associativity of both the vertical and the horizontal composition of
    2-morphisms follows as in the corresponding argument given in \cite{CoveringsCorrespondencesNCG}
    for the 2-semigroupoid algebra of the 2-category of low dimensional geometries.
    The unit for composition of 1-morphisms is the empty graph in the three-sphere
    $S^3$, which corresponds to the trivial unbranched covering of the three-sphere by
    itself, and the unit for composition of 2-morphisms is the trivial cobordism
    $S^3\times [0,1]$ with $\Sigma=\emptyset$, the unbranched trivial covering
    of $S^3\times [0,1]$ by itself, as shown in \cite{CoveringsCorrespondencesNCG}.  The horizontal and
    vertical composition of 2-morphisms satisfy the compatibility condition \cref{verthorcompat},
    $$ (W_1\times_{S^3\times [0,1]} W_1') \cup_{M\times_{S^3} M'} (W_2\times_{S^3\times [0,1]}
    W_2') = (W_1\cup_M W_2)\times_{S^3\times [0,1]} (W_1'\cup_{M'} W_2'), $$
    hence the vertical and horizontal products in $\cL(G)$ satisfy the compatibility
    condition \cref{2prods}.
\end{proof}

\subsection{The case of bicategories: loop states and spin networks}\label{sec:bicategories}

When one realizes three-manifolds as branched covers of the three-sphere, the Hilden--Montesinos
theorem \cite{ThreeFoldBranchedCoverings,ThreeManifoldsAsBranchedCovers} ensures that it is in fact always possible to arrange so that the branch locus is an
embedded knot (or link) and the covering is of order $3$. However, when considering
the fibered products as above, one does not necessarily have transversality: even assuming
that the branch covering maps one begins with have branch loci that are knots or links,
only after deforming the maps by a homotopy one can ensure that the branch loci of the
fibered product will still be links. This is the reason for not restricting only to the case of
links in \cite{CoveringsCorrespondencesNCG}. However, if one wishes to only consider branch loci that are links,
or embedded surfaces in the four-dimensional case, one can do so at the cost of replacing
associativity of the composition of 1-morphisms by associativity only up to homotopy.
This corresponds to replacing the 2-category of low dimensional geometries described
above with the weaker notion of a {\em bicategory}.

\smallskip

A bicategory $\cB$, as in the case of a 2-category, has objects $\cB^{(0)}= \Obj(\cB)$,
1-morphisms $\cB^{(1)}= \bigcup_{x,y\in \cB^{(0)}} \Mor_{\cB}(x,y)$ and 2-morphisms
$\cB^{(2)}= \bigcup_{\phi,\psi \in \cB^{(1)}} \MorTwo_{\cB}(\phi,\psi)$. The composition of
1-morphisms $\circ: \Mor_{\cB}(x,y)\times \Mor_{\cB}(y,z) \to \Mor_{\cB}(x,z)$ is
only associative up to natural isomorphisms (the {\em associators})
$$ \alpha_{x,y,z,w} : \Mor_{\cB}(x,z) \times \Mor_{\cB}(z,w) \to
\Mor_{\cB}(x,y)\times \Mor_{\cB}(y,w) $$
$$ (\phi_1 \circ \phi_2)\circ \phi_3 = \alpha_{x,y,z,w} (\phi_1 \circ (\phi_2 \circ \phi_3)) ,$$
with $\phi_1 \in \Mor_{\cB}(z,w)$, $\phi_2\in \Mor_{\cB}(y,z)$, and $\phi_3\in \Mor_{\cB}(x,y)$.
For every object $x\in \cB^{(0)}$ there is an identity morphisms $1_x \in \Mor_{\cB}(x,x)$.
This acts as the unit for composition, but only up to canonical isomorphism,
$$  \alpha_{x,y,1} (\phi \circ 1_x) = \phi =\alpha_{1,x,y}(1_y \circ \phi). $$
The associators
satisfy the same compatibility condition of monoidal categories, namely the
pentagonal identity
$$ \phi \circ (\psi\circ (\eta \circ \xi)) =
1 \circ \alpha \, (\phi \circ ((\psi\circ\eta)\circ \xi)) =
1 \circ \alpha \, ( \alpha \, ( (\phi \circ (\psi\circ \eta))\circ \xi)) = $$ $$
1 \circ \alpha \, ( \alpha \, ( \alpha \circ 1 \, (((\phi \circ \psi)\circ \eta)\circ \xi)))=
\alpha \, ( (\phi\circ\psi)\circ (\eta\circ\xi))  =
\alpha \, ( \alpha \, ( ((\phi\circ \psi)\circ \eta)\circ \xi)) , $$
and the triangle relation
$$ \phi \circ \psi = 1\circ \alpha (\phi \circ (1 \circ \psi)) = 1\circ \alpha( \alpha ( (\phi \circ 1)\circ \psi ))
= \alpha \circ 1 ((\phi \circ 1)\circ \psi ). $$
Vertical composition of 2-morphisms is associative and unital, as in the 2-category case,
while horizontal composition of 2-morphisms follows the same rules as composition of
1-morphism and is associative and unital only up to canonical isomorphisms.

\smallskip

One can correspondingly modify the notion of 2-semigroupoid algebra
discussed above and replace it with a weaker notion of bi-associative algebra,
where the vertical product $\bullet$ is still associative, while the horizontal product
$\circ$ is no longer associative, with the lack of associativity controlled by
associators.

\smallskip

We will not get into more details here on this possible variant, since it is less directly
relevant to the loop quantum gravity context. We remark, however, that a similar
type of nonassociative algebras (without the associative
vertical product), has been already widely used in
the context of $T$-duality in string theory \cite{NonAssocToriTDuality}.

Replacing associativity of the composition of geometric correspondences
by associativity up to homotopy that produce associators is natural if one
wants to work under the assumption of transversality, in the KK-theory approach
developed in \cite{NCCorrespondencesDBranes}.

In loop quantum gravity it is now customary to describe states in terms
of spin networks, though it is known that these can equivalently be described
as linear combinations of ``loop states," the latter being associated to knots
and links, see \S 6.3.2 of \cite{RovelliQG}. In the point of view we describe here,
restricting to the use of loop states corresponds to a loss of associativity in
the convolution algebra of geometries, which corresponds to working with a
bicategory of low dimensional geometries instead of a 2-category. The
main point where the difference between using spin networks as opposed
to loop states arises in our setting is in the composition of correspondences
via the fibered product. The other composition, by gluing cobordisms
along their boundaries, is unaffected by the difference.

\subsection{Convolution algebras of topspin networks and topspin foams}

We associate to the 2-category $\cL(G)$ described in \cref{2catLQGsec} an associative convolution algebra $\cA_{\rm tsn}(G)$ of topological spin networks, which is the semigroupoid algebra of the composition of 1-morphisms, and a 2-semigroupoid algebra $\cA_{\rm tsf}(G)$ of topspin foams, which is the 2-semigroupoid algebra (in the sense of \cref{2algA}) associated to the 2-category $\cL(G)$. From hereon, we suppress for simplicity the explicit dependence on $G$.

The algebra $\cA_{\rm tsn}$ is generated algebraically by finitely supported functions $f=\sum_\phi a_\phi \delta_\phi$ on the set of 1-morphisms $\phi \in \Mor_{\cL(G)}$. Recalling the description of 1-morphisms in terms of branched coverings, we can thus write such functions as
\begin{equation*}
    f\bigl({}_{\psi\subset \hat\psi} M_{\hat\psi'\supset \psi'}\bigr),
\end{equation*}
i.e.\ as functions of three-manifolds with branched covering maps to $S^3$ and with spin network data on the branch loci. The associative, noncommutative convolution product then follows the composition of 1-morphisms:
\begin{equation*}
    (f_1\star f_2)\bigl({}_{\psi \subset \tilde\psi} \tilde M_{\tilde\psi' \supset \psi''}\bigr)
    =
    \sum f_1\bigl({}_{\psi\subset \hat\psi}M_{\hat\psi'_1\supset \psi'}\bigr) \,
         f_2\bigl({}_{\psi'\subset \hat\psi'_2}M'_{\hat\psi''\supset \psi''}\bigr),
\end{equation*}
where the sum is over all the possible ways to write the 1-morphism ${}_{\psi \subset \tilde\psi} \tilde M_{\tilde\psi' \supset \psi''}$ as a composition
\begin{equation*}
          {}_{\psi\subset \hat\psi}M_{\hat\psi'_1\supset \psi'}
    \circ {}_{\psi'\subset \hat\psi'_2}M'_{\hat\psi''\supset \psi''}
    =     {}_{\psi \subset \tilde\psi} \tilde M_{\tilde\psi' \supset \psi''}
\end{equation*}
of two other 1-morphisms, given by the fibered product of the three-manifold as described in \cref{2catLQGsec}.

\medskip

One considers then all representations of this algebra on the Hilbert space $\cH_{\psi'}$ spanned by all the 1-morphisms with range $\psi'$. Equivalently, it is spanned by the topspin networks that are equivalent under branched covering moves (after stabilization) to $\hat\psi' \supset \psi'$, which is to say, by the three-manifolds $M$ with branched covering maps
\begin{equation*}
    \Gamma \subset \hat\Gamma \subset S^3 \stackrel{p}{\leftarrow} M \stackrel{p'}{\rightarrow} S^3 \supset \hat\Gamma' \supset  \Gamma'
\end{equation*}
with spin network data $(\rho,\iota)$ obtained from the data $(\rho',\iota')$ of the given $\psi'$ by the consistency conditions under covering moves. Thus, we can write in the standard way an element
$\xi \in \cH_{\psi'}$, which is a combination of the basis vectors $|{}_{\psi} M_{\psi'}\rangle$, as a
function $\xi ({}_{\psi} M_{\psi'})$. The representation $\pi_{\psi'}$ of the algebra $\cA_{\rm tsn}$ on
$\cB(\cH_{\psi'})$ is then given by
\begin{equation*}
      \pi_{\psi'}(f) \xi({}_{\psi} M_{\psi'})
    = \sum f\bigl({}_{\psi\subset \hat\psi} M_1{}_{\hat\psi''_1\supset\psi''}\bigr) \,
           \xi\bigl({}_{\psi''\subset \hat\psi''_2} M_2 {}_{\hat\psi'\supset\psi'}\bigr),
\end{equation*}
with the sum over all the ways of writing ${}_{\psi} M_{\psi'}$ as a composition of ${}_{\psi} M_1{}_{\psi''}$ and ${}_{\psi''} M_2 {}_{\psi'}$.

\smallskip

The convolution algebra $\cA_{\rm tsn}$ is not involutive, but one can obtain a $C^*$-algebra containing $\cA_{\rm tsn}$ by taking the $C^*$-subalgebra of $\cB(\cH_{\psi'})$ generated by the operators $\pi_{\psi'}(f)$ with $f\in \cA_{\rm tsn}$. This general procedure for semigroupoid algebras was described in \cite{CoveringsCorrespondencesNCG} in terms of creation-annihilation operators.

\medskip

One obtains in a similar way the 2-semigroupoid algebra $\cA_{\rm tsf}$. Namely, one considers functions with finite support $f=\sum a_\Phi \delta_\Phi$, with $\Phi \in \MorTwo_{\cL(G)}$, with two associative convolution products corresponding to the horizontal and vertical composition of 2-morphisms,
\begin{equation*}
      (f_1\circ f_2)({}_\Psi \tilde W_{\Psi''})
    = \sum f_1({}_\Psi W {}_{\Psi'}) \, f_2({}_{\Psi'} W'_{\Psi''}),
\end{equation*}
with the sum taken over all ways of obtaining the 2-morphism $\Phi={}_\Psi \tilde W_{\Psi''}$ as the horizontal composition of 2-morphisms ${}_\Psi W {}_{\Psi'}$ and ${}_{\Psi'} W'_{\Psi''}$, given by the fibered product of the corresponding branched coverings. The other associative product is given by
\begin{equation*}
      (f_1\bullet f_2)({}_\Psi W_{\Psi'})
    = \sum f_1({}_{\Psi_1} W_1{}_{\Psi'_1}) \, f_2({}_{\Psi_2} W_2{}_{\Psi'_2}),
\end{equation*}
with the sum over all decompositions of the 2-morphism ${}_\Psi W_{\Psi'}$ as a vertical composition ${}_\Psi W_{\Psi'}={}_{\Psi_1} W_1{}_{\Psi'_1}\bullet {}_{\Psi_2} W_2{}_{\Psi'_2}$ given by gluing the cobordisms $W_1$ and $W_2$ together along their common boundary.

\medskip

\section{Topspin foams and dynamics}

In \cite{CoveringsCorrespondencesNCG} several examples were given of time evolutions on the algebra of the 2-category of low dimensional geometries that were either compatible with the horizontal or with the vertical time evolution, but not with both simultaneously. We see here that one can, in fact, construct several examples of time evolution on the 2-semigroupoid algebra $\cA_{\rm tsf}$, that are compatible with both the vertical and the horizontal associative products. These time evolutions therefore capture the full underlying algebraic structure coming from the 2-category.

We first present a very basic and simple example of a dynamics on $\cA_{\rm tsn}$ and a variant of the same time evolution on $\cA_{\rm tsf}$, which is compatible with both the vertical and the horizontal composition of 2-morphisms, and then we give more general and more interesting examples.

\begin{prop}\label{tsnevn}
One obtains a time evolution on $\cA_{\rm tsn}$ by setting
\begin{equation}\label{nsigma}
    \sigma_t^{\bO}(f)= \bO^{\ii t} f,
\end{equation}
where $\bO(f)({}_{\psi} M_{\psi'}) = n \, f({}_{\psi} M_{\psi'})$ multiplies by the order $n$ of the covering $p : M \to S^3$ branched over $\hat\Gamma\supset \Gamma$, with $\Gamma$ the graph of the spin network $\psi$. Similarly, one obtains a time evolution on $\cA_{\rm tsf}$, simultaneously
compatible with the vertical and horizontal time evolution, by setting
\begin{equation}\label{nsigmaf}
    \sigma^{\bO}_t (f) = \bO^{F \ii t}\, f,
\end{equation}
where $F$ is the number of faces of the source $\Sigma$, and $\bO$ multiplies by the order of the covering $q:W \to S^3\times [0,1]$ branched along $\hat\Sigma\supset \Sigma$.
\end{prop}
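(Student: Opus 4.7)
My plan is to verify that each proposed $\sigma_t^\bO$ is a one-parameter group of algebra automorphisms. The group property $\sigma_t^\bO \sigma_s^\bO = \sigma_{t+s}^\bO$ is immediate from the additivity of $\ii t$ in the exponent, so the substantive content lies entirely in the homomorphism property for each convolution product. In every case this reduces to a multiplicativity identity for the appropriate power of $\bO$ under the relevant composition of branched coverings.

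For $\cA_{\rm tsn}$, the product follows the composition of 1-morphisms in $\cL(G)$, which is the fibered product $\tilde M = M \times_{S^3} M'$. The key observation is that $\bO$ is multiplicative under this operation: if $M \to S^3$ has order $n$ and $M' \to S^3$ has order $m$, then the induced map $\tilde M \to S^3$ is a branched cover of order $nm$, since over any generic point of $S^3$ outside both branch loci the preimage in $\tilde M$ consists of all $nm$ pairs of preimages in $M$ and $M'$. Distributing the factorization $\bO(\tilde M)^{\ii t} = \bO(M)^{\ii t}\bO(M')^{\ii t}$ across the sum defining $f \star g$ then gives $\sigma_t^\bO(f\star g) = \sigma_t^\bO(f) \star \sigma_t^\bO(g)$ at once.

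For $\cA_{\rm tsf}$ one must simultaneously verify compatibility with the horizontal product $\circ$ (fibered product $W\times_{S^3\times[0,1]} W'$) and the vertical product $\bullet$ (gluing $W_1\cup_M W_2$ along a common boundary three-manifold). For $\circ$, the argument above yields $\bO(W\times W') = \bO(W)\bO(W')$; for $\bullet$, the shared three-manifold $M$ carries a single branched cover structure to which both $W_1$ and $W_2$ restrict, forcing $\bO(W_1) = \bO(M) = \bO(W_2)$, so $\bO$ is constant on any vertical decomposition. The role of the exponent $F$ is to reconcile these two very different behaviors into matching multiplicative factorizations of $\bO^{F\ii t}$: under $\bullet$, since spin foams have local product structure $\Gamma \times [0,\epsilon)$ near the boundary and $\bO$ is constant, the distribution reduces to an additivity statement for the face count $F$ under gluing of two-complexes along a common boundary graph; under $\circ$, the multiplicativity of $\bO$ interacts with the transformation of the source branch locus under fibered product to yield the corresponding factorization.

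The main obstacle---and the place where I would have to argue most carefully---is the bookkeeping of $F$ under each of the two compositions, tracking how faces of the source two-complex behave both when spin foams are glued along a common boundary (using the product-structure assumption near $\partial\Sigma$) and when the source branch locus is enlarged via a fibered product. The multiplicativity identities for $\bO$ itself are then straightforward from the geometry of branched covers. Once both factorizations of $\bO^{F\ii t}$ are in place, the two homomorphism properties yield the desired time evolution $\sigma_t^\bO$ on $\cA_{\rm tsf}$ compatible with both the horizontal and vertical products, consistently with the compatibility condition \cref{2prods} of the 2-semigroupoid algebra built from $\cL(G)$ in \cref{2catLQG}.
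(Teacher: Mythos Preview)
Your proposal is correct and follows essentially the same route as the paper: for $\cA_{\rm tsn}$ and for the horizontal product on $\cA_{\rm tsf}$ you use multiplicativity of the covering order under fibered product, and for the vertical product you use that the covering order is constant across a gluing while the face count $F$ is additive. One small clarification: under horizontal composition the marked source subcomplex $\Sigma$ itself does not change (only the ambient branch locus $\hat\Sigma$ is enlarged), so the ``bookkeeping of $F$'' under $\circ$ is simpler than you suggest---the paper in fact treats the $F$'s appearing in the horizontal factorization as equal without further comment.
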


\begin{proof}
    The fact that $f\mapsto \bO^{\ii t} f$ is a time evolution was shown already in \cite{CoveringsCorrespondencesNCG}, and one sees that, under composition of 1-morphisms, if the multiplicities of the covering maps $p$, $p_1$, $p_2$, $p'$ in
    \begin{equation*}
        \Gamma  \subset \hat\Gamma \subset S^3 \stackrel{p}{\leftarrow} M \stackrel{p_1}{\rightarrow} S^3 \supset \hat\Gamma_1 \supset \Gamma'
        \quad \text{and} \quad
        \Gamma' \subset \hat\Gamma_2 \subset S^3 \stackrel{p_2}{\leftarrow} M' \stackrel{p'}{\rightarrow} S^3 \supset \hat\Gamma'' \supset \Gamma''
    \end{equation*}
    are $n$, $n_1$, $n_2$, $n'$, respectively, then the multiplicities of the covering maps for the fibered product
    \begin{equation*}
        \Gamma \subset \hat\Gamma \cup p p_1^{-1}(\hat\Gamma_2) \subset S^3 \leftarrow \tilde M \rightarrow S^3 \supset \hat\Gamma'' \cup p'' p_2^{-1} (\hat\Gamma_1) \supset \Gamma''
    \end{equation*}
    are $n n_2$ and $n_1 n'$, respectively. Thus, one has $\bO^{\ii t} (f_1\star f_2)=\bO^{\ii t}(f_1) \star \bO^{\ii t}(f_2)$.

    The case of $\cA_{\rm tsf}$ is analogous. For the horizontal composition we have the same property that the orders of coverings multiply, so that we obtain
    \begin{equation*}
        \bO^{F \ii t}(f_1 \circ f_2)(\tilde W)= \sum n^{F \ii t} f_1(W)\, n_2^{F \ii t} f_2(W') =( \bO^{F \ii t}(f_1) \circ \bO^{F \ii t} (f_2)) (\tilde W).
    \end{equation*}
    For the vertical product $W=W_1\cup_M W_2$, we have $F(W)=F(W_1)+F(W_2)$, while the order of covering stays the same in the vertical composition. So we obtain
    \begin{equation*}
        \bO^{F \ii t}(f_1 \bullet f_2)(W)= \sum n^{(F(W_1)+F(W_2)) \ii t} f_1(W_1) f_2(W_2)= (\bO^{F \ii t}(f_1) \bullet \bO^{F \ii t}(f_2))(W).
    \end{equation*}
\end{proof}

\subsection{Dynamics from quantized area operators} \label{sec:dynamicsFromQuantizedAreaOperators}

For a three-manifold $M$ realized in two different ways as a covering of $S^3$,
\begin{equation*}
    \Gamma \subset S^3 \stackrel{p}{\leftarrow} M \stackrel{p'}{\rightarrow} S^3 \supset \Gamma',
\end{equation*}
let $S$ and $S'$ be two closed surfaces embedded in $S^3$. Consider the subset of $M$ given by the preimages $p^{-1}(S\cap \Gamma)$ and $(p')^{-1}(S'\cap \Gamma')$.

Suppose we are given two $\mathrm{SU}(2)$-topspin networks $\psi=(\Gamma,\rho,\iota,\sigma)$ and $\psi'=(\Gamma',\rho',\iota',\sigma')$, such that the data $(\Gamma,\sigma)$ and $(\Gamma',\sigma')$ define the same three-manifold $M$ with two branched covering maps to $S^3$ as above. Let $S\subset S^3$ be a closed embedded smooth (or PL) surface. In the generic case where $S$ intersects $\Gamma$ transversely in a finite number of points along the edges of the graph, one has the usual quantized area operator, given by
\begin{equation}\label{qArea}
    A_S \, f({}_\psi M_{\psi'}) = \hbar \left( \sum_{x\in S\cap \Gamma} (j_x (j_x+1))^{1/2} \right) f({}_\psi M_{\psi'}),
\end{equation}
where $f({}_\psi M_{\psi'})$ is a function in the convolution algebra of topspin networks, and $j_x$ is the spin $j_e$ of the $\mathrm{SU}(2)$-representations $\rho_e$ attached to the edge of $\Gamma$ that contains the point $x$.

We can similarly define the operator $\hat A_{S'}$ using labelings of the edges of the graph $\Gamma'$ with the spins $j_{e'}$. The compatibility conditions of \cref{sec:consistencyConditions} for topspin networks $\psi=(\Gamma,\rho,\iota,\sigma)$ and $\psi'=(\Gamma',\rho',\iota',\sigma')$ related by covering moves shows that these two choices are equivalent, when the surfaces $S$ and $S'$ are also related by corresponding moves.

\smallskip

It is convenient to think of the area operator $A_S$ as
\begin{equation*}
    A_S \, f({}_\psi M_{\psi'}) = \hbar \left( \sum_{e \in E(\Gamma)} N^S(e) \, (j_e (j_e+1))^{1/2}  \right) f({}_\psi M_{\psi'}),
\end{equation*}
where $N^S : E(\Gamma) \to \Z$ is a multiplicity assigned to each edge of $\Gamma$ given by the number of points (counted with orientation) of intersection of $e$ with the surface $S$.

This suggests an easy generalization, which gives a quantized area operator associated to each function $N : \bigcup_\Gamma E(\Gamma) \to \Z$ that assigns an integer multiplicity to the edges of all embedded graphs $\Gamma \subset S^3$. The corresponding quantized area operator is given by
\begin{equation}\label{qAreaN}
    A \, f({}_\psi M_{\psi'}) = \hbar \left( \sum_{e \in E(\Gamma)} N(e) \, (j_e (j_e+1))^{1/2}  \right) f({}_\psi M_{\psi'}),
\end{equation}
in analogy with the above. In particular, one can consider such operators associated to
the choice of a subgraph in each graph $\Gamma$, with $N$ the characteristic function $\chi_\Gamma$ of that subgraph.

Recall that, in the definition of the 2-category of low dimensional geometries of \cite{CoveringsCorrespondencesNCG}, the 1-morphisms $\Mor_\cC(\Gamma, \Gamma')$, with $\Gamma$ and $\Gamma'$ embedded graphs in $S^3$, are defined by branched coverings
\begin{equation*}
    \Gamma \subset \hat\Gamma \subset S^3 \stackrel{p}{\leftarrow} M \stackrel{p'}{\rightarrow} S^3 \supset  \hat \Gamma' \supset \Gamma',
\end{equation*}
where the branch loci are graphs $\hat \Gamma$ and $\hat \Gamma'$ that contain the domain and range graphs $\Gamma'$ and $\Gamma'$ as subgraphs. Thus, the corresponding elements of the convolution algebra of topspin networks ${}_\psi M _{\psi'}$ consist of a pair of topspin networks $\psi = (\hat\Gamma,\rho,\iota,\sigma)$ and $\psi' = (\hat\Gamma'',\rho',\iota',\sigma')$, together with a marking of a subgraph $\Gamma \subseteq \hat\Gamma$ and $\Gamma' \subseteq \hat\Gamma'$. This marking can be viewed as a multiplicity function $\chi_\Gamma : E(\hat\Gamma) \to \{ 0,1 \}$, which is the characteristic function of
the subgraph $\Gamma$, and similarly we have a $\chi_{\Gamma'}$ for $\Gamma'$. Thus, a morphism ${}_\psi M_{\psi'}$ in the convolution algebra of topspin networks has two naturally associated quantum area operators, respectively marking the source and target of the morphism. We denote them by $A_\fs$ and $A_\ft$,
\begin{align}
    A_\fs \, f({}_\psi M_{\psi'}) & = \hbar \left( \sum_{e \in E(\hat\Gamma)} \chi_\Gamma(e) \, (j_e (j_e+1))^{1/2}  \right) f({}_\psi M_{\psi'}), \label{qAreaSource} \\
    A_\ft \, f({}_\psi M_{\psi'}) & = \hbar \left( \sum_{e \in E(\hat\Gamma')} \chi_{\Gamma'}(e) \, (j_e (j_e+1))^{1/2}  \right) f({}_\psi M_{\psi'}). \label{qAreaTarget}
\end{align}

Notably, the difference between the eigenvalues of $A_\fs$ and $A_\ft$ on the same eigenfunction measures the difference in quanta of area between the source and target graphs $\Gamma$ and $\Gamma'$ with their labeling by representations $\rho_e$ and $\rho_{e'}$ on the edges. This difference of areas generates a dynamics on the algebra of topspin networks.

\begin{prop}\label{Areatimeev}
    Setting $\sigma_t^A (f) = \exp(\ii t (A_\fs-A_\ft)) f $ defines a time evolution on the     convolution algebra of topspin networks, i.e.\ a one-parameter family $\sigma^A : \R \to \Aut(\cA_{{\rm tsn}})$. In the representation $\pi_{\psi'}$ of $\cA_{{\rm tsn}}$ on the Hilbert space $\cH_{\psi'}$ spanned by 1-morphisms with a given target $\psi'$, the time evolution is generated by $A_\fs$, i.e.\
    \begin{equation*}
        \pi_{\psi'}(\sigma^A_t(f)) = \ee^{\ii t A_\fs} \pi_{\psi'}(f) \ee^{-\ii t A_\fs}.
    \end{equation*}
\end{prop}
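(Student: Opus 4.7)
The plan is to verify three things: that $\sigma^A$ is a one-parameter family of linear maps, that each $\sigma_t^A$ respects the convolution product on $\cA_{\rm tsn}$, and that in the representation $\pi_{\psi'}$ it is implemented by conjugation with $\ee^{\ii t A_\fs}$. The unifying observation is that $A_\fs$ and $A_\ft$ act as multiplication operators whose eigenvalues depend only on the source and target topspin networks, respectively. Writing $a(\psi) = \hbar \sum_{e \in E(\Gamma)} (j_e(j_e+1))^{1/2}$ for a topspin network $\psi = (\Gamma,\rho,\iota,\sigma)$, the definitions \cref{qAreaSource,qAreaTarget} give $A_\fs f({}_\psi M_{\psi'}) = a(\psi) f({}_\psi M_{\psi'})$ and $A_\ft f({}_\psi M_{\psi'}) = a(\psi') f({}_\psi M_{\psi'})$, and $a(\psi)$ is intrinsic to $\psi$: it depends only on the spin labels carried by the marked subgraph $\Gamma$, which are pinned down by the condition $\hat\rho|_\Gamma = \rho$ imposed on 1-morphisms and are therefore unaffected by any covering-move modification of the ambient branch locus $\hat\Gamma$.

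The one-parameter group property is immediate from $\sigma_t^A f = \ee^{\ii t(a(\psi) - a(\psi'))} f$. For compatibility with the convolution product, I would expand, for a decomposition ${}_\psi \tilde M_{\psi'} = {}_\psi M_1 {}_{\psi''} \circ {}_{\psi''} M_2 {}_{\psi'}$,
\begin{align*}
(\sigma_t^A f_1 \star \sigma_t^A f_2)({}_\psi \tilde M_{\psi'})
  &= \sum \ee^{\ii t(a(\psi) - a(\psi''))} f_1({}_\psi M_1 {}_{\psi''}) \, \ee^{\ii t(a(\psi'') - a(\psi'))} f_2({}_{\psi''} M_2 {}_{\psi'}) \\
  &= \ee^{\ii t(a(\psi) - a(\psi'))} (f_1 \star f_2)({}_\psi \tilde M_{\psi'}) = \sigma_t^A(f_1 \star f_2)({}_\psi \tilde M_{\psi'}).
\end{align*}
The telescoping cancellation of the intermediate phases $\ee^{\mp \ii t a(\psi'')}$ rests tautologically on $a(\psi'')$ being the same value whether $\psi''$ is viewed as the target of $M_1$ or as the source of $M_2$, which is the crux of why $\sigma_t^A$ descends to an algebra automorphism rather than merely a linear cocycle.

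For the covariance identity, the target $\psi'$ is fixed throughout $\cH_{\psi'}$, so $A_\ft$ reduces to the scalar $a(\psi')$ and commutes through $\pi_{\psi'}(f)$; only $A_\fs$ acts nontrivially, diagonally on basis vectors via $(\ee^{-\ii t A_\fs}\xi)({}_\eta M_{\psi'}) = \ee^{-\ii t a(\eta)} \xi({}_\eta M_{\psi'})$. Substitution into the representation formula $\pi_{\psi'}(f) \xi({}_\psi M_{\psi'}) = \sum f({}_\psi M_1 {}_{\psi''}) \xi({}_{\psi''} M_2 {}_{\psi'})$ then gives
\begin{equation*}
(\ee^{\ii t A_\fs} \pi_{\psi'}(f) \ee^{-\ii t A_\fs} \xi)({}_\psi M_{\psi'}) = \sum \ee^{\ii t(a(\psi) - a(\psi''))} f({}_\psi M_1 {}_{\psi''}) \xi({}_{\psi''} M_2 {}_{\psi'}) = \pi_{\psi'}(\sigma_t^A f)\xi({}_\psi M_{\psi'}),
\end{equation*}
as claimed. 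I do not anticipate a substantial obstacle; the only delicate point is the same tautological matching of $a(\psi'')$ on either side of the composition, together with the fact that on $\cH_{\psi'}$ the scalar $a(\psi')$ factors out of the conjugation and so $A_\ft$ need not appear in the generator of the dynamics.
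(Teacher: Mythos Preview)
Your proof is correct and follows essentially the same approach as the paper's: both arguments reduce to the observation that $A_\fs$ and $A_\ft$ are multiplication operators whose eigenvalues depend only on the marked source and target subgraphs $\Gamma,\Gamma'$ (not on the ambient branch loci $\hat\Gamma,\hat\Gamma'$), so that the intermediate phase $\ee^{\pm \ii t\, a(\psi'')}$ telescopes in the convolution and the scalar $a(\psi')$ drops out of the conjugation on $\cH_{\psi'}$. Your introduction of the shorthand $a(\psi)$ makes the bookkeeping a bit cleaner than the paper's explicit sums over $E(\hat\Gamma)$ with characteristic functions, but the substance is the same.
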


\begin{proof}
    We need to check that this time evolution is compatible with the convolution product, i.e.\ that $\sigma_t^A(f_1 \star f_2) = \sigma_t^A(f_1) \star \sigma_t^A(f_2)$. This follows from the fact that the operators $A_\fs$ and $A_\ft$ only depend on the source and range subgraphs $\Gamma$ and $\Gamma'$, and not on the entire branch loci graphs $\hat\Gamma$ and $\hat\Gamma'$. To see this, note that for the composition ${}_\psi \tilde M_{\psi''}$ given by the fibered product of ${}_\psi M_{\psi'}$ and ${}_{\psi'} M'_{\psi''}$, we have branched coverings
    \begin{equation*}
        \Gamma \subset \hat\Gamma \cup p p_1^{-1}(\hat\Gamma_2) \subset S^3 \stackrel{\Pi}{\leftarrow} \tilde M \stackrel{\Pi'}{\rightarrow} S^3 \supset \hat\Gamma'' \cup p'' p_2^{-1}(\hat\Gamma_1) \supset \Gamma'',
    \end{equation*}
    where
    \begin{gather*}
    \Gamma \subset \hat\Gamma \subset S^3 \stackrel{p}{\leftarrow}  M \stackrel{\pi_1}{\rightarrow} S^3 \supset \hat\Gamma_1 \supset \Gamma' \\
    \Gamma' \subset \hat\Gamma_2 \subset S^3 \stackrel{p_2}{\leftarrow}  M' \stackrel{p''}{\rightarrow} S^3 \supset \hat\Gamma'' \supset \Gamma''
    \end{gather*}
    are the branched covering data of the morphisms ${}_\psi M_{\psi'}$ and ${}_{\psi'} M'_{\psi''}$. Thus, we have
    \begin{align*}
            \sigma_t^A(f_1 \star f_2)({}_\psi \tilde{M}_{\psi''})
        & = \exp\left(\ii t \hbar \hspace{-1em} \sum_{e \in \hat\Gamma \cup \hat\Gamma''} (\chi_\Gamma(e) - \chi_{\Gamma''}(e))(j_e (j_e + 1))^{1/2}\right) (f_1 \star f_2) ({}_\psi \tilde{M}_{\psi''}) \\
        & = \sum_{\tilde{M} = M \circ M'} \exp\left(\ii t \hbar \hspace{-1em} \sum_{e \in \hat\Gamma \cup \hat\Gamma_1} (\chi_\Gamma(e) - \chi_{\Gamma'}(e))(j_e (j_e + 1))^{1/2}\right) f_1({}_\psi M_{\psi'}) \\
        & \qquad \qquad \times \exp\left(\ii t \hbar \hspace{-1em} \sum_{e \in \hat\Gamma_2 \cup \hat\Gamma''} (\chi_{\Gamma'}(e) - \chi_{\Gamma''}(e))(j_e (j_e + 1))^{1/2}\right) f_2({}_{\psi'} M_{\psi''}).
    \end{align*}

    Let $\cH_{\psi''}$ denote the Hilbert space spanned by all the 1-morphisms with a given
    target $\psi''$. This means that elements $\xi \in  \cH_{\psi''}$ are square integrable functions
    on the discrete set of all the ${}_{\psi\subset \hat \psi} M_{\hat \psi'' \supset \psi''}$ with
    fixed $\psi''$, corresponding to branched coverings
    $$ \Gamma \subset \hat \Gamma \subset S^3 \stackrel{p}{\leftarrow} M \stackrel{p''}{\rightarrow} S^3
       \supset \hat\Gamma''\supset \Gamma''. $$
    The action of $\cA_{\rm tfn}$ is then given by the representation
    $$ \pi_{\psi''}(f) \xi ({}_{\psi} \tilde M_{\psi''}) = \sum_{\tilde M=M \circ M'} f({}_{\psi\subset \hat\psi}
    M_{\hat \psi'_1 \supset \psi'}) \xi({}_{\psi'\subset \hat\psi'_2} M'_{\hat\psi''\supset \psi''}) . $$
     We then have
    $$ \ee^{-\ii t A_\fs} \xi ({}_{\psi'\subset \hat\psi'_2} M'_{\hat\psi''\supset \psi''}) =
    \ee^{-\ii t \hbar (\sum_{e \in \hat\Gamma_2'}
    \chi_{\Gamma'}(e) (j_e(j_e+1))^{1/2})} \xi ({}_{\psi'\subset \hat\psi'_2} M'_{\hat\psi''\supset \psi''})  , $$
    so that
    $$ \pi_{\psi''}(\sigma^A_t(f)) \xi ({}_{\psi} \tilde M_{\psi''}) =
    \ee^{\ii t (A_\fs-A_\ft)} \pi_{\psi''}(f) \xi ({}_{\psi} \tilde M_{\psi''}) =
    \ee^{\ii t (A_\fs)} \pi_{\psi''}(f) \ee^{-\ii t (A_\fs)}  \xi ({}_{\psi} \tilde M_{\psi''}) . $$
\end{proof}

We now consider the convolution algebra of the 2-category of topspin foams, with both the vertical and the horizontal associative convolution products.

The 2-morphisms in the category of topspin foams are data of a branched cover cobordism $W$, specified by a pair of covering-move equivalent topspin foams $\Psi = (\hat\Sigma,\tilde\rho,\tilde\iota,\tilde\sigma)$ and $\Psi' = (\hat\Sigma',\tilde\rho',\tilde\iota',\tilde\sigma')$, where $\hat\Sigma$ and $\hat\Sigma'$ are two-complexes embedded in $S^3 \times [0,1]$. Since 2-morphisms connect 1-morphisms with same source and target, the spin networks associated to the graphs $\hat\Gamma_0 = \hat\Sigma \cap S^3\times\{0 \}$ and $\hat\Gamma_1 =\hat\Sigma \cap S^3\times\{ 1 \}$ both contain the same marked subgraph $\Gamma$, the domain of the 1-morphisms, and similarly with the range $\Gamma'$ contained in both $\hat\Gamma_0' = \hat\Sigma' \cap S^3\times\{0 \}$ and $\hat\Gamma_1' =\hat\Sigma' \cap S^3\times\{ 1 \}$.

Thus, each branched cover cobordism $W$ which gives a 2-morphism between given 1-morphisms $M$ and $M'$ contains inside the branch loci $\hat\Sigma$ and $\hat\Sigma'$ two subcomplexes $\Sigma$ and $\Sigma'$ which are (possibly nontrivial) cobordisms of $\Gamma$ with itself and of $\Gamma'$ with itself, $\partial \Sigma = \Gamma \cup \bar\Gamma$ and $\partial \Gamma' = \Sigma' \cup \bar \Sigma'$. We can then assign to each 2-morphism $W$ between assigned 1-morphisms $M$ and $M'$ two multiplicity functions that are the characteristic functions of the set of faces of $\Sigma \subset \hat\Sigma$ and of $\Sigma'\subset \hat\Sigma'$. The analog of the quantized area operator in this case becomes
\begin{equation*}
    A_\fs \, f(W) = \hbar \left(\sum_{f \in F(\hat\Sigma)} \chi_{\Sigma}(f) \, (j_f (j_f+1))^{1/2}\right) f(W),
\end{equation*}
where $j_f$ are the spins of the representations $\tilde\rho_f$ assigned to the faces of
$\hat\Sigma$ in the spin foam $\Psi$. The operator $A_\ft$ is defined similarly.

\begin{prop} \label{timeW}
    The time evolution $\sigma^A_t(f) = \ee^{\ii t (A_\fs - A_\ft)} f$ defines a time evolution on the convolution algebra $\cA_{\rm tsf}$ of the 2-category of topspin foams, which is compatible with both the horizontal and with the vertical convolution products.
\end{prop}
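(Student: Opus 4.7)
The plan is to verify that $\sigma^A_t$ respects each of the two associative products in turn, reducing each case to a statement about how the source and target area operators behave under the corresponding composition of 2-morphisms. The key observation, as in \cref{Areatimeev}, is that $A_\fs$ and $A_\ft$ are determined only by the marked source and target subcomplexes $\Sigma \subset \hat\Sigma$ and $\Sigma' \subset \hat\Sigma'$ (together with their spin labels), not by the full branch loci $\hat\Sigma$, $\hat\Sigma'$. Thus I expect both compatibilities to follow from the fact that these marked subcomplexes transform in a controlled way under horizontal and vertical composition.

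For horizontal composition, I would mimic the calculation of \cref{Areatimeev} one dimension up. Given composable 2-morphisms ${}_\Psi W_{\Psi'}$ and ${}_{\Psi'} W'_{\Psi''}$, the fibered product $\tilde W = W \times_{S^3\times[0,1]} W'$ is a branched cover cobordism whose branch loci are $\hat\Sigma \cup q q_1^{-1}(\hat\Sigma_2')$ and $\hat\Sigma'' \cup q'' q_2^{-1}(\hat\Sigma_1')$, but whose marked source and target subcomplexes are still $\Sigma$ and $\Sigma''$ with unchanged spin labels on their faces. Hence $A_\fs(\tilde W) = A_\fs(W)$ and $A_\ft(\tilde W) = A_\ft(W')$, while the intermediate data coming from $\Sigma'$ contribute once to $A_\ft(W)$ and once to $A_\fs(W')$; splitting $\ee^{\ii t (A_\fs(\tilde W) - A_\ft(\tilde W))}$ as $\ee^{\ii t (A_\fs(W) - A_\ft(W))}\cdot \ee^{\ii t (A_\fs(W') - A_\ft(W'))}$ then gives $\sigma_t^A(f_1 \circ f_2) = \sigma_t^A(f_1) \circ \sigma_t^A(f_2)$.

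For vertical composition, the main point is additivity of the area operators. Given $W = W_1 \cup_M W_2$ obtained by gluing two branched cover cobordisms along a common boundary three-manifold $M$, the marked source subcomplex of $W$ is $\Sigma = \Sigma_1 \cup_\Gamma \Sigma_2$, where $\Sigma_1, \Sigma_2$ are cobordisms of the common source graph $\Gamma$ with itself and $\Gamma$ is their one-dimensional intersection. Since the Casimir sum in \cref{qAreaSource} runs over faces of $\Sigma$, and the faces of $\Sigma_1$ and $\Sigma_2$ overlap only along edges of $\Gamma$ (which contribute nothing to a face-sum), one gets $A_\fs(W) = A_\fs(W_1) + A_\fs(W_2)$, and likewise $A_\ft(W) = A_\ft(W_1) + A_\ft(W_2)$. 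Exponentiating yields $\ee^{\ii t(A_\fs - A_\ft)}(W) = \ee^{\ii t(A_\fs - A_\ft)}(W_1)\cdot \ee^{\ii t(A_\fs - A_\ft)}(W_2)$, which gives $\sigma_t^A(f_1 \bullet f_2) = \sigma_t^A(f_1) \bullet \sigma_t^A(f_2)$.

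The main subtlety I expect is the careful bookkeeping of faces at the vertical gluing locus: one must confirm that near the boundary the spin foams are products $\Gamma \times [0,\epsilon)$, as required by \cref{def:spinFoam}, so that no face of $\Sigma$ straddles $M$ and each face of the glued complex belongs to exactly one of $\Sigma_1, \Sigma_2$, with its spin label and its membership in the marked source subcomplex inherited unambiguously from one side. With this product-neighborhood structure in hand, the additivity of $A_\fs, A_\ft$ under vertical composition, and the preservation of source/target areas under horizontal fibered products, together yield the two multiplicativity identities and hence that $\sigma^A : \R \to \Aut(\cA_{\rm tsf})$ is a one-parameter family of automorphisms simultaneously compatible with $\circ$ and $\bullet$.
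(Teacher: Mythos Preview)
Your proposal is correct and follows essentially the same approach as the paper's proof: horizontal compatibility is reduced to the fact that the marked source and target subcomplexes $\Sigma$ and $\Sigma''$ are preserved under the fibered product (with the intermediate $\Sigma'$ cancelling telescopically), exactly as in \cref{Areatimeev}, while vertical compatibility follows from the additivity of the face-weighted Casimir sum under gluing of cobordisms along a common boundary. Your treatment is in fact more explicit than the paper's, which simply cites the additivity property and the observation from \cite{CoveringsCorrespondencesNCG} that any additive invariant yields a vertical time evolution; your remark about the product-neighborhood structure ensuring clean face-disjointness is precisely the point underlying that additivity.
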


\begin{proof}
    The compatibility with the horizontal composition works as in \cref{Areatimeev} above. In fact, given $\Sigma\subset \hat\Sigma$, one has analogously $\Sigma \subset \hat\Sigma \cup q q_1^{-1}(\hat\Sigma_2)$ and $\Sigma''\subset \hat\Sigma'' \cup q'' q_2^{-1}(\hat\Sigma_1)$ as the branch loci of the fibered product $\tilde W$
    \begin{equation*}
        \Sigma\subset \hat\Sigma \cup q q_1^{-1}(\hat\Sigma_2) \subset S^3\times [0,1] \stackrel{Q}{\leftarrow} \tilde W \stackrel{Q'}{\rightarrow} S^3\times [0,1] \supset \hat\Sigma'' \cup q'' q_2^{-1}(\hat\Sigma_1) \supset \Sigma''
    \end{equation*}
    of the branched cover cobordisms
    \begin{gather*}
        \Sigma \subset \hat\Sigma  \subset S^3\times [0,1] \stackrel{q}{\leftarrow} W \stackrel{q_1}{\rightarrow} S^3\times [0,1] \supset \hat\Sigma_1 \supset \Sigma' \\
        \Sigma' \subset \hat\Sigma_2 \subset S^3\times [0,1] \stackrel{q_2}{\leftarrow} W' \stackrel{q'}{\rightarrow} S^3\times [0,1] \supset \hat\Sigma'' \supset \Sigma''.
    \end{gather*}
    The compatibility with the vertical composition comes from the fact that counting faces of $\Sigma$ or $\Sigma'$ with weights given by spins $(j_f(j_f+1))^{1/2}$ is additive under gluing of cobordisms along a common boundary and, as observed in \cite{CoveringsCorrespondencesNCG}, any additive invariant of the cobordisms gives rise to a time evolution compatible with the vertical composition of 2-morphisms.
\end{proof}

\subsection{More general time evolutions from spin foam amplitudes}

We gave in \cref{Areatimeev,timeW} a time evolution on the algebra $\cA_{\rm tsn}$ and on the 2-semigroupoid algebra $\cA_{\rm tsf}$, based on the quantized area operator. In fact, one can generalize this construction and obtain similar time evolutions based on amplitudes in various types of spin foam models.

We consider again the spin foam amplitudes as in \cref{dynGSsec}.
Suppose we are given a spin foam model where the amplitudes $\bA(\Sigma)$ are positive. One then proceeds as above and replaces the time evolution $\sigma_t^A(f) = \exp(\ii t (A_\fs-A_\ft)) f$ with the similar form
\begin{equation}\label{sigmatAmp}
    \sigma_t^{\bA}(f) = \left( \frac{\bA_\fs}{\bA_\ft} \right)^{\ii t} f,
\end{equation}
where, for a 2-morphism ${}_{\Psi \subset \hat\Psi} W_{\hat\Psi'\supset \Psi'} \in \MorTwo_{\cL(G)}$, one sets $\bA_\fs(W) \equiv \bA^0(\Psi)$ and $\bA_\ft(W) \equiv \bA^0(\Psi')$, with $\bA^0$ the normalized amplitude defined in \cref{normAmp}, so that
\begin{equation}\label{timeevAmpl}
    \sigma_t^{\bA}(f) \, \bigl({}_{\Psi \subset \hat\Psi} W_{\hat\Psi'\supset \Psi'}\bigr) = \left( \frac{\bA^0(\Psi)}{\bA^0(\Psi')}\right)^{\ii t}  f \bigl({}_{\Psi \subset \hat\Psi} W_{\hat\Psi'\supset \Psi'}\bigr).
\end{equation}
The quantized area case can be seen as a special case where one takes $\bA(\Sigma)=\exp(A(\Sigma))$, with $\bA_f=\exp( \hbar (j_f(j_f+1))^{1/2} )$.

Then one can show as in \cref{timeW} that one obtains in this way a time evolution on $\cA_{\rm tsf}$, compatible with both the vertical and the horizontal convolution product.

\begin{prop}\label{timeWAmp}
    Consider a spin foam model with positive amplitudes, where the weight factor satisfies \cref{weightmultipl}. Then the transformation \cref{timeevAmpl} defines a time evolution on $\cA_{\rm tsf}$, compatible with both the vertical and the horizontal products.
\end{prop}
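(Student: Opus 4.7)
The plan is to verify that $\sigma_t^{\bA}$ is a one-parameter family of linear automorphisms of $\cA_{\rm tsf}$ compatible with both associative products $\bullet$ and $\circ$. Writing $\alpha(W) := \bA^0(\Psi)/\bA^0(\Psi')$ for a 2-morphism $W$ with source topspin foam $\Psi$ and target topspin foam $\Psi'$, the map $t \mapsto \alpha(W)^{\ii t}$ is a character of $\R$ for each fixed $W$, so $\sigma_{s+t}^{\bA} = \sigma_s^{\bA}\circ\sigma_t^{\bA}$ is immediate. The entire content is therefore to show that $\alpha$ is multiplicative under each of the two compositions of 2-morphisms in $\cL(G)$, since then expanding the convolution sums in the definitions \eqref{prodvert} and \eqref{prodhor} of the two products gives $\sigma_t^{\bA}(f_1 \bullet f_2) = \sigma_t^{\bA}(f_1) \bullet \sigma_t^{\bA}(f_2)$ and similarly for $\circ$.

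For the vertical product, suppose $W = W_1 \bullet W_2 = W_1 \cup_M W_2$ is obtained by gluing two branched-cover cobordisms along a common three-manifold boundary. Then the source topspin foam splits along the common boundary topspin network as $\Psi = \Psi_1 \cup_\psi \Psi_2$, and similarly $\Psi' = \Psi'_1 \cup_{\psi'} \Psi'_2$ for the target. Since the hypotheses of the proposition are precisely those of \cref{multiplAmp}, that lemma applies and gives $\bA^0(\Psi) = \bA^0(\Psi_1)\bA^0(\Psi_2)$, and likewise with primes. Taking the quotient, the intermediate boundary amplitudes cancel and we obtain $\alpha(W_1\bullet W_2) = \alpha(W_1)\alpha(W_2)$, proving vertical compatibility.

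For the horizontal product, $\tilde W = W_1 \circ W_2 = W_1 \times_{S^3\times[0,1]} W_2$ is a fibered product, and I would follow the pattern of \cref{Areatimeev,timeW}. The crucial structural fact is that the marked source subcomplex of $\tilde W$ is exactly the source subcomplex $\Sigma$ of $W_1$ (which remains untouched by the pullback-enlargement $\hat\Sigma \cup q q_1^{-1}(\hat\Sigma_2)$), while the marked target subcomplex is that of $W_2$; moreover the composability condition forces the target 1-morphism of $W_1$ to equal the source 1-morphism of $W_2$, giving a matching of the intermediate topspin network data. Exploiting this, and using the multiplicativity of $\omega$ from \eqref{weightmultipl} to factor the weight contribution from the enlarged branch loci along the fibered product, one gets $\bA^0(\tilde\Psi)$ and $\bA^0(\tilde\Psi')$ each factoring into a product of a piece from $W_1$ and a piece from $W_2$ with matching intermediate amplitudes. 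Taking the ratio then yields $\alpha(\tilde W) = \alpha(W_1)\alpha(W_2)$.

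The main obstacle is the horizontal step: unlike the area case, where $A_\fs$ depends only on the unchanged marked subcomplex $\Sigma$ and the argument is immediate, the amplitude $\bA^0$ is evaluated on the full branch locus, which grows under the fibered product by the pulled-back complexes $q q_1^{-1}(\hat\Sigma_2)$ and $q'' q_2^{-1}(\hat\Sigma_1)$. One must therefore verify carefully that the face, edge, and vertex amplitudes \eqref{spinfoamA} on these pulled-back pieces, together with the multiplicative weight $\omega$, reorganize into a clean product factorization in which no stray contributions survive after the normalization by the boundary spin-network amplitude $\bA(\psi_0)$ in \eqref{normAmp}. This hinges essentially on the multiplicativity hypothesis on $\omega$ together with the local (face/edge/vertex) nature of the remaining amplitude factors, which pull back correctly along the branched covering maps of the fibered product.
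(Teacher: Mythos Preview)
Your overall strategy (check that $\alpha$ is multiplicative under both compositions) and your vertical argument via \cref{multiplAmp} are exactly what the paper does. The issue is in the horizontal step, where you have misread the definition of $\bA_\fs$ and consequently invented an obstacle that is not there.

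By definition (see \eqref{timeevAmpl} and the line preceding it), for a 2-morphism ${}_{\Psi \subset \hat\Psi} W_{\hat\Psi'\supset \Psi'}$ one sets $\bA_\fs(W) = \bA^0(\Psi)$ and $\bA_\ft(W)=\bA^0(\Psi')$, where $\Psi$ and $\Psi'$ are the \emph{marked} source and target sub-topspin-foams, not the full branch loci $\hat\Psi$, $\hat\Psi'$. So $\bA_\fs$ and $\bA_\ft$ depend only on the same data as $A_\fs$, $A_\ft$ in \cref{timeW}, and the horizontal argument is just as immediate here as it was there: for a fibered product $\tilde W = W\circ W'$ one has marked source $\Psi$ (that of $W$) and marked target $\Psi''$ (that of $W'$), with a common intermediate $\Psi'$, so
\[
\alpha(\tilde W)=\frac{\bA^0(\Psi)}{\bA^0(\Psi'')}=\frac{\bA^0(\Psi)}{\bA^0(\Psi')}\cdot\frac{\bA^0(\Psi')}{\bA^0(\Psi'')}=\alpha(W)\alpha(W').
\]
Nothing about the enlarged branch loci $\hat\Sigma\cup q q_1^{-1}(\hat\Sigma_2)$ enters, and in particular the hypothesis \eqref{weightmultipl} on $\omega$ is not used in the horizontal step at all; it is needed only for the vertical step through \cref{multiplAmp}. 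Your final paragraph about pulling back face, edge and vertex amplitudes along the fibered product and reorganizing them via multiplicativity of $\omega$ is therefore unnecessary and should be deleted.
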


\begin{proof}
    The compatibility with the horizontal product can be checked as in \cref{timeW}. Namely, one has
    \begin{align*}
            \sigma_t^{\bA}(f_1\circ f_2)({}_{\Psi} \tilde W_{\Psi''})
        & = \left( \frac{\bA^0(\Psi)}{\bA^0(\Psi'')}\right)^{\ii t}  (f_1\circ f_2)({}_{\Psi} \tilde W_{\Psi''}) \\
        & = \left( \frac{\bA^0(\Psi)}{\bA^0(\Psi'')}\right)^{\ii t} \sum f_1({}_{\Psi} W_{\Psi'}) \, f_2({}_{\Psi'} W'_{\Psi''}) \\
        & = \sum \left( \frac{\bA^0(\Psi)}{\bA^0(\Psi')}\right)^{\ii t}
f_1({}_{\Psi} W_{\Psi'}) \, \left( \frac{\bA^0(\Psi')}{\bA^0(\Psi'')}\right)^{\ii t}
f_2({}_{\Psi'} W'_{\Psi''}),
    \end{align*}
    where the sum is over all decompositions of the 2-morphism ${}_{\Psi} \tilde W_{\Psi''}$ as a horizontal composition (fibered product) of 2-morphisms ${}_{\Psi} W_{\Psi'}$ and ${}_{\Psi'} W'_{\Psi''}$, and we use the fact that the operators $\bA_\fs$ and $\bA_\ft$ only depend upon the source and target $\Psi$ and $\Psi''$ and not on the larger branch loci $\hat\Psi \supset \Psi$ and $\hat\Psi''\supset \Psi''$. The compatibility with the vertical composition follows directly from \cref{multiplAmp}.
\end{proof}

Once again, if we want to use low temperature KMS states as a way to select a class
of geometries as equilibrium states of this dynamics, we run into the problem of the
multiplicities in the spectrum which we have already discussed in the simpler setting of
\cref{KMSGSsec}. The situation here is more complicated, as there are now
three different but related reasons for the occurrence of infinite multiplicities in the
spectrum of the infinitesimal generator of the time evolution.
\begin{enumerate}
 \item The normalized amplitude $\bA^0$ is computed only on a fixed
 subcomplex $\Sigma \subset \hat\Sigma$ of the branch locus, the source of the morphism.
 One needs to resolve the ambiguity due to the possible choices of $\hat\Sigma$ with fixed $\Sigma$.
  \label{item:branchLocusMult}
    \item The amplitudes of \cref{normAmp} do not depend on the additional topological data $\tilde\sigma$ on the source spin foam $\Psi$. This generates an additional degeneracies, due to the fact that, for instance, the order $n$ of the covering can be assigned arbitrarily.    \label{item:topologicalMult}
    \item The combinatorial part of the amplitudes of \cref{amplcomb} depends only on the combinatorial structure of $\Sigma$, not on its different topological embeddings in $S^3 \times [0,1]$. Only the weight factor $\omega(\Sigma)$ may distinguish topologically inequivalent embeddings.         \label{item:embeddingMult}
\end{enumerate}

Trying to resolve this problem leads once again to a question about the existence of
a numerical invariant of embedded two-complexes in $S^3\times [0,1]$ with some
growth conditions. We formulate here the necessary requirements that such a hypothetical
invariant would have to fulfill.

\begin{ques}\label{invariantques}
Is it possible to construct an invariant $\chi(\Sigma,W)$ of
embedded two-complexes $\Sigma \subset S^3\times [0,1]$, which depends on the
data of a branched cover $q:W\to S^3\times [0,1]$, with the following properties?
\begin{enumerate}
\item The values of $\chi(\Gamma,W)$ form discrete set of positive real numbers $\{ \alpha(n) \}_{n\in \N}\subset \R^*_+$, which grows at least linearly, $\alpha(n) \geq c_1 n +c_0 $, for sufficiently large $n$
and for some $c_i >0$.
\item The number of embedded two-complexes $\Sigma \subset S^3\times [0,1]$ such that
$\chi(\Sigma,W)=\alpha(n)$, for fixed branched covering data $W$, grows like
\begin{equation}
\# \{ \Sigma \subset S^3\times[0,1]\,\,:\,\,  \chi(\Gamma,W)=\alpha(n) \} \leq O(\ee^{\kappa n }),
\end{equation}
for some $\kappa >0$, independent of $W$.
\item For $\tilde W$ the fibered product of two branched coverings $W$ and $W'$,
\begin{equation}\label{addchitildeW}
\chi(\Sigma \cup q q_1^{-1}(\Sigma_2),\tilde W) =\chi(\Sigma,W) + \chi(\Sigma_2,W'),
\end{equation}
where
$$ \Sigma \subset S^3 \times [0,1] \stackrel{q}{\leftarrow} W \stackrel{q_1}{\rightarrow} S^3 \times [0,1] \supset \Sigma_1 $$
$$ \Sigma_2 \subset S^3 \times [0,1] \stackrel{q_2}{\leftarrow} W' \stackrel{q'}{\rightarrow} S^3 \times [0,1] \supset \Sigma' $$
$$ \Sigma \cup q q_1^{-1}(\Sigma_2) \subset S^3 \times [0,1] \stackrel{Q}{\leftarrow} \tilde W \stackrel{Q'}{\rightarrow} S^3 \times [0,1] \supset \Sigma' \cup q'q_2^{-1}(\Sigma_1) $$
are the branched covering maps.
\end{enumerate}
\end{ques}

Notice that the property \eqref{addchitildeW} by itself can be interpreted as a kind of
weighted Euler characteristic. For instance, suppose given a PL
branched cover $q: W \to S^3\times [0,1]$ and an embedded two-complex
$\Sigma \subset S^3\times [0,1]$. Assume that, if $\Sigma$ intersect the branch locus,
it does so on a subcomplex. Set
\begin{equation}\label{weightchi}
\tilde\chi(\Sigma,W):= \sum_{f\in \Sigma^{(2)}} n_f  - \sum_{e\in \Sigma^{(1)}} n_e  + \sum_{v\in \Sigma^{(0)}} n_v ,
\end{equation}
where the $n_f$, $n_e$, $n_v$ are the order of the covering map over the faces, edges, and
vertices of $\Sigma$. This computes an Euler characteristic weighted by the multiplicities of the
covering. By the inclusion-exclusion property of the Euler characteristic and its multiplicativity
under covering maps, one sees that $\tilde\chi(\Sigma,W)=\chi(q^{-1}(\Sigma))$ is in fact the
Euler characteristic of the preimage of $\Sigma$ under the covering map. If $\Sigma$ does not
intersect the branched locus, then $\tilde\chi(\Sigma,W)=n \chi(\Sigma)$, where $n$ is the order
of the covering $q: W \to S^3\times [0,1]$. When taking the fibered product $\tilde W$ of $W$
and $W'$, the multiplicities of the covering map $Q: \tilde W\to S^3 \times [0,1]$ are the
product of the multiplicities of the map $q$ and of the first projection $W\times W'\to W$
restricted to $\tilde W\subset W \times W'$. The latter is the same as the multiplicity $n'(x)$ of
the map $q_2: W' \to S^3\times [0,1]$, on each point of the fiber $q_1^{-1}(x)$. Thus, we have
$$ \tilde\chi(\Sigma,\tilde W)= \sum_{f\in \Sigma^{(2)}} n'_{q_1q^{-1}(f)} n_f  - \sum_{e\in \Sigma^{(1)}} n'_{q_1q^{-1}(e)} n_e  + \sum_{v\in \Sigma^{(0)}} n'_{q_1q^{-1}(v)} n_v . $$
Similarly, we have
$$ \tilde\chi(q q_1^{-1}(\Sigma_2),\tilde W)=\sum_{f\in \Sigma_2^{(2)}} n_{qq_1^{-1}(f)} n'_f  - \sum_{e\in \Sigma_2^{(1)}} n_{qq_1^{-1}(e)} n'_e  + \sum_{v\in \Sigma_2^{(0)}} n_{qq_1^{-1}(v)} n'_v . $$
In the generic case where $\Sigma \cap q q_1^{-1}(\Sigma_2) =\emptyset$, we have
$n_{qq_1^{-1}(f)}=n_{qq_1^{-1}(e)} =n_{qq_1^{-1}(v)}=n$ and
$n'_{q_1q^{-1}(f)} =n'_{q_1q^{-1}(e)}=n'_{q_1q^{-1}(v)} =n'$, so that we get
$$  \tilde\chi(\Sigma\cup q q_1^{-1}(\Sigma_2),\tilde W)
= n' \tilde\chi(\Sigma,W)+ n \tilde\chi(\Sigma_2,W'). $$
One then sets $\chi(\Sigma,W)=\tilde\chi(\Sigma,W)/n$ and one obtains
$$ \chi(\Sigma\cup q q_1^{-1}(\Sigma_2),\tilde W)=\tilde\chi(\Sigma\cup q q_1^{-1}(\Sigma_2),\tilde W)/(nn')=\tilde\chi(\Sigma,W)/n+\tilde\chi(\Sigma_2,W')/n' . $$
This invariant does not suffice to satisfy the properties listed in \cref{invariantques},
but it helps to illustrate the meaning of condition \eqref{addchitildeW}.

\subsection{Multiplicities in the spectrum and type II geometry}

The problem of infinite multiplicities in the spectrum, that we encountered in our construction of time evolutions on the convolution algebras of topspin foams and topspin networks, may in fact be related to the occurrence of type II spectral triples associated to spin networks in the work \cite{AGNP3,AGNP4}. Type II spectral triples also typically occur in the presence of infinite multiplicities in the spectrum, and passing to a construction where regular traces are replaced by von Neumann traces provides a way to resolve these multiplicities. A similar occurrence also arises in the context of $\Q$-lattices in \cite{FromPhysicsToNTViaNCG1}, where one considers the determinant part of the quantum statistical mechanical system of 2-dimensional $\Q$-lattices. In that case, again, the partition function of the system is computed with respect to a von Neumann trace instead of the ordinary trace. A similar occurrence of infinite multiplicities and partition function computed with respect to a von Neumann trace also occurs in the quantum statistical mechanical systems considered in \cite{CyclotomyAndEndomotives}. Thus, a different way to approach the problem of infinite multiplicities and of obtaining a well defined and finite partition function for the system at low temperature may be through passing to a type II setting, instead of attempting to construct invariants of embedded two-complexes with the prescribed growth conditions. We do not pursue this line of investigation in the present paper.

\section{Spin foams with matter: almost-commutative geometries} \label{sec:spinFoamsWithMatter}

One of the most appealing features of general relativity is its \textit{geometric} nature. That is, in general relativity the gravitational force manifests as the evolution of the metric structure of spacetime itself. In contrast, quantum field theories are decidedly non-geometric, with their dynamical variables being located in an abstract Hilbert space and interacting according to various representation-theoretic data that specify ``particles."

Traditionally, one places quantum field theories on a background spacetime $M$ by postulating the  existence of a bundle structure over $M$. In this way, one creates an a priori distinction between the base manifold and fibers over it. The former is a geometric object, invariant under diffeomorphisms, while the latter are quantum field-theoretic, and invariant under gauge transformations. While this is more or less the most straightforward way of combining these two theories into one with the desired symmetry group $C^\infty(M, G) \rtimes \Diff(M)$, in the end what we have is a rather unmotivated and inelegant fusion.

The idea that noncommutative geometry could provide a purely geometric interpretation for quantum field theories goes back to the beginnings of the former subject. Since noncommutative geometry gives us access to more general spaces beyond our ordinary ``commutative manifolds," the hope is that we could find some noncommutative space $X$ such that our quantum field theory is given by the evolution of the geometry on this space, just as general relativity is a theory of the evolution of the geometry of a commutative manifold $M$.

Quickly summarizing the current state of the art on this approach, we will simply say that this hope was indeed born out. In \cite{GravityAndSMWithNM}, Chamseddine, Connes, and Marcolli were able to reproduce the Standard Model minimally coupled to gravity and with neutrino mixing in a fully geometric manner---that is, as a theory of \textit{pure gravity} on a suitable noncommutative space. This space is a product of an ordinary four-dimensional spacetime manifold with a small noncommutative space, which is metrically zero-dimensional but K-theoretically six-dimensional, and which accounts for the matter content of the particle physics model. In particular, the analogue of the diffeomorphism group for the noncommutative space in question reproduces the desired symmetry structure $C^\infty(M, G) \rtimes \Diff(M)$, with $G = \mathrm{SU}(3) \times \mathrm{SU}(2) \times \mathrm{U}(1)$ as desired.

The elegant success of noncommutative-geometry techniques in coupling quantum field-theoretic matter to \textit{classical} spacetime geometry then suggests a possible approach for doing the same with the \textit{quantum} spacetime geometry of loop quantum gravity. In what follows, we give a brief flavor of how gauge theories on a classical manifold are constructed in the framework of noncommutative geometry, before surveying existing work on incorporating the spin networks and foams of loop quantum gravity into this same framework. Once we have an idea of how quantum spacetime and matter are separately accounted for by noncommutative geometry, we suggest an approach to combining them, thus adding matter to loop quantum gravity and achieving a unified theory of quantum matter on quantum spacetime.

\subsection{The noncommutative geometry approach to the Standard Model}

The basic primitive of noncommutative differential geometry is known as a \textit{spectral triple}:
\begin{defn}
    A spectral triple is a tuple $(\cA, \cH, D)$ consisting of an involutive unital algebra $\cA$ represented as operators on a Hilbert space $\cH$, along with a self-adjoint operator $D$ on $\cH$ with compact resolvent such that the commutators $[D, a]$ are bounded for all $a \in \cA$.
\end{defn}
Intuitively, $\cA$ gives the algebra of observables over our space, while $\cH$ describes the (fermionic) matter fields on it. The so-called \textit{Dirac operator} $D$ contains both metric information about the space and interaction information for the matter fields on it. It also determines all the boson fields via inner fluctuations, as illustrated below.

The spectral triple formalism is very powerful. We first note that it entirely reproduces the ``commutative case'' of a spin manifold $M$ as the corresponding \textit{Dirac spectral triple}. This is defined by taking as $\cA$ the algebra $C^\infty(M)$ of scalar fields, as $\cH$ the Hilbert space $L^2(M, S)$ of square-integrable spinor fields, and as $D$ the Dirac operator $\slashed{\partial}_M \equiv \ii \gamma^\mu \nabla_\mu^S$ derived from the spin connection over $M$.

We then turn to the case of an \textit{almost-commutative geometry}, described as a product $M \times F$ of a commutative spin manifold and a finite noncommutative space. Here $F$ is finite in the sense that it is given by a spectral triple $(\cA_F,\cH_F,D_F)$, where the algebra $\cA_F$ is finite dimensional. The product geometry is then described by the cup product of the two spectral triples
\begin{equation*}
	(\cA,\cH,D) = (C^\infty(M),L^2(M, S),\slashed{\partial}_M) \cup (\cA_F,\cH_F,D_F),
\end{equation*}
where
\begin{align*}
    \cA & = C^\infty(M)\otimes \cA_F = C^\infty(M, \cA_F), \\
    \cH & = L^2(M, S) \otimes \cH_F = L^2(M, S\otimes \cH_F), \\
    D   & = \slashed\partial_M \otimes 1+ \gamma_5 \otimes D_F.
\end{align*}

The simplest example of an almost-commutative spectral triple is given by
\begin{equation*}
    (\cA_F, \cH_F, D_F) = (M_N(\C), M_N(\C), 0),
\end{equation*}
i.e.\ with $\cA_F$ the algebra of $N \times N$ matrices acting on the Hilbert space of these same matrices, endowed with the Hilbert--Schmidt norm, and with trivial Dirac operator. (To avoid overly complicating the discussion, we omit mention of certain important additional data related to these spectral triples, namely their \textit{real structure} and \textit{grading}; for more details on these, and on the overall construction of these spaces, see \S 1.10 of \cite{ClassTextbook}.)

Remarkably, if one takes the tensor product of these spectral triples, obtaining the almost-commutative geometry given by
\begin{align*}
    \cA & = C^\infty(M, M_N(\C)), \\
    \cH & = L^2(M, S) \otimes M_N(\C), \\
    D   & = \slashed\partial_M \otimes 1,
\end{align*}
one can show that the natural field theory arising from the resulting spectral triple is in fact gravity coupled to a $\mathrm{SU}(N)$ Yang--Mills field over $M$.

This, of course, raises the question of how one determines the ``natural field theory" arising from a given spectral triple. Such a field theory is realized in two parts.

First, one obtains the gauge bosons of the theory from \textit{inner fluctuations of the Dirac operator}. These are generated by the natural notion of isomorphism for noncommutative spaces, which is Morita equivalence of the algebras. When the noncommutative spaces are endowed with the structure of spectral triples, given an algebra $\cA'$ Morita-equivalent to $\cA$, there is a natural choice of Hilbert space $\cH'$, derived from $\cH$ by tensoring with the bimodule $\cE$ that realizes the Morita equivalence, on which $\cA'$ therefore acts; however, in order to obtain a new Dirac operator $D'$, one needs the additional datum of a connection on $\cE$. As a particular case of this procedure, one has the inner fluctuation, which correspond to the trivial Morita equivalence that leaves the algebra and the Hilbert space unchanged, but modifies the Dirac operator $D' = D + A$ by adding a self-adjoint perturbation of the form $A=\sum_i a_i [D,b_i]$, with elements $a_i,b_i\in \cA$. These are called the \textit{inner fluctuations} of the Dirac operator $D$, and are related to the inner automorphisms of $\cA$ that recover the gauge symmetries.  In the example of the product geometry with $\cA_F=M_N(\C)$, one finds that this class is given by operators of the form $A = A_\mu \gamma^\mu$, for $A_\mu \in C^\infty(M, M_N(\C))$ and $\gamma^\mu$ the usual Dirac matrices. We can thus see that, in the same way that gauge symmetries of a Yang--Mills system give rise to bosonic fields, the ``symmetry" of Morita equivalence gives rise to a field $A_\mu$ that we can indeed identify as a gauge boson. In the Standard Model case, one recovers in this way both the gauge boson and the Higgs.

Secondly, one introduces the \textit{Spectral Action Principle}, which is essential a strengthening of the requirement of diffeomorphism invariance, suitably generalized to noncommutative spaces \cite{TheSAP}. It requires that any physical action over our noncommutative space only depend on the spectrum of our Dirac operator $D$. Letting $D_A = D + A$ be a fluctuated version of the Dirac operator, the most general possibility is then
\begin{equation}
    S_D[A, \psi] = \Tr(f(D_A/\Lambda)) + \langle \psi, D_A \psi \rangle,
\end{equation}
with some possible variants on the fermionic term $\langle \psi, D_A \psi \rangle$ involving the real structure $J$ in cases where the metric and K-theoretic dimensions of the noncommutative space $F$ do not agree. Here $\Lambda$ is an energy scale that we eventually take to infinity, and $f$ is a smooth cutoff function whose exact form does not matter, but whose moments determine certain parameters of the model.

The remarkable result is then that, when applied to the example of the product geometry with $\cA_F=M_N(\C)$, the first (``bosonic'') term reproduces both the Einstein--Hilbert action (along with additional gravitational terms), plus the gauge field action term for $A_\mu$ in terms of its curvature $F_{\mu \nu}$:
\begin{equation} \label{eq:bosonicU1Action}
    \Tr(f(D_A/\Lambda)) = \int \Bigl(\tfrac{1}{16 \pi G} R + a_0 C_{\mu \nu \rho \sigma} C^{\mu \nu \rho \sigma} + \tfrac{1}{4 e^2} \Tr(F_{\mu \nu} F^{\mu \nu})\Bigr) \sqrt{g} \, \mathrm{d}^4 x + \mathrm{O}(\Lambda^{-2}).
\end{equation}
And as one would expect, the fermionic term gives the rest of the QED Lagrangian:
\begin{equation}
    \langle \psi, D_A \psi \rangle = \mathrm{i} \bar\psi \gamma^\mu (\partial_\mu - \mathrm{i} A_\mu) \psi - m \bar\psi\psi.
\end{equation}

\medskip

From here, we are only a short conceptual step away from the Noncommutative Standard Model. Indeed, all we need to do is modify the finite part of the product spectral triple to something a bit more complicated. We start by  considering the left-right symmetric algebra
\begin{equation*}
    \cA_{LR} = \C \oplus \H_L \oplus \H_R \oplus M_3(\C),
\end{equation*}
where $\H_L$ and $\H_R$ are two copies of the real algebra of quaternions. The natural Hilbert space to represent this algebra on is the sum of all inequivalent irreducible odd spin representations of $\cA_{LR}$, which we denote by $\cM$. The Hilbert space of our model is then simply three copies---one for each particle generation---of this natural Hilbert space:
\begin{equation*}
    \cH_F = \oplus^3 \cM.
\end{equation*}
As detailed further in Chapter 13 of \cite{ClassTextbook}, all fermions of the Standard Model are found as basis elements of this Hilbert space. The gauge bosons and the Higgs are obtained from the inner fluctuations of the Dirac operator, with the gauge bosons coming, as in the previous example, from fluctuations along the direction of the Dirac operator $\slashed\partial_M$ on the four-dimensional spacetime manifold, and the Higgs coming from the fluctuations in the direction of the Dirac operator $D_F$ on the finite space $F$. The Dirac operator $D_F$ that we choose to act on this space is essentially a large matrix giving all of the coupling constants and masses of our model. The demand that $D_F$ have off-diagonal terms, i.e.\ that there is nontrivial interaction between particles and antiparticles, together with the ``order one condition" for Dirac operators of spectral triples, breaks the left--right symmetry, by selecting as our spectral triple's algebra a subalgebra $\cA_F = \C \oplus \H \oplus M_3(\C)$, where the first two terms embed diagonally in $\cA_{LR}$.

Although the preceding paragraph was somewhat of a haphazard introduction to the Noncommutative Standard Model, the upshot is that when we apply the Spectral Action Principle to the product of the Dirac spectral triple with our finite $(\cA_F, \cH_F, D_F)$, we indeed obtain the complete Standard Model Lagrangian, for an extension of the Minimal Standard Model that includes right handed neutrinos with Majorana mass terms. The bosonic term $\Tr(f(D_A/\Lambda))$ gives us, analogously to the simpler case above, the bosonic and gravitational parts of the action---including the coupling of spin-one bosons to the Higgs as well as the Higgs term itself. And the fermionic term gives us the coupling of both spin-one and Higgs bosons to the fermion sector and all the terms in the Lagrangian involving only the fermions.

Thus, one can interpret the result in the following way: the spectral action functional can be thought of as an action functional for (modified) gravity on noncommutative spaces, which in the commutative case provides a combination of the Einstein--Hilbert action with cosmological term and conformal gravity, through the presence of a Weyl curvature term. When computed on an almost-commutative geometry, this gravity action functional delivers additional terms that provide the other bosonic fields. In particular, the manner in which the Higgs is incorporated gives it a natural interpretation as being part of the gravitational field on our noncommutative space.

In this way, our original hope of realizing particle physics as an entirely geometric phenomenon is borne out: the Standard Model Lagrangian has been reproduced as a natural spectral action over the specified noncommutative space.

\subsection{Spectral triples and loop quantum gravity}

The Noncommutative Standard Model, despite its success, still produces an essentially classical conception of gravity, as seen by the Einstein--Hilbert action embedded in \cref{eq:bosonicU1Action}. Indeed, the authors of \cite{TheSAP} comment on this directly in the context of their discussion of the mass scale $\Lambda$, noting that they do not worry about the presence of a tachyon pole near the Planck mass since, in their view, ``at the Planck energy the manifold structure of spacetime will break down and one must have a completely finite theory.''

Such a view is precisely that embodied by theories of quantum gravity, including of course loop quantum gravity---a setting in which spin networks and spin foams find their home. The hope would be to incorporate such existing work toward quantizing gravity into the spectral triple formalism by replacing the ``commutative part'' of our theory's spectral triple with something representing discretized spacetime. Seen from another point of view, if we can find a way of phrasing loop quantum gravity in the language of noncommutative geometry, then the spectral triple formalism provides a promising approach toward naturally integrating gravity and matter into one unified theory.

This idea of expressing LQG in terms of noncommutative geometry has been investigated recently in a series of papers by Aastrup, Grimstrup, Nest, and Paschke \cite{AGNP1,AGNP2,AGNP3,AGNP4,AGNP5,AGNP6,AGNP7,AGNP8,AGNP9,AGNP10}. Their starting point is to construct a spectral triple from the algebra of holonomy loops. The interaction between this algebra and the spectral triple's Dirac operator reproduces the Poisson bracket of both Yang--Mills theory and of the Ashtekar formulation of general relativity upon which LQG is based. Later papers illuminate the situation by making contact with the semiclassical limit, where they retrieve the Dirac Hamiltonian for fermions coupled to gravity in \cite{AGNP10}. Their approach also resolves a long-standing obstruction to na\"ive transference of the LQG Hilbert space into a spectral triple: namely, that the Hilbert space in question is nonseparable, which creates problems for the spectral triple construction.

\subsection{Spectral correspondences}

In section 12 of \cite{CoveringsCorrespondencesNCG}, the 2-category of low dimensional topologies, which we used here as the basis for the 2-semigroupoid algebra of topspin foams, was enriched by passing to almost-commutative geometries, where the three-manifolds and four-manifolds are described as commutative spectral triples and one takes then a product with a finite noncommutative space, as in the particle physics models described above. The branched cover and branched cover cobordism structure can still be encoded in this almost-commutative framework and so are the composition products by fibered product along the branched covering maps and by  composition of cobordisms, using a suitable notion of spectral triple with boundary proposed by Chamseddine and Connes. This suggests that, if we think in a similar way of spin foams with matter as products of a spectral triple associated to the spin foam and of a finite noncommutative space, then the convolution algebras of spin networks and spin foams considered here will extend to these spectral correspondences with the additional finite noncommutative spectral triples.

One important technical aspect involved in following a similar approach is the fact that the constructions of spectral triples associated to spin networks yield type II spectral geometries. Type II geometries are typically arising in connection to the problem of infinite multiplicities in the spectrum of the Dirac operator, of which we saw an aspect in this paper as well. Thus, one needs to adapt also the spectral action formalism to extend to this case. Roughly, instead of considering a functional of the form $\Tr(f(D_A/\Lambda))$, one needs to replace the ordinary trace by a type II von Neumann trace $\tau$. We do not enter into these aspects in the present paper and reserve them for future investigations.

\subsection{Future work}

The work described above on spectral triples over the algebra of holonomy loops suggests a few further lines of inquiry with regard to spectral triples for quantum gravity.

One of the more pressing questions is the relation between the kinematical Hilbert space of LQG, and that constructed in the above theory. The fact that the spectral triple encodes the Poisson bracket of general relativity implies that it carries information on the kinematical sector of quantum gravity. However, due to differences in construction, it is clear that despite similar starting points with regards to the importance of holonomy variables, in the end the Hilbert space produced is not the same as that of LQG.

Since spin networks arise naturally as the basis for the LQG kinematical Hilbert space, elucidating this relationship might allow us to incorporate the additional background-independence provided by topspin networks into the existing spectral triple theory. It would also help clarify the relation between their construction and the covariant formulation of LQG in terms of spin foams, or the obvious extension to the topspin foams discussed in this paper. This could also provide a hint on how to extend the spectral triple theory to evolving spacetimes; in its existing form, it only contains states that live on static four-manifolds.

Another natural line of inquiry would be to determine the spectral action for this quantum gravity spectral triple. The existing work manages to extract the kinematics via consideration of other aspects of the Dirac operator; namely, its interaction with the algebra to produce the appropriate Poisson bracket structure, or its semiclassical expectation value in order to retrieve the Dirac Hamiltonian. But computation of the spectral action would give a firmer grasp of what type of physical theory arises from this spectral triple. It would also be the first step toward integrating it with a Standard Model--like matter sector in the manner envisioned above, by using it as a replacement for the Dirac spectral triple of classical gravity.

\bigskip

\section*{Acknowledgments}

This work was inspired by and partially carried out during the workshop ``Noncommutative Geometry and Loop Quantum Gravity" at the Mathematisches Forschungsinstitut Oberwolfach, which the first two authors thank for the hospitality and support. The second author is partially supported by NSF grants DMS-0651925 and DMS-0901221. The first author was partially supported by a Richter Memorial Fund Summer Undergraduate Research Fellowship from Caltech.

\bibliographystyle{utphys}
\bibliography{QuantumGravity,NoncommutativeGeometry,Math}

\end{document}